\documentclass{article}

\usepackage{amsmath}
\usepackage{amsthm}
\usepackage{thmtools}
\usepackage{amssymb}
\usepackage{bbm}
\usepackage{mathtools}
\usepackage{tikz}
\usepackage{MnSymbol}
\usepackage{lipsum,hyperref}
\usepackage{graphicx}
\usepackage{amsfonts}
\usepackage{xfrac}
\usepackage{authblk}
\usepackage{color,soul}
\usepackage{framed}
\usepackage{setspace}
\usepackage{dutchcal}
\usepackage{rotating}
\usepackage{pdflscape}
\usepackage{setspace}
\usepackage{titlesec}
\usepackage{caption}
\usepackage{xstring}
\usepackage[framemethod=TikZ]{mdframed}
\usepackage[a4paper, total={6in, 8in}]{geometry}

\newcommand{\memomode}{OFF} %

\graphicspath{{./drawings/}}

\hypersetup{colorlinks=true}	



\newcommand{\eqdef}{\vcentcolon=}

\newcommand{\set}[2]{\left\{ #1 \mathrel{}:\mathrel{} #2\right\}}

\newtheorem*{corollary*}{Corollary}
\newtheorem*{definition*}{Definition}


\newcommand{\genstirlingII}[3]{%
	\genfrac{\{}{\}}{0pt}{#1}{#2}{#3}%
}

\newcommand{\stirlingII}[2]{\genstirlingII{}{#1}{#2}}


\newenvironment{customthm}[1]
{\innercustomthm}
{\endinnercustomthm}


\newcommand\restr[2]{{\left.\kern-\nulldelimiterspace#1\vphantom{\big|}\right|_{#2}}}

\definecolor{nicegrey}{HTML}{e6e6ea}
\renewenvironment{leftbar}{%
	\MakeFramed {\advance\hsize-\width \FrameRestore}}%
{\endMakeFramed}

\newenvironment{thinleftbar}{%
	\MakeFramed {\advance\hsize-\width \FrameRestore}}%
{\endMakeFramed}


\definecolor{niceorange}{HTML}{fed766}
\newenvironment{theo}[2][]{%
	\refstepcounter{theo}
	\ifstrempty{#1}%
	{\mdfsetup{%
			frametitle={%
				\tikz[baseline=(current bounding box.east),outer sep=0pt]
				\node[anchor=east,rectangle,fill=blue!20]
				{\strut Theorem~\thetheo};}
		}%
	}{\mdfsetup{%
			frametitle={%
				\tikz[baseline=(current bounding box.east),outer sep=0pt, inner ysep=1pt]
				\node[anchor=east,rectangle, draw=nicegrey, line width=0.7pt, fill=white]
				{\strut Theorem~\thetheo:~#1};}%
		}%
	}%
	\mdfsetup{%
		innertopmargin=1pt,linecolor=nicegrey,%
		linewidth=0.7pt,topline=true,%
		frametitleaboveskip=\dimexpr-\ht\strutbox\relax%
	}
	\begin{mdframed}[]\relax}{%
\end{mdframed}}

\definecolor{niceblue}{HTML}{0e9aa7}
\newenvironment{openq}[2][]{%
	\refstepcounter{openq}
	\ifstrempty{#1}%
	{\mdfsetup{%
			frametitle={%
				\tikz[baseline=(current bounding box.east),outer sep=0pt]
				\node[anchor=east,rectangle,fill=niceblue!20]
				{\strut Open Question~\theopenq};}
		}%
	}{\mdfsetup{%
			frametitle={%
				\tikz[baseline=(current bounding box.east),outer sep=0pt, inner ysep=1pt]
				\node[anchor=east,rectangle, fill=niceblue!15]
				{\strut Open Question~\theopenq:~#1};}%
		}%
	}%
	\mdfsetup{%
		innertopmargin=1pt,linecolor=niceblue!15,%
		linewidth=1pt,topline=true,%
		frametitleaboveskip=\dimexpr-\ht\strutbox\relax%
	}
	\begin{mdframed}[]\relax}{%
\end{mdframed}}

\newcommand{\keepvalues}{%
	\edef\restorevalues{%
		\parindent=\the\parindent
		\parskip=\the\parskip
	}%
}

\newtheorem{theorem}{Theorem}[section]
\newtheorem{corollary}{Corollary}[theorem]
\newtheorem{lemma}[theorem]{Lemma}
\newtheorem{proposition}[theorem]{Proposition}
\newtheorem{notation}[theorem]{Notation}
\newtheorem{definition}[theorem]{Definition}
\newtheorem{fact}[theorem]{Fact}

\newcounter{theo}[section]\setcounter{theo}{0}
\renewcommand{\thetheo}{\arabic{theo}}

\newcounter{openq}[section]\setcounter{openq}{0}
\renewcommand{\theopenq}{\arabic{openq}}

\newcommand{\defsection}{\section}
\newcommand{\defsubsection}{\subsection}
\newcommand{\defsubsubsection}{\subsubsection}

\newcommand{\subsectionname}{Subsection}
\newcommand{\subsubsectionname}{Subsubsection}
\newcommand{\sectionnamelower}{section}
\newcommand{\subsectionnamelower}{subsection}

\title{Bipartite Perfect Matching as a Real Polynomial}

\author{Gal Beniamini
			\thanks{School of Computer Science and Engineering, The Hebrew University of Jerusalem}
			\footnote{This project has received funding from the European Research Council (ERC) under the European Union's Horizon 2020 research and innovation programme (grant agreement No 740282)}
		\quad
		Noam Nisan $^{*}$$^\dagger$}

\begin{document}

\maketitle

\begin{abstract}
	We obtain a description of the Bipartite Perfect Matching decision problem as a multilinear polynomial over the Reals. We show that it has full degree and $(1-o_n(1))\cdot 2^{n^2}$ monomials with non-zero coefficients.  
	In contrast, we show that in the dual representation (switching the roles of 0 and 1) the number of monomials is \textit{only} exponential in $\Theta(n \log n)$. Our proof relies heavily on the fact that the lattice of graphs which are ``matching-covered'' is Eulerian.
\end{abstract}
\defsection{Introduction}

Every Boolean function $f:\{0,1\}^n \rightarrow \{0,1\}$ can be represented in a unique way as a Real multilinear polynomial. This representation and related ones (e.g. using the $\{1,-1\}$ basis rather than $\{0,1\}$ -- the ``Fourier transform'' over the hypercube, or approximation variants) have many applications for various complexity and algorithmic purposes. See, e.g., \cite{o2014analysis} for a recent textbook.

In this paper we derive the representation of the bipartite-perfect-matching decision problem as a Real polynomial.

\begin{thinleftbar}
	\begin{definition*}
		The Boolean function $BPM_n(x_{1,1}, \dots, x_{n,n})$ is defined to be $1$ if and only if the bipartite graph whose edges are 
		$\set{(i,j)}{x_{i,j}=1}$ has a perfect matching, and $0$ otherwise.
	\end{definition*}
\end{thinleftbar}

Our first result is determining the representation of this function as a Real multilinear polynomial. By way of example, $BPM_2(\bar{x})=x_{1,1}x_{2,2}+x_{1,2}x_{2,1}-x_{1,1}x_{1,2}x_{2,1}x_{2,2}$.
Somewhat surprisingly, finding the closed form expression for any $n$ appears nontrivial. In fact,
we do not know of an easier proof than our own involved proof, even showing that for any 
$n$ the degree of this polynomial is $n^2$. \footnote{For the special case where $n$ is a prime power, 
the full degree of the polynomial
follows from an extension of the evasiveness result of \cite{rivest1975generalization}, due to \cite{NSS2008}.  However, for $n$ that is not a prime power, it is not true that every
monotone bipartite graph property has a full degree.}  

To present our first result, let us introduce some notation. We will call a graph \emph{matching-covered} if its edges can be represented as a union of perfect matchings. As an example, for $n=2$ the graph whose edges are $\{(1,1),(1,2),(2,2)\}$ is
{\em not} matching-covered since any perfect matching that contains the edge $(1,2)$ must also contain the edge $(2,1)$, which is not in the graph. The connected components of
matching-covered graphs are called ``elementary graphs'' and were studied at length by \cite{plummer1986matching}.
Finally for a graph $G$, we denote its cyclomatic number by $\chi(G)=|E(G)|-|V(G)|+|C(G)|$ where $|C(G)|$ is the number of connected components of $G$. The following Theorem characterizes the multilinear polynomial of $BPM_n$.

\begin{theo}[The Bipartite Perfect Matching Polynomial]{thm:bpm_poly}
	\label{bpm_poly}
	\begin{equation*}BPM_{n}(x_{1,1}, \dots, x_{n,n}) = \sum_{G \subseteq K_{n,n}} a_G \prod_{(i,j) \in E(G)} x_{i,j}\text{ ,  where:}\end{equation*}
	\[a_G = \begin{cases}
				0 &\mbox{if } G \text{ is not matching-covered} \\
				(-1)^{\chi(G)} & \mbox{if } G \text{ is matching-covered}
			\end{cases}\] 
\end{theo}
	
Our proof proceeds by studying the structure of the lattice of 
matching-covered graphs and its M\"obius function and
the key step requires using the topological structure of this lattice. Specifically, \cite{billera1994combinatorics} showed that this lattice is isomorphic to the face lattice of the Birkhoff Polytope, and is thus Eulerian. Counting the number of matching-covered graphs, we get:
	
\begin{thinleftbar}
\begin{corollary*}
	The polynomial $BPM_n$ has $(1-o_n(1)) \cdot 2^{n^2}$ monomials with non-zero coefficients.
\end{corollary*}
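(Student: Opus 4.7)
The plan is to use Theorem~\ref{bpm_poly} to convert the corollary into a counting statement, and then settle that statement with a short probabilistic argument. Since $a_G \neq 0$ precisely when $G \subseteq K_{n,n}$ is matching-covered, it suffices to show that a uniformly random subgraph $G \subseteq K_{n,n}$ is matching-covered with probability $1 - o_n(1)$; equivalently, that the number of non-matching-covered subgraphs is $o(2^{n^2})$.

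The first step is a union bound over edges. A graph $G$ fails to be matching-covered if and only if some $(i,j) \in E(G)$ lies in no perfect matching of $G$, so
\[\Pr[G \text{ not matching-covered}] \;\leq\; \sum_{i,j=1}^{n} \Pr\bigl[(i,j) \in E(G)\ \text{and}\ (i,j) \text{ is in no PM of } G\bigr].\]
For a fixed pair $(i,j)$, the event that $(i,j)$ lies in some perfect matching of $G$ is exactly the event that the bipartite graph $G \setminus \{i,j\}$ (obtained by deleting left vertex $i$ and right vertex $j$ together with their incident edges) admits a perfect matching on $(n-1)+(n-1)$ vertices. Crucially, $G \setminus \{i,j\}$ is independent of the indicator that $(i,j) \in E(G)$, and is itself uniformly distributed over subgraphs of $K_{n-1,n-1}$. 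Consequently, each summand factors as $\tfrac{1}{2} \cdot \Pr[\text{uniform } H \subseteq K_{n-1,n-1} \text{ has no PM}]$.

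Finally I would invoke the standard Erd\H{o}s--R\'enyi-type estimate that a uniformly random subgraph of $K_{m,m}$ has no perfect matching with probability $O(m \cdot 2^{-m})$, recovered by union-bounding Hall violators $S \subseteq L$ of each size $k = 1,\dots,m$; the total is dominated by the contributions at $k=1$ and $k=m$, which both correspond to the existence of an isolated vertex. Plugging this in gives
\[\Pr[G \text{ not matching-covered}] \;\leq\; \tfrac{n^2}{2}\cdot O(n \cdot 2^{-n}) \;=\; O(n^3 \cdot 2^{-n}) \;=\; o_n(1),\]
which yields the corollary. I do not foresee any substantive obstacle: Theorem~\ref{bpm_poly} does the combinatorial work of identifying the non-vanishing monomials, and the supporting random-bipartite-graph estimate is classical, essentially driven by the rarity of isolated vertices at constant edge density.
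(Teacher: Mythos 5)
Your proof is correct, and it takes essentially the same route as the paper: reduce to a counting question via Theorem~\ref{bpm_poly}, then union bound plus Hall's theorem to show a $1-o_n(1)$ fraction of graphs are covered. The difference is in which characterization drives the union bound. You union-bound over edges $(i,j) \in E(G)$ using the ``every edge allowed'' criterion for matching-covered graphs, obtaining each summand as $\tfrac12\Pr[H \subseteq K_{n-1,n-1}\text{ has no PM}]$ by independence. The paper instead invokes Hetyei's criterion (Theorem~\ref{hetyei_conditions}) that $G$ is \emph{elementary} iff $G - a - b$ has a perfect matching for every pair $(a,b) \in A \times B$, union-bounds over all $n^2$ such pairs, and thereby actually proves the stronger statement that a $1-o_n(1)$ fraction of graphs are \emph{elementary} (not just matching-covered). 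That stronger statement is reused later (Proposition~\ref{fourier_coeffs_of_c2n} and its corollary on the Fourier spectrum), so the paper's formulation earns its keep. Your version is marginally tighter quantitatively ($O(n^3 2^{-n})$ vs.\ the paper's $2n^4/2^n$), though both are immaterially $o_n(1)$. One pedantic note: your claimed equivalence ``$G$ not matching-covered iff some $(i,j)\in E(G)$ is in no PM'' technically misclassifies the empty graph (which has no edges but is excluded from $MC_n$ by the $\emptyset \ne S$ requirement in Definition~\ref{matching_covered}), but this is a single graph and has no effect on the asymptotic count.
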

\end{thinleftbar}

Our characterization of the polynomial has several corollaries. For example, it allows us to obtain a closed form expression counting the number of bipartite graphs containing a perfect matching, and in particular to show that this number is odd. It also suffices for showing that a $(1 - o_n(1))$-fraction of the Fourier coefficients of $BPM_n$ are very small, $2^{-n^2 + 1}$, yet non-zero.

In the second part of the paper, we turn our attention towards the ``dual representation'' -- a form in which the symbols $1$ and $0$ switch roles.
Formally, for a Boolean function $f(x_1, \dots, x_n)$ we define its dual by $f^{\star}(x_1, \dots, x_n) = 1-f(1-x_1, \dots, 1-x_n)$. Under this notation, $BPM^{\star}_n(x_{1,1}, \dots, x_{n,n})$ gets the value $1$
if the input graph contains a biclique over a total of $n+1$ vertices (i.e., its complement contains
a violation of Hall's condition).
	
To present our result, we will focus on the following two classes of graphs. A bipartite graph is called {\em totally ordered} if there exists an ordering $v_1, \dots, v_n$ of its left vertices such that
$N(v_1) \supseteq N(v_2) \supseteq \dots \supseteq N(v_n)$ where $N(v)$ denotes the set of right
vertices connected to $v$.  In the same vein, we call the graph
{\em strictly totally ordered} if in fact $N(v_1) \supsetneq N(v_2) \supsetneq \dots \supsetneq N(v_n) \supsetneq \emptyset$.
For the dual case, we do not obtain a complete characterization of the polynomial. Nevertheless, we show the following fine grained characterization.

\begin{theo}[The Dual Polynomial of Bipartite Perfect Matching]{thm:dual_bpm_poly}
	\label{dual_bpm_poly}
	\begin{equation*}BPM^{\star}_{n}(x_{1,1}, \dots, x_{n,n}) = \sum_{G \subseteq K_{n,n}} a^{\star}_G \prod_{(i,j) \in E(G)} x_{i,j}\text{ ,  where:}\end{equation*}
	\begin{itemize}
		\item If $G$ \textit{is not totally ordered}, we have $a^\star_G = 0$.
		\item If $G$ \textit{is strictly totally ordered}, we have $a^\star_G = (-1)^{n+1}$
	\end{itemize}
\end{theo}

Our proof relies on properties of the lattice of matching-covered graphs, and heavily utilizes its Eulerian structure. For graphs $G$ that are totally ordered but not strictly so, the situation is complex. We show
that for some such graphs $G$, we have $a^\star_G=0$, for others $a^\star_G=\pm 1$, and for others still $a^\star_G \not\in \{-1,0,1\}$. For example, for $n > 2$ and $G=K_{n-1,n-1}$ we have $a^\star_G=(n-2)^2$. We present the full polynomial of $BPM^\star_3$ in Appendix \ref{bpm_star_3}. We leave the full characterization of the dual polynomial as an open problem.

This characterization of the dual polynomial suffices for
obtaining an accurate estimate of the number of monomials with non-zero coefficients:

\begin{thinleftbar}
	\begin{corollary*}
		The polynomial $BPM^*_n$ has $2^{2n \cdot \log_2(n) + \mathcal{O}(n)}$ monomials with non-zero coefficients.
	\end{corollary*}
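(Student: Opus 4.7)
The plan is to sandwich the number of nonzero monomials between two easy counts and show both are $2^{2n \log_2 n + O(n)}$ by elementary chain counting in the Boolean lattice $2^{[n]}$. By Theorem~\ref{dual_bpm_poly}, $a_G^\star = 0$ whenever $G$ is not totally ordered, and $a_G^\star = \pm 1$ whenever $G$ is strictly totally ordered; hence
\[
 \#\{\text{strictly totally ordered } G \subseteq K_{n,n}\} \;\le\; \#\{G : a_G^\star \ne 0\} \;\le\; \#\{\text{totally ordered } G \subseteq K_{n,n}\}.
\]

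For the upper bound, I would fix an ordering $\sigma \in S_n$ of the left vertices and count the weak chains $N(v_{\sigma(1)}) \supseteq \cdots \supseteq N(v_{\sigma(n)})$ in $2^{[n]}$. Each such chain is determined by assigning, for each right vertex $j$, an independent ``cutoff'' $t_j \in \{0,1,\ldots,n\}$ recording the largest $i$ with $j \in N(v_{\sigma(i)})$; this gives exactly $(n+1)^n$ chains per $\sigma$. Since every totally ordered graph admits at least one valid witness $\sigma$, the upper bound becomes $n! \cdot (n+1)^n$.

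For the lower bound, I would observe that in any strict chain $N(v_{\sigma(1)}) \supsetneq \cdots \supsetneq N(v_{\sigma(n)}) \supsetneq \emptyset$ of subsets of $[n]$, the cardinalities are forced to be $n, n-1, \ldots, 1$, so the chain is uniquely encoded by a permutation of $[n]$ (the order in which right vertices drop out). Combined with the $n!$ choices of which left vertex plays the role of $v_{\sigma(i)}$ — and using the fact that in the strict case the witnessing $\sigma$ is uniquely determined by the graph — this yields exactly $(n!)^2$ strictly totally ordered graphs. Applying Stirling gives both $(n!)^2 = 2^{2n \log_2 n + O(n)}$ and $n!(n+1)^n = 2^{2n \log_2 n + O(n)}$, closing the sandwich. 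I do not anticipate a serious obstacle here: the structural heavy lifting was already carried out in Theorem~\ref{dual_bpm_poly}, and what remains is a routine chain-counting exercise in the Boolean lattice.
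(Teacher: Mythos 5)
Your overall sandwich is the same as the paper's (Corollary~\ref{counting_monomials_dual}): bound below by the number of strictly totally ordered graphs and above by the number of totally ordered graphs, then observe both counts are $2^{2n\log_2 n + O(n)}$. Your lower bound of $(n!)^2$ is essentially identical to the paper's: in a strict chain the neighbourhood sizes are forced to be $n, n-1, \dots, 1$, so the chain and the assignment of left vertices are each encoded by a permutation and the witness is unique, giving exactly $(n!)^2$. Your upper bound, however, takes a genuinely different and simpler route. The paper computes the exact count of totally ordered graphs as $\sum_{k=1}^{n+1}\bigl((k-1)!\,\stirlingII{n+1}{k}\bigr)^2$ via a partition-based construction, bounds this by $(F_{n+1})^2$ for $F_{n+1}$ the Fubini number, and then uses $F_n < (n+1)^n$ to obtain $(n+2)^{2n+2}$. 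You instead over-count pairs (witness ordering $\sigma$, weak chain $N(v_{\sigma(1)}) \supseteq \cdots \supseteq N(v_{\sigma(n)})$): for fixed $\sigma$ each right vertex $j$ contributes an independent threshold $t_j \in \{0,\dots,n\}$, giving $(n+1)^n$ chains and hence the bound $n!\,(n+1)^n$, which is in fact numerically tighter than the paper's. What you give up is the paper's exact enumeration of totally ordered graphs; what you gain is a self-contained counting argument that sidesteps Stirling and Fubini numbers entirely. Both give $\log_2 = 2n\log_2 n + O(n)$, so your proof is correct.
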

\end{thinleftbar}

We view the small number of non-zero coefficients as some form of a positive algorithmic result
regarding the perfect matching problem. For example, consider a communication setting where
the edges of a bipartite graph are partitioned somehow between two parties; Alice and Bob. Their task is to devise a communication protocol for determining whether the combined graph has a perfect matching.
The known algorithms for bipartite matching imply a protocol that uses $\mathcal{O}(n^{1.5})$ bits
of communication \cite{dobzinski2019economic,nisan2019demand}. However, the small number of monomials in $BPM^\star_n$ directly implies that
the associated communication matrix has Real rank that is \textit{only} exponential in $n \log n$ (recall that the logarithm of the rank is a lower bound for the deterministic communication complexity, and is conjectured to be polynomially related to it).

Conversely, the polynomial representations of $BPM_n$ and $BPM_n^\star$ allow us to obtain \textit{new lower bounds} on the decision problem of bipartite perfect matching. In particular, we consider three families of decision trees; those whose internal nodes are labeled by $XOR$, $AND$ or $OR$ functions, respectively. Of particular note is the family of $OR$ decision trees, which were shown by \cite{nisan2019demand} 
to be complexity preserving proxies for many efficient algorithms for bipartite perfect matching. The known algorithms for the problem imply that $\tilde{\mathcal{O}}(n^{1.5})$ $OR$ queries suffice (even when slightly restricting each query), and any $\Omega(n^{1 + \alpha})$ lower bound would rule out asymptotically fast algorithms from a wide class, i.e., ``combinatorial algorithms''.

To present our lower bounds, we introduce the following notation. For each of the three families outlined above, we denote the minimal depth of a tree in the family computing $BPM_n$ by $D^{XOR}(BPM_n)$, $D^{AND}(BPM_n)$ and $D^{OR}(BPM_n)$, respectively.

\begin{thinleftbar}
	\begin{corollary*} 
	$BPM_n$ is \textit{evasive} for $XOR$ decision trees, i.e., $D^{XOR}(BPM_n) = n^2$.$\quad\quad\quad\quad$ \newline Furthermore, for $AND$ and $OR$ decision trees, we have the following lower bounds: 
	\begin{equation*}
		\begin{split}
			D^{AND}(BPM_n) &\ge (\log_3 2) \cdot n^2 - o_n(1)
		\end{split}
		\quad\quad\quad
		\begin{split}
			D^{OR}(BPM_n) &\ge 2 \log_3(n!)	
		\end{split}
	\end{equation*}
	\end{corollary*}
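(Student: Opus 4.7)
For XOR trees, I would use the standard inequality $D^{XOR}(f) \ge \deg_{\mathbb{F}_2}(f)$, which holds because every root-to-leaf path indicator is a product of at most $d$ affine $\mathbb{F}_2$-polynomials (one per parity query, via $\mathbb{1}[\oplus_{i\in S} x_i = b] \equiv 1 + b + \sum_{i \in S} x_i \pmod 2$), exhibiting $f$ as an $\mathbb{F}_2$-polynomial of degree at most $d$. Reducing the polynomial of Theorem~\ref{bpm_poly} modulo $2$ preserves its support, since each nonzero coefficient $(-1)^{\chi(G)}$ is $\pm 1$ and hence congruent to $1$ modulo $2$. Therefore $\deg_{\mathbb{F}_2}(BPM_n)$ equals the largest $|E(G)|$ over matching-covered $G$, and since $K_{n,n}$ is itself matching-covered this gives $\deg_{\mathbb{F}_2}(BPM_n) = n^2$, establishing evasiveness.

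For AND trees, the plan is to prove by induction on depth that $M(f) \le 3^{D^{AND}(f)}$, where $M(f)$ denotes the number of nonzero monomials in the $\{0,1\}$-polynomial of $f$. If the root queries $\bigwedge_{i \in S} x_i$, the subfunctions $\tilde f_0$ (returned on AND${}=0$) and $\tilde f_1$ (returned on AND${}=1$) satisfy $M(\tilde f_0), M(\tilde f_1) \le 3^{d-1}$ by induction, and the polynomial identity $f = \tilde f_0 + \left(\prod_{i \in S} x_i\right)(\tilde f_1 - \tilde f_0)$ decomposes $f$ into three pieces whose total monomial count is at most $3 \cdot 3^{d-1}$; here we use that multiplication by $\prod_{i \in S} x_i$ does not increase sparsity (after multilinear reduction via $x_i^2 = x_i$). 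Combined with the first corollary of Theorem~\ref{bpm_poly}, which gives $M(BPM_n) = (1 - o_n(1)) \cdot 2^{n^2}$, taking logarithms base $3$ yields $D^{AND}(BPM_n) \ge (\log_3 2) \cdot n^2 - o_n(1)$.

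For OR trees, I would use the easy duality $D^{OR}(f) = D^{AND}(f^{\star})$, coming from the relabelling $y_i = 1-x_i$, and apply the same sparsity bound to $BPM_n^{\star}$. By Theorem~\ref{dual_bpm_poly}, every strictly totally ordered subgraph $G \subseteq K_{n,n}$ contributes a nonzero coefficient $(-1)^{n+1}$, so $M(BPM_n^{\star})$ is at least the number of such graphs. A strictly totally ordered graph is determined by (i) the assignment of the $n$ left vertices to the ranks $1,\dots,n$ (forced to be a genuine bijection, because the neighborhood sizes are strictly decreasing and must therefore be the distinct values $n, n-1, \dots, 1$) and (ii) a saturated chain of right-vertex subsets $[n] = S_1 \supsetneq S_2 \supsetneq \cdots \supsetneq S_n$ with $|S_i| = n+1-i$; these contribute $n!$ and $n!$ choices respectively, so the count is $(n!)^2$, yielding $D^{OR}(BPM_n) \ge \log_3 (n!)^2 = 2 \log_3 (n!)$. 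The main obstacle across the proof is the $\log_3$ step for AND trees: one might naively hope for $M(f) \le 2^d$, but the cross term $(\prod_{i \in S} x_i)\cdot \tilde f_0$ in the decomposition appears genuinely necessary and forces the base of the logarithm to be $3$, which is precisely why the AND bound extracts only $(\log_3 2)\cdot n^2$ rather than the full $n^2$.
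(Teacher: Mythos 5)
Your proof is correct, and it diverges from the paper's in two small but genuine ways. For the $XOR$ bound you argue exactly as the paper does: $D^{XOR}(f)\ge \deg_{\mathbb{F}_2}(f)$, together with the observation that reducing the coefficients $(-1)^{\chi(G)}$ of Theorem~\ref{bpm_poly} modulo $2$ preserves the support (in particular the top monomial $K_{n,n}$ survives, since $K_{n,n}$ is matching-covered). For the $AND$ bound, your key lemma $|mon(f)|\le 3^{D^{AND}(f)}$ is proved by an induction on depth via the decomposition $f=\tilde f_0+\bigl(\prod_{i\in S}x_i\bigr)(\tilde f_1-\tilde f_0)$, using that multiplying a multilinear polynomial by a monomial never increases its sparsity. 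The paper instead sums monomial counts over root-to-$1$-leaf path indicators and invokes $\sum_{k}\binom{d}{k}2^{k}=3^{d}$; the two routes encode the same combinatorial fact, but your recursive packaging is a bit cleaner. The $OR$ bound is where you really depart: you invoke the exact duality $D^{OR}(f)=D^{AND}(f^{\star})$ (which follows by the relabeling $y_i=1-x_i$, which converts every $OR$ query into an $AND$ query and sends $f$ to $f^{\star}$) and simply re-apply the $AND$ lemma to $f^{\star}$, whereas the paper re-runs a parallel path-counting argument on root-to-$0$-leaf paths. Your duality observation saves a full lemma and is arguably the more illuminating presentation. Finally, your count of strictly totally ordered graphs is a direct bijective enumeration: the strict inclusions $N(v_1)\supsetneq\dots\supsetneq N(v_n)\supsetneq\emptyset$ force $|N(v_i)|=n-i+1$, so such a graph is precisely a pair (permutation of left vertices, complete flag on right vertices), giving exactly $(n!)^2$; the paper gets the same number via an orbit argument (apply all $n!\times n!$ bipartition permutations to one such graph and note all images are distinct). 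Both are correct and equally short.
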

\end{thinleftbar}

\newpage
\newgeometry{bottom=4cm} {
\tableofcontents
}
\restoregeometry
\newpage
\defsection{Preliminaries and Notation}

\defsubsection{Polynomial Representations of Boolean Functions}

Recall the following fact regarding polynomial representations of Boolean functions (see \cite{o2014analysis}):

\begin{leftbar}
\begin{fact}
	\label{uniqueboolean}
	Any Boolean function $f: \{0,1\}^n \rightarrow \{0,1\}$ can be \textit{uniquely} represented by a multilinear polynomial over the Reals.
\end{fact}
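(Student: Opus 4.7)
The plan is to establish existence and uniqueness separately, with the core of the argument being a dimension-matching between two $2^n$-dimensional real vector spaces.

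For existence, I would exhibit an explicit multilinear representation via interpolation. For each point $x \in \{0,1\}^n$, define the indicator polynomial
\[
\delta_x(z) \;=\; \prod_{i \,:\, x_i = 1} z_i \;\cdot\; \prod_{i \,:\, x_i = 0} (1 - z_i),
\]
which is manifestly multilinear and satisfies $\delta_x(y) = \mathbbm{1}[y = x]$ for every $y \in \{0,1\}^n$. Then $p_f(z) \vcentcolon= \sum_{x \in \{0,1\}^n} f(x)\,\delta_x(z)$ is multilinear and agrees with $f$ on the cube; expanding the $(1-z_i)$ factors and collecting monomials yields the polynomial in the usual monomial basis.

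For uniqueness, I would appeal to linear algebra. The space $\mathcal{M}_n$ of multilinear polynomials in $z_1,\dots,z_n$ over $\mathbb{R}$ has as basis the $2^n$ monomials $\{\prod_{i \in S} z_i : S \subseteq [n]\}$, and the space $\mathbb{R}^{\{0,1\}^n}$ of real-valued functions on the cube also has dimension $2^n$. Evaluation defines a linear map $\mathrm{ev} : \mathcal{M}_n \to \mathbb{R}^{\{0,1\}^n}$, and the existence construction (applied with arbitrary real values in place of Boolean ones) shows $\mathrm{ev}$ is surjective. Any surjective linear map between finite-dimensional spaces of equal dimension is an isomorphism, so $\mathrm{ev}$ is injective, giving uniqueness.

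As an alternative route to injectivity, one can argue directly: if a multilinear $p$ vanishes on $\{0,1\}^n$, then Möbius inversion on the Boolean lattice recovers each coefficient $\hat{p}_S$ as a signed sum of values $p(x)$, forcing every $\hat{p}_S$ to be $0$. The only subtle point worth flagging is that the multilinearity restriction is essential: without it, any polynomial in the ideal generated by $\{z_i^2 - z_i\}_{i=1}^n$ also vanishes on the cube, so uniqueness fails in the unrestricted polynomial ring. Beyond this caveat, the argument is elementary and presents no real obstacle; the ``hard'' mathematical content of the paper begins only once this representation is fixed and one must determine the specific coefficients for $BPM_n$.
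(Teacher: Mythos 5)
Your proof is correct and is the standard textbook argument; the paper itself states this as a recalled fact and cites \cite{o2014analysis} rather than proving it, so there is no in-paper proof to compare against. Both your existence-by-interpolation and uniqueness-by-dimension-counting steps are sound, and the caveat about multilinearity being essential (modulo the ideal $\langle z_i^2 - z_i\rangle$) is exactly the right thing to flag.
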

\end{leftbar}

\noindent For a given a multilinear polynomial, we denote the set of all monomials appearing in it by:
\begin{leftbar}
	\begin{notation}
		\label{mon_notation}
		Let $f(x_1, \dots, x_n) = \sum_{S \subseteq [n]} a_S \left(\prod_{i \in S} x_i\right) \in \mathbb{R}[x_1, \dots, x_n]$ be a multilinear polynomial over the Reals. Denote the set of monomials appearing in $f$ by:
		\[ mon(f) = \set{S \subseteq [n]}{a_S \ne 0} \] 
	\end{notation}
\end{leftbar}

\defsubsection{The M\"obius Function of Partially Ordered Sets}

%

When discussing partially ordered sets (hereafter, \textbf{posets}), we use the M\"obius function for posets. The M\"obius function of a poset is the inverse, with respect to convolution, to the poset's zeta function $\zeta(y,x) = \mathbbm{1}\{y < x\}$ (see, e.g., \cite{Stanley:2011:ECV:2124415}).

\begin{leftbar}
\begin{definition}[M\"obius Function for Posets]
	\label{mobius_func_poset}
	Let $\mathcal{P} = (P, <)$ be a finite poset. The M\"obius function of the poset $\mathcal{P}$ is denoted by $\mu_P : P \times P \rightarrow \mathbb{R}$, and is defined as follows:
	\begin{equation*}
	\begin{split}
	\forall x \in P:\ \mu_P(x, x) &= 1\\
	\forall x,y \in P,\ y < x:\ \mu_P(y,x) &= - \sum_{y \le z < x} \mu_P(y,z)
	\end{split}
	\end{equation*}
\end{definition}
\end{leftbar}

Given a poset $\mathcal{P}$ with a unique bottom element $\hat{0}$, the values $\mu_P(\hat{0}, x)$, where $x \in \mathcal{P}$, are known as the \textbf{M\"obius Numbers} of $\mathcal{P}$.

%

\defsubsection{Graphs}

We use the standard definitions and notation relating to graphs. For a graph $G$, we denote the sets of vertices and edges of $G$ by $V(G)$ and $E(G)$, respectively. The set of all perfect matchings of $G$ is denoted by $PM(G)$, and the set of all connected components is denoted by $C(G)$. Furthermore, for any vertex $v \in V(G)$, we denote its neighbour set by $N_G(v)$. 

In addition to the quantities relating to a given graph, it will be useful to also provide some notation for basic operations on graphs. For example, the notations $G \cup \{(a,b)\}$ and $G \setminus \{(a,b)\}$ refer to the graph $G$ with the addition or removal of the edge $(a,b)$, respectively. In the same vein, $G - a$ is the graph where the vertex $a$ is omitted, along with all the edges adjacent to it. Lastly, if $H$ and $G$ are two graphs, the notation $H \subseteq G$ indicates that $E(H) \subseteq E(G)$ and $V(H) = V(G)$.

A somewhat less common quantity which we refer to throughout the paper is the \textbf{Cyclomatic Number} of the graph, which is defined as follows:
 
\begin{leftbar}
	\begin{definition}
		\label{cyclomatic_number}
		Let $G$ be a graph. The cyclomatic number of $G$, $\chi(G)$, is defined:
		
		\[ \chi(G) = |E(G)| - |V(G)| + |C(G)| \]
	\end{definition}
\end{leftbar}

We will often consider the edge sets corresponding to unions of graphs. Consequently, the following notation will be useful:

\begin{leftbar}
	\begin{notation}
		Let $S$ be a set of graphs. The set of all edges appearing in any graph $G \in S$ is denoted by: 
			\[\bar{E}(S) = \bigcup_{G \in S} E(G)\]
	\end{notation}
\end{leftbar}

Lastly, when dealing with Boolean graph functions (i.e., Boolean functions whose input bits correspond to the edges of graphs over a fixed set of vertices), we use the following notation:

\begin{leftbar}
	\begin{notation}
		Let $n,m \in \mathbb{N}^+$. Let $f: \{0,1\}^{nm} \rightarrow \{0,1\}$ be a Boolean function whose inputs are bipartite graphs over the vertices of $K_{n,m}$. Then, $\forall G \subseteq K_{n,m}$ denote:
		\[ f(G) \eqdef f(x_G),\text{ where } \forall i \in [n],\ \forall j \in [m]:\ (x_G)_{i,j} = \mathbbm{1}\{(i,j) \in E(G)\} \]
	\end{notation}
\end{leftbar}

\defsubsection{Decision Trees and Query Complexity}
\label{decision_trees}

Decision trees are binary trees whose internal nodes are labeled by Boolean functions, and whose leaves are labeled by the values $\{0,1\}$. Formally, we say that a decision tree $T$ \textbf{computes} a Boolean function $f: \{0,1\}^n \rightarrow \{0,1\}$ if for any root-to-leaf path in $T$, the value of the leaf ``agrees'' with $f(z)$ on all inputs $z \in \{0,1\}^n$ which are \textit{processed} by the path. An input $z \in \{0,1\}^n$ is \textit{processed} by a path if for all functions $h$ in the internal nodes along the path, we have $h(z) = 1$ if the path turns right at that node, and $h(z) = 0$ otherwise.

From an algorithmic perspective, decision trees can be viewed as algorithms whose every step consists of \textit{querying} the output of some Boolean function $h \in \mathcal{H}$ on the input bits, and repeating the process until sufficient \textit{information} is available to deduce the output. Thus, decision trees give rise to the \textit{query complexity model}. In this model, we disregard the amount of computation required, and instead measure the minimal amount of \textit{information}. There are several families of decision trees, which differ from one another in the set of functions $\mathcal{H}$ which label their internal nodes.

\paragraph{Classical Decision Trees} These are decision trees whose internal nodes are labeled by dictatorship functions, i.e., each internal node ``queries'' the value of a single input bit. For such trees, we use the following measure of complexity:

\begin{leftbar}
\begin{definition}
	 Let $f: \{0,1\}^n \rightarrow \{0,1\}$ be a Boolean function. The \textit{minimal depth} of a classical decision tree computing $f$ is known as the \textbf{Query Complexity of f}.
\end{definition}
\end{leftbar}

\paragraph{Generalized Decision Trees} Three natural extensions of classical decision trees are those whose internal nodes are labeled by $XOR$, $OR$ and $AND$ functions, respectively, over arbitrary subsets of the input bits. $XOR$ decision trees have been studied at length, and are known to be related to the Fourier expansion of a function. $OR$ and $AND$ decision trees have also been studied, for example in the setting of group property testing. For these three families of trees, we denote their corresponding query complexities as follows:

\begin{leftbar}
	\begin{definition}
		Let $f: \{0,1\}^n \rightarrow \{0,1\}$ be a Boolean function. We denote the \textbf{minimal depth} of any $XOR$, $OR$ or $AND$ decision tree computing $f$, by $D^{XOR}(f)$, $D^{OR}(f)$ and $D^{AND}(f)$, respectively. 
	\end{definition}
\end{leftbar}

\defsubsection{Fourier Analysis}
\label{decision_trees}

Fourier Analysis of Boolean functions is a wide field of study, in which powerful analysis tools are applied to functions over the Hamming cube, yielding combinatorial (and other) insights. Given a Boolean function $f: \{-1,1\}^n \rightarrow \{-1,1\}$, the Fourier expansion of $f$ is the unique multilinear polynomial representing $f$ over the Reals in the $\{1, -1\}$ basis (i.e., $-1$ corresponds to $True$ and $1$ to $False$). The Fourier expansion of $f$ is given by:

\[ f(x_1, \dots, x_n) = \sum_{S \subseteq [n]} \hat{f}_S \cdot \prod_{i \in S} x_i \]

Where each $\hat{f}_S$ is a Real number, referred to as the Fourier coefficient of $S$, and each monomial $\prod_{i \in S} x_i$ corresponds to a parity function over the set $S$. The aforementioned representation is unique, and the set of Fourier coefficients of $f$ is commonly referred to as its \textit{Fourier Spectrum}. Crucially, the set of all monomials forms an orthonormal basis. There are many important properties of the Fourier expansion, which we will not recount here. For an extensive treatment of the topic, we refer the reader to \cite{o2014analysis}.

\defsection{The Boolean Bipartite Perfect Matching Polynomial}
\label{primal_section}

This {\sectionnamelower} centers around the proof of Theorem \ref{bpm_poly}. We begin with some basic observations regarding a family of Boolean graph functions called ``Graph Cover functions''. These observations lead us to the connection between the multilinear polynomial representing $BPM_n$, and the M\"obius numbers of the lattice of matching-covered graphs. To compute these M\"obius numbers, we rely on a result of Billera and Sarangarajan \cite{billera1994combinatorics}, showing that the aforementioned lattice is isomorphic to the face lattice of the Birkhoff Polytope. 

Using Theorem \ref{bpm_poly}, we deduce several corollaries. For example, we find a closed form expression counting the number of bipartite graphs having a bipartite perfect matching, and deduce that this number is odd. We also compute asymptotically almost all the Fourier spectrum of $BPM_n$. Lastly, we obtain new lower bounds for decision trees; we show that $BPM_n$ is ``evasive'' for $XOR$ decision trees (i.e., exactly $n^2$ queries are required), and that for $AND$ decision trees, at least $(\log_3 2) \cdot n^2 - o_n(1)$ queries are required. 

\defsubsection{Graph Cover Functions}
\label{graph_cover_functions}

Let $\mathcal{H}$ be a set of labeled graphs over a fixed common vertex set. Consider the following natural Boolean graph function: ``Given a labeled graph $G$ over the same vertex set, does $G$ contain any graph in $\mathcal{H}$ as a subgraph?''. In what follows, we restrict our discussion to bipartite graphs and fix our vertex set to be the vertices of the complete bipartite graph, $K_{n,m}$. Nevertheless, the same observations apply to general graphs. Formally, we define the Graph Cover function of $\mathcal{H}$ as follows:

\begin{leftbar}
\begin{definition}
	\label{graph_cover_func}
	Let $\mathcal{H}$ be a set of bipartite graphs over the vertices of $K_{n,m}$. The \textbf{Graph Cover} function of $\mathcal{H}$, $f_{\mathcal{H}}:\ \{0,1\}^{nm} \rightarrow \{0,1\}$, is defined as follows:
	
	\[ \forall G \subseteq K_{n,m}:\ f_{\mathcal{H}}(G) = \mathbbm{1}\{\exists H \in \mathcal{H},\ H \subseteq G\} \]
	
\end{definition}
\end{leftbar}

Given a set of graphs $\mathcal{H}$ over the vertices of $K_{n,m}$ and a graph $G \subseteq K_{n,m}$, we say that $G$ is \textbf{$\mathcal{H}$-covered} if there exists some $\emptyset \ne S \subseteq \mathcal{H}$ such that $\bar{E}(S) = E(G)$. Moreover, we denote by $\mathcal{C}(\mathcal{H})$ the set of \textit{all} $\mathcal{H}$-covered graphs. The following simple observation regarding the \textit{monomials} of the multilinear polynomial representing $f_\mathcal{H}$ can be made.

\begin{leftbar}
	\begin{proposition}
		\label{monomials_graph_cover_functions}
		Let $\mathcal{H}$ be a set of bipartite graphs over the vertices of $K_{n,m}$. The only monomials appearing in the multilinear polynomial representing $f_\mathcal{H}$ over the Reals are those corresponding to $\mathcal{H}$-covered graphs.
	\end{proposition}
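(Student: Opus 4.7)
The plan is to use inclusion--exclusion on the indicator events ``$H \subseteq G$'' for $H \in \mathcal{H}$, producing an explicit multilinear polynomial that agrees with $f_\mathcal{H}$ on the Boolean cube, and then to invoke uniqueness of the multilinear representation (Fact~\ref{uniqueboolean}).

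Concretely, I would start from the identity
\[ f_\mathcal{H}(G) \;=\; 1 - \prod_{H \in \mathcal{H}}\bigl(1 - \mathbbm{1}\{H \subseteq G\}\bigr), \]
and expand the product to obtain
\[ f_\mathcal{H}(G) \;=\; \sum_{\emptyset \ne S \subseteq \mathcal{H}} (-1)^{|S|+1} \prod_{H \in S} \mathbbm{1}\{H \subseteq G\}. \]
Each factor $\mathbbm{1}\{H \subseteq G\}$ equals the monotone monomial $\prod_{(i,j) \in E(H)} x_{i,j}$ in the edge variables of $G$, and on the Boolean cube the idempotence relation $x_{i,j}^2 = x_{i,j}$ collapses the inner product into $\prod_{(i,j) \in \bar{E}(S)} x_{i,j}$.

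This already exhibits $f_\mathcal{H}$ as a multilinear polynomial in which every monomial is indexed by an edge set of the form $\bar{E}(S)$ for some non-empty $S \subseteq \mathcal{H}$, i.e., by an $\mathcal{H}$-covered graph. Since the multilinear representation is unique over the Reals, this \emph{is} the representation, so after gathering like terms every surviving monomial must correspond to an $\mathcal{H}$-covered graph. The only subtlety is that different subsets $S$ can yield the same $\bar{E}(S)$ and their contributions may cancel, but such cancellations only shrink the monomial support --- they can never introduce monomials outside $\mathcal{C}(\mathcal{H})$. I do not expect any serious obstacle: the entire content of the proposition is inclusion--exclusion plus Boolean idempotence, together with the uniqueness fact.
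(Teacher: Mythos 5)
Your proposal is correct and is essentially identical to the paper's proof: both arithmetize the DNF for $f_\mathcal{H}$ via $1 - \prod_{H\in\mathcal{H}}(1-\prod_{(i,j)\in E(H)}x_{i,j})$, expand by inclusion--exclusion, use idempotence $x_{i,j}^2 = x_{i,j}$ to collapse each term to the monomial on $\bar{E}(S)$, and conclude by uniqueness of the multilinear representation that every surviving monomial corresponds to an $\mathcal{H}$-covered graph.
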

\end{leftbar}
\begin{proof}
	The DNF formula representing the graph cover function is:
	
	\[\varphi = \bigvee_{H \in \mathcal{H}}\ \bigwedge_{(i,j) \in E(H)}\ x_{i, j}\]
	
	Since each $x_{i,j} \in \{0,1\}$, we have $\forall k \in \mathbb{N}^+$, $x_{i,j}^k = x_{i,j}$. Therefore, arithmetizing the formula yields the following polynomial representation:
	
	\begin{equation*}
		\begin{split}
			f_{\mathcal{H}}(x_{1,1}, \dots, x_{n,m}) &= 1 - \underset{H \in \mathcal{H}}{\prod}(1 - \underset{(i,j) \in E(H)}{\prod} x_{i, j}) \\
			&= \sum_{\emptyset \ne S \subseteq \mathcal{H}} (-1)^{|S|+1} \prod_{G \in S} \ \prod_{(i,j) \in E(G)} x_{i, j} \\
			&= \sum_{G \in \mathcal{C}(\mathcal{H})} \left(\sum_{\substack{\emptyset \ne S \subseteq \mathcal{H}\\ \bar{E}(S)=E(G)}}(-1)^{|S|+1}\right) \prod_{(i,j) \in E(G)} x_{i, j} \qedhere
		\end{split}
	\end{equation*}
\end{proof}

The set of $\mathcal{H}$-covered graphs, together with the subset relation over edges, form a partially ordered set. This partially ordered set has two important properties. Firstly, it is a lattice; every two elements have a unique supremum (``join'') and a unique infimum (``meet''). Secondly, the M\"obius numbers of this lattice exactly describe the \textit{coefficients} of the multilinear polynomial representing the graph cover function, $f_\mathcal{H}$. 

\begin{leftbar}
	\begin{proposition}
		\label{graph_cover_lattice}
		Let $\mathcal{H}$ be a set of bipartite graphs over a fixed vertex set. The poset $\mathcal{P} = (\mathcal{C}(\mathcal{H}) \cupdot \{\hat{0}\}, \subseteq)$ is a bounded lattice, where $\hat{0}$ is the empty graph. 
	\end{proposition}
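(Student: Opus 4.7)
The plan is to exhibit the extremal elements together with the binary join and meet operations explicitly. The bottom is $\hat{0}$ by construction, and the top is $\bar{E}(\mathcal{H})$, which is $\mathcal{H}$-covered via the witness $S=\mathcal{H}$. So the poset is bounded as soon as the two binary operations are shown to exist.

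For the join of $G_1,G_2\in\mathcal{C}(\mathcal{H})$ I would take the edgewise union $G_1\cup G_2$. If $G_i=\bar{E}(S_i)$ for nonempty $S_i\subseteq\mathcal{H}$, then $G_1\cup G_2=\bar{E}(S_1\cup S_2)$, so the union is itself $\mathcal{H}$-covered. Any common upper bound in the poset must contain the edges of both $G_1$ and $G_2$, hence contain $G_1\cup G_2$, so this is the least upper bound. Joins involving $\hat{0}$ are trivial since $\hat{0}$ is the global minimum.

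The meet is the more subtle ingredient, because the naive candidate $G_1\cap G_2$ need not be $\mathcal{H}$-covered: individual edges can survive the intersection without any single $H\in\mathcal{H}$ fitting inside. The fix is to take the largest $\mathcal{H}$-covered subgraph of the intersection. Formally, let $T\eqdef\{H\in\mathcal{H}:E(H)\subseteq E(G_1)\cap E(G_2)\}$ and define $G_1\wedge G_2\eqdef\bar{E}(T)$, with the convention $G_1\wedge G_2=\hat{0}$ when $T=\emptyset$. This element lies in $\mathcal{C}(\mathcal{H})\cupdot\{\hat{0}\}$, is contained in both $G_1$ and $G_2$, and for any common lower bound $G=\bar{E}(S)\in\mathcal{C}(\mathcal{H})$ every $H\in S$ satisfies $E(H)\subseteq E(G)\subseteq E(G_1)\cap E(G_2)$, hence $S\subseteq T$ and $G\subseteq\bar{E}(T)$. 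Uniqueness of join and meet is automatic in any poset.

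The only real delicacy is recognising that the meet cannot simply be $G_1\cap G_2$; once the right candidate $\bar{E}(T)$ is identified the remainder is routine set manipulation. I therefore do not expect a genuine obstacle beyond carefully tracking the degenerate case $T=\emptyset$ in which the meet collapses to the adjoined bottom element $\hat{0}$.
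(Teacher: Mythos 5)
Your proposal is correct and follows essentially the same route as the paper: the join is identified as the edge union (which is $\mathcal{H}$-covered by combining the witness sets), and the meet is taken to be the union of all graphs in $\mathcal{H}$ contained in both $G_1$ and $G_2$ — your set $T=\{H\in\mathcal{H}:E(H)\subseteq E(G_1)\cap E(G_2)\}$ is exactly the index set the paper uses. Your explicit handling of the degenerate case $T=\emptyset$ (where the meet collapses to $\hat{0}$) is a small but welcome clarification that the paper leaves implicit.
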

\end{leftbar}
\begin{proof}
	The subset relation over the edges is reflexive, transitive and anti-symmetric, thus $\mathcal{P}$ is a poset. Furthermore, $\mathcal{P}$ is bounded, since $\hat{0} = (V(\mathcal{H}), \emptyset)$ and $\hat{1} = (V(\mathcal{H}), \bar{E}(\mathcal{H}))$. It remains to show that $\forall G_1,G_2 \in \mathcal{C}(\mathcal{H})$ there exists a \textit{join} (unique supremum) and a \textit{meet} (unique infimum).
	
	Let $G_1,G_2 \in \mathcal{C}(\mathcal{H})$. The meet and join of $G_1$ and $G_2$ are given by:
	
	\begin{equation*}
	\begin{split}
	E(G_1 \vee G_2) &= \bigcup_{\substack{H \in \mathcal{H}\\(H \subseteq G_1) \lor (H \subseteq G_2)}} E(H) = E(G_1) \cup E(G_2) \\
	E(G_1 \wedge G_2) &= \bigcup_{\substack{H \in \mathcal{H}\\(H \subseteq G_1) \land (H \subseteq G_2)}} E(H)
	\end{split}
	\end{equation*}
	
	For the join operator, let $G \eqdef G_1 \vee G_2$. By construction, $G_1 \subseteq G$ and $G_2 \subseteq G$, therefore $G$ is a supremum. Assume towards a contradiction that there exists another supremum $\hat{G} \ne G$ such that $G \not\subseteq \hat{G}$. Let $x \in E(G) \setminus E(\hat{G})$. Without loss of generality, assume $x \in E(G_1)$. Then $x \in E(G_1)$ and $x \notin E(\hat{G})$ therefore $G_1 \not\subseteq \hat{G}$, in contradiction to the fact that $\hat{G}$ is a supremum.
	
	For the meet operator, let $G \eqdef G_1 \wedge G_2$. By construction, $G \subseteq G_1$ and $G \subseteq G_2$, therefore $G$ is an infimum. Assume towards a contradiction that there exists another infimum $\hat{G} \ne G$ such that $G \not\supseteq \hat{G}$. Let $x \in E(\hat{G}) \setminus E(G)$. Since $\hat{G} \in \mathcal{C}(\mathcal{H})$, there exists $H_x \in \mathcal{H}$ such that $H_x \subseteq \hat{G}$, $x \in E(H_x)$. However, $\hat{G}$ is an infimum, thus $H_x \subseteq \hat{G} \subseteq G_1$ and $H_x \subseteq \hat{G} \subseteq G_2$, thus by construction $H_x \subseteq G$ and $x \in E(G)$, a contradiction. 
\end{proof}

\begin{leftbar}
\begin{proposition}
	\label{graph_cover_polynomial_mobius}
	Let $\mathcal{H}$ be a set of bipartite graphs over the vertices of $K_{n,m}$ and let $\mathcal{P} = (\mathcal{C}(\mathcal{H})\cupdot \{\hat{0}\}, \subseteq)$ be the graph cover lattice of $\mathcal{H}$. Then:
	
	\[ f_{\mathcal{H}}(x_{1,1}, \dots, x_{n,m}) = \sum_{G \in \mathcal{C}(\mathcal{H})} {-\mu_P(\hat{0}, G) \cdot \prod_{(i,j) \in E(G)} x_{i, j}} \]
	
	\noindent Namely, the coefficients of the multilinear polynomial representing $f_{\mathcal{H}}$ over the Reals are given by the (negated) M\"obius numbers of $\mathcal{P}$.
\end{proposition}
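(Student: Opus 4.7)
The plan is to combine the explicit expression for the coefficients of $f_\mathcal{H}$ obtained in the proof of Proposition~\ref{monomials_graph_cover_functions} with a short inclusion-exclusion calculation that matches the recursive definition of the Möbius function. Concretely, I would fix $G \in \mathcal{C}(\mathcal{H})$, write $a_G$ for the coefficient of $\prod_{(i,j) \in E(G)} x_{i,j}$ in $f_\mathcal{H}$, and prove $a_G = -\mu_P(\hat{0}, G)$ by induction on the length of a maximal chain from $\hat{0}$ to $G$ in $\mathcal{P}$.

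The core step is to evaluate the partial sum $\sum_{\hat{0} < G' \le G} a_{G'}$ directly. Substituting the expression
\[
a_{G'} = \sum_{\substack{\emptyset \ne S \subseteq \mathcal{H}\\ \bar{E}(S) = E(G')}} (-1)^{|S|+1}
\]
read off from the proof of Proposition~\ref{monomials_graph_cover_functions} and swapping the order of summation, every non-empty $S \subseteq \mathcal{H}$ with $\bar{E}(S) \subseteq E(G)$ is accounted for exactly once (by the unique $G' \in \mathcal{C}(\mathcal{H})$ with $E(G') = \bar{E}(S)$). The condition $\bar{E}(S) \subseteq E(G)$ is equivalent to $S$ being contained in $\mathcal{H}_G \eqdef \{H \in \mathcal{H} : H \subseteq G\}$, and $|\mathcal{H}_G| \ge 1$ since $G$ is itself $\mathcal{H}$-covered. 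The standard binomial identity $\sum_{j=1}^{k} \binom{k}{j}(-1)^{j+1} = 1$ then yields $\sum_{\hat{0} < G' \le G} a_{G'} = 1$ for every $G \in \mathcal{C}(\mathcal{H})$.

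Rearranging gives $a_G = 1 - \sum_{\hat{0} < G' < G} a_{G'}$, which, after negation, is exactly the recursion defining $\mu_P(\hat{0}, G)$ from Definition~\ref{mobius_func_poset} together with $\mu_P(\hat{0},\hat{0}) = 1$. A straightforward induction then closes the argument. I do not anticipate a substantive obstacle here, since the argument is essentially textbook Möbius inversion; the only step requiring a little care is verifying that the double sum reindexes cleanly, which is immediate from the definition of $\mathcal{C}(\mathcal{H})$ as the family of graphs whose edge sets arise as unions of edge sets of members of $\mathcal{H}$, together with the fact (used in Proposition~\ref{graph_cover_lattice}) that such a union is again $\mathcal{H}$-covered.
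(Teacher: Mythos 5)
Your proof is correct, and it takes a route that is dual to the paper's. The paper starts from the proposed polynomial $f(\bar{x}) = \sum_{G\in\mathcal{C}(\mathcal{H})} -\mu_P(\hat{0},G)\prod_{(i,j)\in E(G)} x_{i,j}$, evaluates it at an arbitrary input $H$, and uses the defining Möbius cancellation $\sum_{\hat{0}\le G\le H'}\mu_P(\hat{0},G)=0$ (where $H'$ is the largest $\mathcal{H}$-covered graph below $H$) together with $\mu_P(\hat{0},\hat{0})=1$ to show the value is $1$ whenever $H' > \hat{0}$; uniqueness of the multilinear representation then closes the argument. You instead work at the level of coefficients: from the expansion $a_{G'} = \sum_{\bar{E}(S)=E(G')}(-1)^{|S|+1}$ obtained in Proposition~\ref{monomials_graph_cover_functions}, you show that the partial sums $\sum_{\hat{0}<G'\le G} a_{G'}$ telescope to $1$ (via the reindexing $S\leftrightarrow G'$ and the binomial identity applied to $\mathcal{H}_G$, which is nonempty precisely because $G\in\mathcal{C}(\mathcal{H})$), and then observe this is exactly the recursion defining $-\mu_P(\hat{0},\cdot)$ with $\mu_P(\hat{0},\hat{0})=1$. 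An induction on chain length finishes it. Both are legitimate instances of Möbius inversion; the paper's version is a touch shorter because it never needs the explicit coefficient formula, while yours is more self-contained in showing directly \emph{why} the coefficients are Möbius numbers rather than inferring it by matching function values.
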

\end{leftbar}
\begin{proof}
	Let $f$ be the polynomial $f(x_{1,1}, \dots, x_{n,m}) = \sum_{G \in \mathcal{C}(\mathcal{H})} {-\mu_P(\hat{0}, G) \cdot \prod_{(i,j) \in E(G)} x_{i, j}}$, and let $H \subseteq K_{n,m}$ be a graph. Denote by $H'$ the union of all graphs $G \in \mathcal{C}(\mathcal{H})$ such that $G \subseteq H$. We now show that $f$ agrees with $f_{\mathcal{H}}$ on all inputs, and deduce the identity by the uniqueness of the representing polynomial. If $H' = \hat{0}$, then indeed $f(H) = 0$ as required. Otherwise, we have:
	
	\begin{equation*}
		\begin{split}
			f(H) &= \sum_{G \in \mathcal{C}(\mathcal{H})} {-\mu_P(\hat{0}, G) \cdot \mathbbm{1}\{G \subseteq H\}} \\
				&= \sum_{\substack{
						\hat{0} \subset G \subseteq H' \\ 
						G \in \mathcal{C}(\mathcal{H})}} -\mu_P(\hat{0}, G) \\
				&= \sum_{\substack{
						\hat{0} \subseteq G \subseteq H' \\ 
						G \in \mathcal{C}(\mathcal{H})}} -\mu_P(\hat{0}, G) + \mu_P(\hat{0}, \hat{0})
		\end{split}
	\end{equation*}
	
	\noindent And by the definition of the M\"obius function, $\mu_P(\hat{0}, \hat{0}) = 1$ and $\displaystyle\sum_{\substack{
			\hat{0} \subseteq G \subseteq H' \\ 
			G \in \mathcal{C}(\mathcal{H})}} -\mu_P(\hat{0}, G) = 0$
\end{proof}

\defsubsection{The Boolean Bipartite Perfect Matching Polynomial}
\label{perfect_matching_polynomial_subsection}

\defsubsubsection{Matching-Covered and Elementary Graphs}

Let us begin by recalling the definition of the Boolean Bipartite Perfect Matching function:

\begin{leftbar}
	\begin{definition*}
		The Boolean Bipartite Perfect Matching function, $BPM_n: \{0,1\}^{n^2} \rightarrow \{0,1\}$, is defined as follows: \[BPM_n(x_{1,1}, \dots, x_{n,n}) = \begin{cases}
		 	1 & \set{(i,j)}{x_{i,j}=1} \text{ has a Perfect Matching} \\
		 	0 & Otherwise
		 \end{cases}\]
	\end{definition*}
\end{leftbar}

The monotone Boolean function $BPM_n$ represents the \textbf{decision problem} of bipartite perfect matching. Given a bipartite graph $G \subseteq K_{n,n}$, the function outputs $1$ if and only if $G$ contains a bipartite perfect matching. The aforementioned function may also be cast in terms of graph cover functions. In particular, it is a graph cover function for the set $\mathcal{H} = PM(K_{n,n})$. Thus, by Proposition \ref{monomials_graph_cover_functions}, the only monomials that may appear in its multilinear polynomial over the Reals are those corresponding to $\mathcal{H}$-covered graphs. For the particular case where $\mathcal{H} = PM(K_{n,n})$, we introduce the following definition:

\begin{leftbar}
\begin{definition}
	\label{matching_covered}
	Let $G \subseteq K_{n,n}$ be a balanced bipartite graph. $G$ is \textbf{matching-covered} if and only if there exists some $S \subseteq PM(K_{n,n})$ such that $\bar{E}(S) = E(G)$. 
\end{definition}
\end{leftbar}

For simplicity, we introduce some notation. The set of all matching-covered graphs $H \subseteq G$ is denoted by $\mathbf{MC(G)}$. In the same vein, the set of \textit{all} bipartite matching-covered graphs of order $2n$ is denoted $\mathbf{MC_n} \eqdef MC(K_{n,n})$. Lov{\'a}sz and Plummer \cite{plummer1986matching} previously considered a family of graphs called \textit{elementary graphs}, which are closely related to matching-covered graphs. Elementary graphs are simply the \textit{connected components} of matching-covered graphs. Formally:

\begin{leftbar}
\begin{definition}[\cite{plummer1986matching}]
	\label{elementary_graph}
	$G$ is \textbf{elementary} $\Leftrightarrow$ $G$ is a \textit{connected} matching-covered graph. 
\end{definition}
\end{leftbar}

\begin{figure}[h]
	\centering
	\includegraphics[height=4cm]{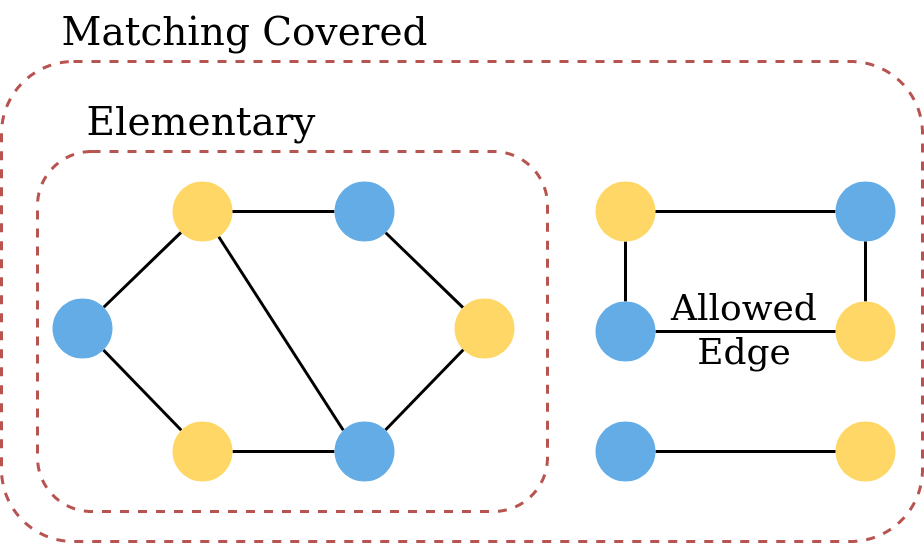}
	\caption{A matching-covered graph, composed of three elementary graphs}
\end{figure}

We recall two key Theorems regarding elementary graphs. The first, due to  Hetyei \cite{hetyei1964rectangular}, provides several necessary and sufficient conditions for elementarity of a given graph. The second, due to Lov{\'a}sz and Plummer \cite{plummer1986matching}, shows that all elementary graphs admit a normal form, called the \textit{bipartite ear decomposition}.

\begin{leftbar}
	\begin{theorem}[\cite{hetyei1964rectangular}]
		\label{hetyei_conditions}
		Let $G=(A \cupdot B, E)$ be a bipartite graph. The following are equivalent:
		\begin{itemize}
			\item $G$ is elementary.
			\item $G$ has exactly two minimum vertex covers, $A$ and $B$.
			\item $|A|=|B|$ and for every $\emptyset \ne X \subset A$, $|N(X)| \ge |X|+1$.
			\item $G=K_2$, or $|V(G)| \ge 4$ and for any $a \in A$, $b \in B$, $G - a - b$ has a perfect matching.
			\item $G$ is connected and every edge is ``\textit{allowed}'', i.e., appears in a perfect matching of $G$.
		\end{itemize}
	\end{theorem}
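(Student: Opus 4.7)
The plan is to prove the five-way equivalence by a cycle of implications, specifically (1) $\Rightarrow$ (5) $\Rightarrow$ (2) $\Rightarrow$ (3) $\Rightarrow$ (4) $\Rightarrow$ (1). Two of the links are essentially immediate. For (1) $\Rightarrow$ (5): the covering family of perfect matchings is exactly a witness that every edge is allowed, and connectedness is given by definition. For (3) $\Rightarrow$ (4): apply Hall's theorem to $G - a - b$; for non-empty $Y \subseteq A \setminus \{a\}$ the strict Hall bound in (3) yields $|N_{G-a-b}(Y)| \ge |N_G(Y)| - 1 \ge |Y|$, producing the required perfect matching, with the $K_2$ case handled separately.

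The crux is (5) $\Rightarrow$ (2). Since every edge lies in a perfect matching, $G$ has one, so $|A| = |B| = n$, and by König's theorem every minimum vertex cover $C$ has size $n$. I would write $A_C = C \cap A$, $B_C = C \cap B$, $X = A \setminus A_C$, $Y = B \setminus B_C$, and observe that the absence of $X$--$Y$ edges forces each perfect matching to send $X$ into $B_C$ and $Y$ into $A_C$. A simple counting argument using $|A_C| + |B_C| = n$ upgrades these inclusions to bijections, so $A_C$ matches \emph{onto} $Y$ in every perfect matching. The key step is then to use condition (5): if any edge between $A_C$ and $B_C$ existed, it would sit in some perfect matching, contradicting the bijection just established. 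Hence $G$ decomposes edge-wise into the two blocks $X \cup B_C$ and $A_C \cup Y$, and connectedness forces one block to be empty, i.e. $C \in \{A, B\}$.

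The remaining links are short. For (2) $\Rightarrow$ (3), any non-empty proper $X \subseteq A$ with $|N(X)| = |X|$ would produce the additional minimum vertex cover $(A \setminus X) \cup N(X)$, distinct from both $A$ and $B$, contradicting (2); the equality $|A| = |B|$ comes from both being minimum covers. For (4) $\Rightarrow$ (1), condition (4) says every edge $(a,b)$ extends via a perfect matching of $G - a - b$ to one of $G$, so $E(G)$ is a union of perfect matchings; connectedness follows by a component-counting argument, since choosing $a$ and $b$ in distinct components would force an unbalanced component in $G - a - b$ and preclude a perfect matching there.

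I expect the main obstacle to be the forced-bijection step inside (5) $\Rightarrow$ (2), since it is where the "every edge lies in a perfect matching" hypothesis and the connectedness hypothesis genuinely interact: dropping either one permits extra minimum covers, so the argument cannot rest on one ingredient in isolation.
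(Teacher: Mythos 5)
The paper states this result as a classical theorem of Hetyei, citing \cite{hetyei1964rectangular} (and \cite{plummer1986matching} for a textbook treatment), and gives no proof of its own; there is therefore nothing internal to compare your argument against, and I can only assess it on its own merits. Your cyclic scheme $(1)\Rightarrow(5)\Rightarrow(2)\Rightarrow(3)\Rightarrow(4)\Rightarrow(1)$ is sound. The crux step $(5)\Rightarrow(2)$ is handled correctly: splitting a minimum cover $C$ into $A_C=C\cap A$, $B_C=C\cap B$ with complements $X,Y$, using $|A_C|+|B_C|=n$ together with the absence of $X$--$Y$ edges to force bijections $X\leftrightarrow B_C$ and $Y\leftrightarrow A_C$ in every perfect matching, then invoking (5) to exclude all $A_C$--$B_C$ edges, so that connectedness collapses $C$ to $A$ or $B$, is exactly the right argument, and you are right that both hypotheses of (5) are genuinely used there.

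Two links are slightly under-specified, though neither is a conceptual gap. In $(2)\Rightarrow(3)$ you only treat $|N(X)|=|X|$; the case $|N(X)|<|X|$ should also be disposed of (it gives a vertex cover of size strictly less than $n$, directly contradicting minimality of $A$), and you should note that $(A\setminus X)\cup N(X)$ is distinct from both $A$ and $B$ because $X$ is nonempty and proper and $N(X)\neq\emptyset$ (else one gets an even smaller cover). In $(4)\Rightarrow(1)$, before you can choose $a$ and $b$ in distinct components you need to know that every component of $G$ meets both $A$ and $B$: first rule out isolated vertices (an isolated $a_1\in A$ makes $G-a-b$ matchless for any $a\neq a_1$, $b\in B$), then deduce $G$ itself has a perfect matching by extending a perfect matching of $G-a-b$ along an edge $(a,b)$, whence every component is balanced. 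Finally, a degenerate edge case lies outside your argument: the edgeless graph on $1+1$ vertices satisfies (3) vacuously but fails (1),(2),(4),(5); this is an imprecision in the stated theorem (one should assume $E(G)\neq\emptyset$), not a flaw in your proof.
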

\end{leftbar}

\begin{leftbar}
\begin{definition}[\cite{plummer1986matching}]
	\label{bipartite_ear_decomp}
	Let $G$ be a balanced bipartite graph. $G$ has a \textbf{bipartite ear decomposition} of length $k$ if it can be written in the form:
	\[ G = e + P_1 + \dots + P_k \]
	
	\noindent Where $e \in E(G)$, and each $P_i$ is an odd-length path, in which any pair of adjacent vertices are from different colour classes (and in particular, so are its endpoints). The vertices appearing in each path $P_i$, other than its two endpoints, are ``\textit{fresh}'' -- i.e., they do not appear in $e + P_1 + \dots + P_{i-1}$. Note that each $P_i$ can also be a single edge connecting two preexisting vertices of different colour classes. 
\end{definition}
\end{leftbar}

\begin{leftbar}
\begin{theorem}[\cite{plummer1986matching}]
	\label{elementary_iff_ear_decomp}
	Let $G$ be a balanced bipartite graph. Then:
	\[ G \text{ is elementary} \iff G \text{ has a bipartite ear decomposition} \]
\end{theorem}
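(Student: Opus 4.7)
My plan is to prove the two directions of the equivalence separately. The $(\Leftarrow)$ direction reduces to a clean induction via Hetyei's characterization, whereas the $(\Rightarrow)$ direction, which requires constructing the ear decomposition explicitly, is the main obstacle.

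\emph{The $(\Leftarrow)$ direction.} I will induct on the number of ears $k$ in $G = e + P_1 + \ldots + P_k$. The base $k = 0$ gives $G = K_2$, which is elementary. For the inductive step, let $G' = e + P_1 + \ldots + P_{k-1}$ be elementary by the inductive hypothesis, and set $G = G' + P_k$, where $P_k$ has odd length $2\ell + 1$, endpoints $u, v \in V(G')$ in opposite color classes, and fresh interior vertices $w_1, \ldots, w_{2\ell}$. I verify Hetyei's condition (e) of Theorem \ref{hetyei_conditions} for $G$. Connectivity is immediate, so it remains to show every edge of $G$ lies in some PM. For $e' \in E(G')$, take a PM $M'$ of $G'$ containing $e'$ (which exists by elementarity of $G'$) and extend it with the matching $\{(w_1, w_2), (w_3, w_4), \ldots, (w_{2\ell-1}, w_{2\ell})\}$ along $P_k$. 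For the edges of $P_k$ itself, produce a complementary family of PMs by instead using $\{(u, w_1), (w_2, w_3), \ldots, (w_{2\ell}, v)\}$ along $P_k$, combined with a PM of $G' - u - v$; such a PM exists by Hetyei's condition (d) applied to elementary $G'$, since $u$ and $v$ lie in opposite color classes. Together, these two families of PMs witness every edge of $G$, so $G$ is elementary.

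\emph{The $(\Rightarrow)$ direction and main obstacle.} I will induct on $|E(G)|$, with base case $G = K_2$. For the inductive step, the goal is to exhibit an elementary proper subgraph $G' \subsetneq G$ together with a bipartite ear $P$ such that $G = G' + P$. By elementarity, $G$ with $|V(G)| \geq 4$ has at least two distinct PMs $M_1, M_2$ (otherwise a unique PM would force $G$ to coincide with a disjoint union of edges, contradicting connectedness), and $M_1 \triangle M_2$ decomposes into $M_1$-alternating cycles; cutting such a cycle appropriately yields a candidate ear $P$. The delicate point, and the main obstacle, is ensuring that the resulting $G' = G - E(P)$ (with the interior vertices of $P$ removed) is \emph{still elementary}, so that the induction can proceed. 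I plan to handle this by selecting $P$ so that its interior vertices have degree exactly $2$ in $G$ (appealing to Hetyei's condition (c), the Hall-type strict expansion inequality, to locate such a peripheral alternating structure whenever it is not directly visible), after which the all-edges-allowed property of $G'$ will follow by restricting the PMs of $G$ back to $G'$ and re-routing through the two complementary matchings of $P$ constructed in the $(\Leftarrow)$ argument above.
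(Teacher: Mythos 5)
This theorem is cited in the paper from Lov\'asz--Plummer's \emph{Matching Theory} without proof, so there is no internal proof to compare against; I evaluate your attempt on its own merits.

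Your $(\Leftarrow)$ direction is correct and clean: the two complementary matchings along the new ear $P_k$, paired respectively with a PM of $G'$ containing a given edge and with a PM of $G'-u-v$, do witness every edge of $G' + P_k$, and the appeal to Hetyei's conditions (d) and (e) is exactly right.

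The $(\Rightarrow)$ direction, however, contains a genuine gap that you acknowledge but do not close, and I do not think your closing strategy works as stated. Two specific problems. First, "select $P$ so that its interior vertices have degree exactly $2$" is not a freely available choice: in graphs of minimum degree at least $3$ (e.g.\ $K_{n,n}$) the only candidate ears are single edges, and then the entire burden shifts to proving that \emph{some} edge $e$ satisfies $G - e$ elementary, which is not obvious (for instance, in the graph $C_6$ plus one chord, most edge deletions destroy elementarity and only the chord works). Second, and more seriously, the "re-routing" step does not go through: if $e' \in E(G')$ and every PM of $G$ containing $e'$ is of the second type (using the path $(u,w_1),(w_2,w_3),\dots,(w_{2\ell},v)$), then restricting to $G'$ gives only a PM of $G'-u-v$ containing $e'$. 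To upgrade this to a PM of $G'$ containing $e'$ you would flip along an alternating $u$--$v$ path, but if $e'$ happens to lie on that path (equivalently, on the same alternating cycle of $M_0 \triangle M_1$ as the ear $P$), the flip removes $e'$. You have not ruled this case out, and ruling it out is essentially as hard as the theorem itself. By contrast, the standard proof (the one behind the citation) goes the opposite direction: it \emph{builds up} from a single edge $e$ chosen inside a fixed PM $M$, maintaining the invariant that the partial graph $G_i$ is \emph{nice} in $G$ (i.e.\ $G - V(G_i)$ has a PM) rather than that some complementary graph is elementary; niceness is a much weaker and more robust invariant, and the next ear is extracted from an $M$-alternating path that exits and re-enters $V(G_i)$. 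If you want a complete proof along your "tear-down" route, you need a substitute for the existence of a removable ear; as written, that existence is assumed rather than established.
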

\end{leftbar}

Using a probabilistic method argument, we observe that the vast majority of balanced bipartite graphs are in fact elementary (and in particular, matching-covered):

\begin{leftbar}
	\begin{proposition}
		\label{number_of_mu_graphs}
		Let $n > 1$. Then:
		
		\[|MC_n| \ge |\set{G \subseteq K_{n,n}}{G \text{ is elementary}}| \ge 2^{n^2} \left(1 - \frac{2 n^4}{2^n}\right) = 2^{n^2}(1 - o_n(1))\]
	\end{proposition}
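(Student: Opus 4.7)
The first inequality follows directly from Definition~\ref{elementary_graph}: every elementary graph is, by definition, connected and matching-covered, so $\{G \subseteq K_{n,n} : G \text{ is elementary}\} \subseteq MC_n$, and the cardinality bound is immediate.

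For the second inequality the plan is a union-bound argument driven by Hetyei's theorem. Sample $G \subseteq K_{n,n}$ uniformly at random (equivalently, include each of the $n^2$ potential edges independently with probability $1/2$) and estimate $\Pr[G \text{ is not elementary}]$ from above. The right reformulation is the third clause of Theorem~\ref{hetyei_conditions}: a balanced bipartite graph on classes $A,B$ with $|A|=|B|=n$ is elementary iff $|N_G(X)| \ge |X|+1$ for every $\emptyset \ne X \subsetneq A$. Negating, non-elementarity is witnessed by a pair $(X,Y)$ with $X \subseteq A$, $Y \subseteq B$, $|X|=|Y|=k$ for some $1 \le k \le n-1$, such that no edge of $G$ connects $X$ to $B \setminus Y$. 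For a fixed pair this ``no edge'' event has probability exactly $2^{-k(n-k)}$, and there are $\binom{n}{k}^2$ such pairs at each size $k$, so
\[
\Pr[G \text{ is not elementary}] \;\le\; \sum_{k=1}^{n-1} \binom{n}{k}^2\, 2^{-k(n-k)}.
\]

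All that remains is the purely numerical step of showing this sum is at most $2n^4/2^n$. The dominant contributions come from the endpoints $k=1$ and $k=n-1$, each giving $n^2 \cdot 2^{-(n-1)} = 2n^2/2^n$. For the interior range $2 \le k \le n-2$, the exponent $k(n-k) \ge 2(n-2)$ grows fast enough that already at $k=2$ (and symmetrically $k=n-2$) the simple estimate $\binom{n}{k}^2 \le n^4/4$ gives a term $\le n^4/2^n$ for $n \ge 2$, and the remaining terms decay geometrically inward. Multiplying the probability bound through by $2^{n^2}$ yields the claimed lower bound on the number of elementary subgraphs. The only real obstacle is the final numerical bookkeeping; the conceptual engine is Hetyei's characterization, which converts the global condition ``connected and matching-covered'' into a Hall-type local deficiency condition perfectly suited to a union bound over pairs $(X, Y)$.
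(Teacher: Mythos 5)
Your proposal is correct in spirit and lands on the same bound, but it takes a genuinely different route from the paper. The paper invokes the \emph{fourth} clause of Theorem~\ref{hetyei_conditions} (``for any $a \in A$, $b \in B$, $G - a - b$ has a perfect matching'') and then nests two union bounds: first over the $n^2$ vertex pairs $(a,b)$, and second over Hall-violating pairs $(S,T)$ with $|S|+|T|=n$ inside each $(n-1)\times(n-1)$ subgraph, yielding $n^2 \cdot (n-1)^2/2^{n-1} \le 2n^4/2^n$. You instead work directly with the \emph{third} clause (``$|N(X)| \ge |X|+1$ for every $\emptyset \ne X \subsetneq A$''), which converts non-elementarity straight into a Hall-type deficiency and allows a \emph{single} union bound over pairs $(X,Y)$ with $|X|=|Y|=k$, giving $\sum_{k=1}^{n-1}\binom{n}{k}^2 2^{-k(n-k)}$. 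This is cleaner and avoids the two-stage structure. What it costs you is that the numerical bookkeeping is no longer a one-liner: the paper's nested bound is loose by design and collapses to $2n^4/2^n$ immediately, whereas your tighter sum still needs to be shown $\le 2n^4/2^n$, and your phrase ``decay geometrically inward'' is not literally true near $k \approx n/2$ (the ratio $t_{k+1}/t_k$ approaches $1$ there) and also does not deal cleanly with small $n$. The inequality you need does hold, but the sketch would need to be replaced by an actual estimate — e.g.\ handle $k \in \{1,2,n-2,n-1\}$ explicitly as you began to do, then bound the interior range $3 \le k \le n-3$ crudely by $\binom{n}{k}^2 2^{-k(n-k)} \le 4^n \cdot 2^{-3(n-3)} = 512/2^{n}$ term-by-term and sum over at most $n$ terms, checking the few small $n$ by hand. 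With that repair your argument is complete and arguably the more elegant of the two.
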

\end{leftbar}
\begin{proof}
	Let $n > 1$ and let $A,B$ be two sets, where $|A|=|B|=n$. Denote by $G(n,n,p)$ the distribution over balanced bipartite graphs of order $2n$, in which each edge appears i.i.d with probability $p$. Recall that by Theorem \ref{hetyei_conditions}, $G=(A \cupdot B, E) \subseteq K_{n,n}$ is elementary if and only if $\forall a \in A$, $\forall b \in B$: $G - a - b$ has a perfect matching. By the union bound:
	
	\begin{equation*}
	\begin{split}
	\Pr_{G \sim G(n,n,0.5)}\left[\text{G is not elementary}\right] &= \Pr_{G \sim G(n,n,0.5)}\left[\exists a \in A,\ b \in B:\ G - a - b\text{ has no perfect matching}\right] \\
	&\le n^2 \cdot \Pr_{G \sim G(n-1,n-1,0.5)}\left[\text{G has no perfect matching}\right]
	\end{split}
	\end{equation*}
	
	By Hall's Theorem, $G$ has a perfect matching if and only if $\forall X \subseteq A$: $|N(X)| \ge |X|$. Thus $G$ has no perfect matching if and only if there exist two sets $S \subseteq A$, $T \subseteq B$ such that $|S| + |T| = n+1$, and none of the edges in $S \times T$ appear in $G$. Using the union bound again:
	
	\begin{equation*}
	\begin{split}
	\Pr_{G \sim G(n,n,0.5)}\left[\text{G has no perfect matching}\right] \le \sum_{k=1}^{n} {n \choose k} {n \choose k-1} 2^{-k(n-k+1)} \le \frac{n^2}{2^n}
	\end{split}
	\end{equation*}
	Thus:
	\begin{equation*}
	\begin{split}
	\Pr_{G \sim G(n,n,0.5)}\left[G \in MC_n\right] \ge \Pr_{G \sim G(n,n,0.5)}\left[\text{G is elementary}\right] \ge 1 - \frac{2n^4}{2^n} \qedhere
	\end{split}
	\end{equation*}
\end{proof}

\defsubsubsection{The Birkhoff Polytope and the Matching-Covered Lattice}

Let $P$ be a polytope. The \textit{face lattice} of $P$ is the lattice whose elements are the faces of $P$, ordered by containment, together with a unique bottom element $\hat{0}$ (i.e., the ``empty face'') and a unique top element $\hat{1}$ (corresponding to the polytope $P$ itself). The aforementioned lattice is \textit{ranked}, and the rank of each face $Q \ne \hat{0}$ is given by $dim(Q) + 1$.

We now recall a particular polytope: the \textbf{Birkhoff polytope}, $B_n$. This polytope is defined as the convex hull of all $n \times n$ permutation matrices. Billera and Sarangarajan proved the following powerful theorem regarding the face lattice of $B_n$:

\begin{leftbar}
	\begin{theorem}[\cite{billera1994combinatorics}]
		\label{elementary_faces_of_birkhoff}
		The face lattice of the Birkhoff polytope $B_n$ is isomorphic to the lattice of all matching-covered graphs of order $2n$, ordered by inclusion, together with the empty graph.
	\end{theorem}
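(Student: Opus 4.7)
The plan is to exhibit an explicit order-preserving bijection $\Phi: \mathrm{MC}_n \cupdot \{\hat{0}\} \to \mathrm{Faces}(B_n)$ between the matching-covered lattice (with appended bottom) and the face lattice of the Birkhoff polytope. For a matching-covered graph $G$, I set $\Phi(G)$ to be the subpolytope
\[
F_G \;=\; \{M \in B_n : M_{i,j} = 0 \text{ for all } (i,j) \notin E(G)\},
\]
and $\Phi(\hat{0})$ to be the empty face. To verify that $F_G$ is actually a face of $B_n$, I would use the linear functional $c^G(M) = \sum_{(i,j) \in E(G)} M_{i,j}$: since each row of $M \in B_n$ sums to $1$, we have $c^G(M) \le n$, with equality exactly when $\mathrm{supp}(M) \subseteq E(G)$. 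So $F_G$ is cut out by the supporting hyperplane $\{c^G = n\}$, and it is non-empty because matching-coveredness of $G$ guarantees that $G$ contains at least one perfect matching $\sigma$, giving $P_\sigma \in F_G$.

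For the inverse direction I would define $\Psi(F)$, for a non-empty face $F$, to be the union of supports of the permutation matrices lying in $F$, i.e.\ $\Psi(F) = \bigcup_{P_\sigma \in F} \{(i,\sigma(i)) : i \in [n]\}$. This is manifestly matching-covered, being a union of perfect matchings. The equality $\Psi(\Phi(G)) = G$ is immediate from the matching-covered hypothesis (every edge of $G$ lies in some perfect matching contained in $G$, hence in some vertex $P_\sigma$ of $F_G$). The real content is in the other direction $\Phi(\Psi(F)) = F$, or equivalently, showing that once we define $G = \Psi(F)$, the face $F_G$ collapses back to $F$. The inclusion $F \subseteq F_G$ is trivial from Birkhoff--von Neumann (every $M \in F$ is a convex combination of its vertex permutations, all supported in $G$). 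The main obstacle is the reverse inclusion: we must show that \emph{every} perfect matching $\tau$ of $G$ — not merely those we started with — satisfies $P_\tau \in F$.

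To overcome this, I would work with a point $M^\star \in \mathrm{relint}(F)$, for instance the barycenter of the vertices of $F$. By construction $\mathrm{supp}(M^\star) = G$, and $M^\star$ is strictly positive on every edge of $G$. Hence for any perfect matching $\tau \subseteq G$ there is a small $\varepsilon > 0$ such that $M^\star - \varepsilon P_\tau$ still has non-negative entries; dividing by $1 - \varepsilon$ gives another doubly stochastic matrix $M'$ with $M^\star = (1-\varepsilon) M' + \varepsilon P_\tau$. Since $M^\star$ lies in the face $F$ and $F$ is closed under extraction of convex summands, both $M'$ and $P_\tau$ must lie in $F$, so $P_\tau$ is a vertex of $F$ as required. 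Combined with the elementary observation that both $\Phi$ and $\Psi$ preserve inclusion (of edge sets, and of faces, respectively), this establishes the desired lattice isomorphism.
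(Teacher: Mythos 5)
The paper does not supply its own proof of this statement: it is quoted as a theorem of Billera and Sarangarajan and used as a black box. Your argument is a correct, self-contained proof of that theorem, and it is essentially the standard one. The face map $G \mapsto F_G$ via the supporting functional $c^G$ is right, the inverse $\Psi$ via union of supports of vertex permutations is right, and the key step — extracting $P_\tau$ from an interior point $M^\star$ of $F$ with full support $G$ by the small-perturbation/convex-decomposition argument to force $P_\tau \in F$ — is exactly how one shows that $F$ recovers \emph{all} perfect matchings of $G$, not just the ones you started with. One small point to make explicit: your last paragraph only places the vertices $P_\tau$ of $F_G$ inside $F$; the inclusion $F_G \subseteq F$ then follows because $F_G$ is the convex hull of those $P_\tau$ (its vertices are a subset of the vertices of $B_n$, hence permutation matrices supported in $G$) and $F$ is convex. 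Also, the fact that the poset maps you exhibit are inverse order-preserving bijections immediately upgrades to a lattice isomorphism once either side is known to be a lattice, which the face poset of a polytope always is — so you do not separately need the paper's Proposition on $\mathcal{C}(\mathcal{H})$ being a lattice. In short: the paper cites, you prove; your proof is correct.
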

\end{leftbar}


A lattice that is isomorphic to the face lattice of a polytope is known as ``Eulerian''. The M\"obius function of an Eulerian lattice satisfies the following identity (see, e.g., \cite{Stanley:2011:ECV:2124415}): $\forall x \le y: \mu(x,y) = (-1)^{rk(y)-rk(x)}$, where $rk(\cdot)$ refers to the rank of elements in the lattice. For the proof of Theorem \ref{bpm_poly}, we only require the \textit{M\"obius numbers} of an Eulerian lattice. Thus, for completeness, we provide a simple proof of the identity regarding the M\"obius numbers of an Eulerian lattice, using the Euler-Poincar\'e Formula:

\begin{leftbar}
	\begin{lemma}
		\label{mobius_of_face_lattice}
		Let $Q$ be a polytope and denote by $F(Q)$ the set of all faces of $Q$. Let $\mathcal{P}=(F(Q) \cupdot \{\hat{0}\}, \le)$ be the face lattice of $Q$. The M\"obius numbers of $\mathcal{P}$ satisfy:
		\[ \forall x \in (F(Q) \cupdot \{\hat{0}\}):\ \mu_P(\hat{0},x) = (-1)^{rk(x)} \]
	\end{lemma}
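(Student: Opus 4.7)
The plan is to proceed by induction on the rank of $x$ in the face lattice $\mathcal{P}$, with the key combinatorial input being the Euler-Poincar\'e formula applied to the subpolytope determined by $x$.

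For the base case $x = \hat{0}$, the definition of the M\"obius function directly gives $\mu_P(\hat{0}, \hat{0}) = 1 = (-1)^0 = (-1)^{rk(\hat{0})}$. For the inductive step, fix $x \in F(Q)$ with $\dim(x) = d$, so $rk(x) = d+1$. The interval $[\hat{0}, x)$ in $\mathcal{P}$ consists of $\hat{0}$ together with all proper faces of the polytope $x$, since a face of $x$ is the same as a face of $Q$ contained in $x$. By the induction hypothesis, $\mu_P(\hat{0}, z) = (-1)^{rk(z)} = (-1)^{\dim(z)+1}$ for every $z$ with $\hat{0} < z < x$.

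The recursive definition of $\mu_P$ then yields
\begin{equation*}
\mu_P(\hat{0}, x) \;=\; -\!\!\sum_{\hat{0} \le z < x}\!\!\mu_P(\hat{0}, z) \;=\; -\left(1 + \sum_{i=0}^{d-1} (-1)^{i+1} f_i(x)\right) \;=\; -1 + \sum_{i=0}^{d-1} (-1)^{i} f_i(x),
\end{equation*}
where $f_i(x)$ denotes the number of $i$-dimensional faces of the polytope $x$. The Euler-Poincar\'e formula for a $d$-dimensional polytope states $\sum_{i=0}^{d-1} (-1)^{i} f_i(x) = 1 - (-1)^d$. Substituting gives $\mu_P(\hat{0}, x) = -1 + 1 - (-1)^d = (-1)^{d+1} = (-1)^{rk(x)}$, completing the induction.

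The only non-routine ingredient is the Euler-Poincar\'e identity itself, which is a classical fact about polytopes that we invoke as a black box; once it is in hand, the calculation is essentially forced by the structure of the face lattice. The mild bookkeeping point to be careful about is the shift between dimension and rank (with $\hat{0}$ playing the role of the $(-1)$-dimensional empty face), so that the powers of $-1$ line up correctly in the summation.
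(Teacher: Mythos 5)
Your proof is correct and follows essentially the same route as the paper: induction on rank, using the observation that the down-set of $x$ in the face lattice of $Q$ is exactly the face lattice of the polytope $x$, and then invoking the Euler--Poincar\'e formula for $x$ to collapse the sum. The paper is just a bit more explicit about the sublattice identification $\mathcal{P}_x \subseteq \mathcal{P}$ and the compatibility of ranks; the algebra is otherwise the same up to reindexing.
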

\end{leftbar}
\begin{proof}
	Recall that every face of a polytope is also a polytope. Thus, for any face $x \in F(Q)$, we denote its face lattice by $\mathcal{P}_x$. The lattice $\mathcal{P}_x$ consists of all faces $y \in F(Q)$ where $y \le_P x$, thus $\mathcal{P}_x$ is a sub-lattice of $\mathcal{P}$ and $\mu_{\mathcal{P}}(\hat{0}, x) = \mu_{\mathcal{P}_x}(\hat{0}, x)$. By the definition of the face lattice, the rank of any face $y \in F(x)$ in $\mathcal{P}_x$ is given by $rk(x) = dim(x) + 1$, and thus agrees with its rank in $\mathcal{P}$. Consequently, we denote the rank of any face by $rk(\cdot)$.
	
	The proof proceeds by induction. If $x = \hat{0}$, the equality follows from the definition of the M\"obius function. Otherwise, let $x \in F(Q)$, where $k \eqdef rk(x) \ge 1$. By the definition of the M\"obius function and using the induction hypothesis:
	
	\[ \mu_{\mathcal{P}_x}(\hat{0}, x) =  -\sum_{\substack{y \in F(x) \cupdot \{\hat{0}\} \\ y \neq x}} \mu_{\mathcal{P}_x}(\hat{0}, y) = -\sum_{\substack{y \in F(x) \cupdot \{\hat{0}\} \\ y \neq x}} (-1)^{rk(y)} \]
	
	Since $x$ is a Polytope of dimension $k-1$,
	then by the Euler-Poincar\'e Formula for Polytopes (see, e.g., \cite{grunbaum2013convex}) we have:
	\begin{equation*}
	\begin{split}
	1 &= \sum_{j = 0}^{k-1} (-1)^j |\set{y \in F(x)}{dim(y) = j}| \\
	&= \sum_{x \ne y \in F(x)} (-1)^{rk(y)-1} + (-1)^{k-1} \\
	&= -\sum_{\substack{y \in F(x) \cupdot \{\hat{0}\} \\ y \neq x}} (-1)^{rk(y)} + (-1)^{k-1} + 1 = \mu_{\mathcal{P}_x}(\hat{0}, x) + (-1)^{k-1} + 1 \qedhere
	\end{split}
	\end{equation*}
	
\end{proof}

\defsubsubsection{Ranks in the Matching-Covered Lattice}

By Theorem \ref{elementary_faces_of_birkhoff}, the matching-covered lattice $\mathcal{P} = (MC_n \cupdot \{0\}, \subseteq)$ is isomorphic to the face lattice of the Birkhoff polytope, $B_n$. Thus, the lattice is \textit{ranked}, and its M\"obius numbers are given by $\mu_P(\hat{0}, x) = (-1)^{rk(x)}$. The following lemmas allow us to compute the rank of each matching-covered graph $G \in MC_n$, using its \textit{cyclomatic number} $\chi(G)$. 

\begin{leftbar}
	\begin{lemma}
		\label{order_of_subgraphs_of_elementary}
		Let $G$ be an elementary graph. The following inequality holds:
		\[ \forall G \ne H \in MC(G):\ \chi(H) < \chi(G) \]
		
	\end{lemma}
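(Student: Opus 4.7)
My plan is to reinterpret $\chi$ as the dimension of the cycle space and then extract a witnessing cycle from an alternating-cycle argument using two different perfect matchings.

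First, I would recall the standard identity $\chi(F) = \dim Z_1(F)$ for any graph $F$, where $Z_1(F)$ denotes the cycle space (say, over $\mathbb{F}_2$). Since $V(H) = V(G)$ and $E(H) \subseteq E(G)$, every element of $Z_1(H)$ is also an element of $Z_1(G)$, so $Z_1(H) \subseteq Z_1(G)$ and hence $\chi(H) \le \chi(G)$ is automatic. The whole content of the lemma is therefore to produce a single cycle $C \in Z_1(G) \setminus Z_1(H)$.

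To build such a $C$, I would exploit both matching-covered hypotheses at once. Since $H \subsetneq G$, I pick any edge $e \in E(G) \setminus E(H)$. Because $G$ is elementary, Theorem~\ref{hetyei_conditions} guarantees that every edge of $G$ is allowed, so there exists $M \in PM(G)$ with $e \in M$. Because $H$ is matching-covered, there also exists $M' \in PM(H) \subseteq PM(G)$. Now I consider the symmetric difference $M \mathbin{\triangle} M'$: since each vertex has degree exactly $1$ in each of $M$ and $M'$, it has degree $0$ or $2$ in $M \mathbin{\triangle} M'$, so this symmetric difference is a disjoint union of even cycles in $G$ that alternate between $M$- and $M'$-edges. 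As $e \in M \setminus M'$, the edge $e$ sits in one of those alternating cycles, which I call $C$.

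Finally, I would close the argument by observing that all edges of $C$ lie in $M \cup M' \subseteq E(G)$, so $C \in Z_1(G)$; but $e \in C$ with $e \notin E(H)$, so $C$ cannot be a cycle of $H$, i.e.\ $C \notin Z_1(H)$. Therefore $Z_1(H) \subsetneq Z_1(G)$, which translates back to the desired strict inequality $\chi(H) < \chi(G)$. The only conceptual hurdle is the realization that the two matching-covered hypotheses can be combined in opposite roles — one supplies a perfect matching through the missing edge, the other supplies one that stays inside $H$ — and that their symmetric difference automatically yields a cycle straddling the two; once this is seen, no casework or calculation is needed.
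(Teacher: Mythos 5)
Your proof is correct, and it takes a genuinely different route from the paper's. The paper argues by direct edge/component counting: it splits into cases on whether $H$ is connected, and in the disconnected case shows that every connected component of $H$ must be incident to at least two edges of $E(G)\setminus E(H)$ (since $G$ is elementary, an allowed edge that is the unique bridge into a component would unbalance it), whence $|E(H)|+|C(H)|\le|E(G)|$ and the bound follows from $\chi(G)=|E(G)|-|V(G)|+1$. Your argument instead works linear-algebraically in the cycle space: the inclusion $Z_1(H)\subseteq Z_1(G)$ inside $\mathbb{F}_2^{E(G)}$ gives $\chi(H)\le\chi(G)$ for free, and the strictness comes from exhibiting a witness cycle through an edge $e\in E(G)\setminus E(H)$, obtained as the alternating cycle of $M\,\triangle\,M'$ where $M\in PM(G)$ uses $e$ (via Theorem~\ref{hetyei_conditions}, since $G$ elementary makes every edge allowed) and $M'\in PM(H)$ (which exists because $H$ is matching-covered and nonempty). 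What the two approaches buy: yours avoids casework and the component-incidence count entirely, replacing them with a single symmetric-difference construction, and it arguably explains \emph{why} $\chi$ drops (you lose a genuinely new cycle); the paper's proof is more hands-on, staying entirely at the level of the formula $\chi=|E|-|V|+|C|$, and it transparently extends to the additive argument used in Corollary~\ref{order_of_subgraphs_of_mu}. Both are correct; your variant is the cleaner of the two for this particular lemma.
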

\end{leftbar}
\begin{proof}
	Let $G$ be an elementary graph and let $G \ne H \in MC(G)$. If $H$ is elementary, then $|E(H)| < |E(G)|$ and $|C(H)| = |C(G)| = 1$, thus $\chi(H) < \chi(G)$, as required. 
	
	Otherwise, observe that the connected components of $H$ are joined by edges in $G$, since $G$ is elementary and thus connected. Furthermore, we claim that every component of $H$ must be adjacent to at least 2 edges in $G$, one connected to a left vertex of the component, and another to a right vertex. Assume toward a contradiction that this is not the case, then there exists a component $C \in C(H)$ which is only adjacent to a single edge $e \in E(G)$. Since $G$ is elementary, there exists some perfect matching involving $e$ (i.e., the edge $e$ is allowed). However, upon selecting the edge $e$, the component $C$ becomes \textit{unbalanced}, and therefore the perfect matching cannot be extended over $C$, a contradiction.
	
	Thus, since each component of $G$ has at least two adjacent edges in $G$, we have that $|E(H)| + |C(H)| \le |E(G)|$ (i.e., if the adjacent edges form a cycle over the components $C(H)$). Thus:
	\begin{equation*}
		\begin{split}
			\chi(H) &= |E(H)| - |V(H)| + |C(H)| \le |E(G)| - |V(G|| < \chi(G) \qedhere 
		\end{split}
	\end{equation*}
\end{proof}
\begin{leftbar}
	\begin{corollary}
		\label{order_of_subgraphs_of_mu}
		Let $G \in MC_n$. The following inequality holds:
		\[ \forall G \ne H \in MC(G):\ \chi(H) < \chi(G) \]
	\end{corollary}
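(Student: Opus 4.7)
The plan is to reduce the corollary directly to the elementary case established in Lemma \ref{order_of_subgraphs_of_elementary} by decomposing $G$ into its connected components. Since $G \in MC_n$ is matching-covered, each connected component $G_i$ of $G$ (for $i = 1, \dots, k$) is by construction a connected matching-covered graph, and hence elementary by Definition \ref{elementary_graph}. Given any $H \in MC(G)$ with $H \ne G$, I would define $H_i$ to be the subgraph of $H$ on the vertex set $V(G_i)$ consisting of all edges of $H$ that lie inside $G_i$, i.e., $E(H_i) = E(H) \cap E(G_i)$. The natural strategy is then to analyse each $H_i$ within its elementary ambient graph $G_i$, apply the lemma componentwise, and finally reassemble the inequality via the additivity of the cyclomatic number.

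The key step is to verify that each $H_i$ is itself a non-empty matching-covered subgraph of $G_i$. Since $H$ is matching-covered, write $E(H) = \bar{E}(S)$ for some non-empty $S \subseteq PM(K_{n,n})$. For any $M \in S$ we have $E(M) \subseteq E(G)$, so every edge of $M$ lies inside some component $G_i$ of $G$; moreover, since $M$ is a perfect matching of $K_{n,n}$, any vertex $v \in V(G_i)$ must be matched to another vertex in $V(G_i)$, forcing $M_i \vcentcolon= M \cap E(G_i)$ to be a perfect matching of the balanced bipartite graph on $V(G_i)$. Consequently $E(H_i) = \bigcup_{M \in S} E(M_i)$ exhibits $H_i$ as a non-empty union of perfect matchings of $V(G_i)$, so $H_i \in MC(G_i)$ in the sense required by the lemma.

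To conclude, I would exploit the additivity of the cyclomatic number: since the $V(G_i)$ partition $V(G)$ and the $E(H_i)$ partition $E(H)$, a direct computation from Definition \ref{cyclomatic_number} gives $\chi(G) = \sum_{i=1}^{k} \chi(G_i)$ and $\chi(H) = \sum_{i=1}^{k} \chi(H_i)$. Applying Lemma \ref{order_of_subgraphs_of_elementary} to each elementary $G_i$ yields $\chi(H_i) \le \chi(G_i)$, with strict inequality whenever $H_i \ne G_i$; and since $H \ne G$, at least one index must satisfy $H_i \ne G_i$, so summing produces the desired strict inequality $\chi(H) < \chi(G)$. I do not anticipate a serious obstacle here: the proof is essentially a component-wise invocation of the lemma, and the only mildly subtle point, that each perfect matching in the witnessing set $S$ restricts to a perfect matching on every component of $G$, is an immediate consequence of the connectivity of those components.
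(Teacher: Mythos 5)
Your proof is correct and takes essentially the same route as the paper's: decompose $G$ into its (elementary) connected components, restrict $H$ componentwise, apply Lemma~\ref{order_of_subgraphs_of_elementary} to each pair that differs, and sum using the additivity of $\chi$. The paper's version splits first on whether $H$ is elementary and is considerably terser, leaving implicit the point you spell out carefully (that each perfect matching witnessing $H$ restricts to a perfect matching of each component of $G$, so each $H_i$ is a nonempty matching-covered subgraph of $G_i$); your uniform decomposition handles both of the paper's cases at once and is arguably the cleaner presentation of the same idea.
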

\end{leftbar}
\begin{proof}
	If $H$ is elementary, the proof follows from Lemma \ref{order_of_subgraphs_of_elementary}. Otherwise, the proof follows from the additivity of $\chi$, by applying Lemma \ref{order_of_subgraphs_of_elementary} to each connected component in which $G$ and $H$ differ.
\end{proof}
\begin{leftbar}
	\begin{lemma}
		\label{rank_of_mu_graph}
		Let $G \in MC_n$, $G \notin PM(K_{n,n})$. Then there exists $H \in MC(G)$ such that:
		
		\[\chi(H) = \chi(G) - 1\]
	\end{lemma}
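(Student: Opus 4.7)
The plan is to lift the graph-theoretic statement to a polytope-theoretic one via the Billera--Sarangarajan isomorphism (Theorem \ref{elementary_faces_of_birkhoff}), and then invoke the standard fact that every polytope of positive dimension admits a facet. Write $F_G$ for the face of $B_n$ corresponding to $G \in MC_n$ under this isomorphism; concretely, $F_G$ is the subpolytope of $B_n$ consisting of doubly-stochastic matrices whose support lies in $E(G)$.

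First I would establish the identification $\dim(F_G) = \chi(G)$. The affine hull of $F_G$ is the subspace of $\mathbb{R}^{E(G)}$ cut out by the $2n$ row- and column-sum equations. A standard linear-algebra computation shows that the bipartite vertex--edge incidence matrix of $G$ has rank $2n - |C(G)|$ over $\mathbb{R}$: each connected bipartite component of $G$ contributes exactly one linear dependency (the equation saying the sum of its $A$-side incidence equations equals the sum of its $B$-side incidence equations), and there are no others. Hence the affine hull has dimension $|E(G)| - (2n - |C(G)|) = \chi(G)$, and $F_G$ is full-dimensional inside it because the perfect matchings $M \subseteq G$ are vertices of $F_G$ and their convex hull spans the entire affine hull.

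Next, since $G \notin PM(K_{n,n})$, the face $F_G$ is not a vertex of $B_n$, so $\dim(F_G) = \chi(G) \ge 1$. Any polytope of dimension at least $1$ possesses a facet, i.e.\ a subface of dimension one less, so pick any facet $F' \subsetneq F_G$. Running the isomorphism of Theorem \ref{elementary_faces_of_birkhoff} in reverse yields a graph $H \in MC_n$ with $F_H = F'$, and the strict face containment $F' \subsetneq F_G$ translates under the order-preserving isomorphism into the strict edge containment $E(H) \subsetneq E(G)$; in particular $H \in MC(G)$ and $H \ne G$. Applying the dimension identification of the previous step to $H$ then gives $\chi(H) = \dim(F_H) = \dim(F_G) - 1 = \chi(G) - 1$, as required.

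The main obstacle is the dimension computation of the first step: one must correctly unpack the Billera--Sarangarajan isomorphism to identify the face $F_G$ with the polytope of doubly-stochastic matrices supported on $E(G)$, and then verify that the bipartite incidence matrix has the claimed rank $2n - |C(G)|$ (each component being bipartite contributes one dependency and only one). A purely graph-theoretic proof via ear decompositions is also conceivable -- if the last ear of some bipartite ear decomposition of an elementary component is a single edge, removing that edge drops $\chi$ by exactly $1$ -- but it runs into the edge-critical case (e.g.\ $G = C_{2k}$) where no single edge of an elementary component can be removed while preserving matching-coveredness, and handling that case appears to require a separate argument (essentially, reducing the offending component to a perfect matching on its vertex set when its cyclomatic number is $1$) that is considerably less uniform than the polytope approach above.
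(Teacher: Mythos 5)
Your proposal takes a genuinely different route from the paper. The paper proves this lemma by a direct graph-theoretic construction on bipartite ear decompositions: take any elementary component $C \ne K_2$ of $G$, write $C = e + P_1 + \cdots + P_k$, and either remove the last ear $P_k$ if it is a single edge, or else remove it and replace it with the perfect matching $(v_2,u_1),\ldots,(v_t,u_{t-1})$ on its fresh internal vertices; in either case the resulting $H \in MC_n$ has $\chi(H) = \chi(G)-1$. You instead route through the Birkhoff polytope: establish $\dim(F_G) = \chi(G)$, observe $\dim(F_G)\ge 1$ since $G$ is not a perfect matching, and pick a facet. Note that this amounts to proving the stronger Corollary~\ref{rank_in_mu_lattice} directly, whereas the paper derives that corollary from the present lemma together with Corollary~\ref{order_of_subgraphs_of_mu}; your argument would subsume both.

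The weak point is the dimension identification, and specifically the claim that $F_G$ has the same affine hull as the subspace cut out by the row- and column-sum equations. Your justification --- ``$F_G$ is full-dimensional inside it because the perfect matchings are vertices of $F_G$ and their convex hull spans the entire affine hull'' --- is circular: that the perfect matchings affinely span the constraint subspace is exactly what must be shown. The fix is a relative-interior argument: since $G$ is matching-covered, every edge of $G$ lies in some perfect matching, so the average $D^\ast = \frac{1}{|PM(G)|}\sum_{M\in PM(G)} M$ has all coordinates strictly positive on $E(G)$; hence $D^\ast$ is a relative-interior point of $F_G$ within the constraint subspace, forcing the two affine hulls to coincide. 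With this filled in, your argument is sound. (Equivalently, one can simply cite the explicit dimension formula from Billera--Sarangarajan, which the paper's statement of Theorem~\ref{elementary_faces_of_birkhoff} does not include but which appears in that reference.)

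Your concern at the end about the ear-decomposition approach being non-uniform is not borne out: the case $G = C_{2k}$ needs no separate ad hoc treatment. The paper's construction covers the single-edge ear and the longer ear with the same two-line case analysis, and for a longer ear the replacement of the ear's edges by a perfect matching on its fresh interior vertices is exactly the reduction you describe (``reducing the offending component to a perfect matching''); applied to $C_{2k}$ it collapses the cycle to a perfect matching on its $2k$ vertices and drops $\chi$ by precisely $1$.
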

\end{leftbar}
\begin{proof}
	Let $G \in MC_n$, $G \notin PM(K_{n,n})$. Since $G$ is not a perfect matching, there exists a component $C \in C(G)$ such that $C \ne K_2$. $C$ is elementary, and therefore there exists a bipartite ear decomposition: $C = e + P_1 + \dots + P_k$. Let $C' = e + P_1 + \dots P_{k-1}$, and observe that since $C'$ has a bipartite ear decomposition, it too is elementary. 
	
	If $P_k$ is a single edge, then we construct $H$ by taking $G$, and replacing the component $C$ with $C'$. Observe that $H \in MC_n$, and furthermore $|C(H)| = |C(G)|$ and $|E(H)| = |E(G)| - 1$. Thus $\chi(H) = \chi(G) - 1$, as required.
	
	Otherwise, $P_k$ is an ear $(v_1, u_1, \dots, v_t, u_t)$. In this case, we construct $H$ by taking $G$, replacing $C$ with $C'$, and replacing the ear $P_k$ with the edges $(v_2, u_1), \dots, (v_t, u_{t-1})$. Once again, $H \in MC_n$ (since all its components are elementary). Furthermore $|C(H)| = |C(G)| + t - 1$ and $|E(H)| = |E(G)| - t$, and thus $\chi(H) = \chi(G) - 1$.
\end{proof}

Thus, combining Corollary \ref{order_of_subgraphs_of_mu} and Lemma \ref{rank_of_mu_graph}, we find that:
\begin{leftbar}
	\begin{corollary}
		\label{rank_in_mu_lattice}
		Let $\mathcal{P} = (MC_n \cup \{\hat{0}\}, \subseteq)$ be the lattice of matching-covered graphs. Then:
		\[ \forall \hat{0} \ne G \in MC_n:\ \ rk(G) = \chi(G)+1 \]
	\end{corollary}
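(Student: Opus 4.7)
The plan is to leverage the isomorphism with the Birkhoff polytope's face lattice (Theorem \ref{elementary_faces_of_birkhoff}) to conclude that $\mathcal{P}$ is ranked, so that every maximal chain from $\hat{0}$ to $G$ has the same length $rk(G)$. It then suffices to exhibit one chain from $\hat{0}$ to $G$ of length $\chi(G)+1$, and to bound every chain above by $\chi(G)+1$. The two preceding results (Lemma \ref{rank_of_mu_graph} and Corollary \ref{order_of_subgraphs_of_mu}) are exactly tailored to provide these two bounds.

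First I would identify the atoms of $\mathcal{P}$. Any non-empty $H \in MC_n$ contains some perfect matching $M \in PM(K_{n,n})$, and $M$ itself is matching-covered (take $S = \{M\}$); conversely, a perfect matching has exactly $n$ edges and cannot strictly contain a non-empty matching-covered graph, since the latter would have to contain a perfect matching of its own. Hence the atoms of $\mathcal{P}$ are precisely the elements of $PM(K_{n,n})$, each with $\chi = n - 2n + n = 0$. Next, I would construct a saturated chain from $\hat{0}$ to $G$ of length $\chi(G)+1$ by iterating Lemma \ref{rank_of_mu_graph}: setting $G_k = G$ with $k = \chi(G)$, as long as $G_i$ is not itself a perfect matching, the lemma yields some $G_{i-1} \in MC(G_i)$ with $\chi(G_{i-1}) = i-1$. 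This terminates at a perfect matching $G_0$ after exactly $\chi(G)$ steps, and prepending $\hat{0}$ gives a chain in $\mathcal{P}$ of length $\chi(G)+1$.

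For the matching upper bound, consider any chain $\hat{0} < H_0 < H_1 < \dots < H_\ell = G$ in $\mathcal{P}$. The minimal element $H_0$ must be an atom, hence a perfect matching, so $\chi(H_0) = 0$. By Corollary \ref{order_of_subgraphs_of_mu}, $\chi$ is strictly increasing along strict subgraph chains in $MC_n$, giving
\[
0 = \chi(H_0) < \chi(H_1) < \dots < \chi(H_\ell) = \chi(G),
\]
so $\ell \le \chi(G)$, and the chain length is at most $\chi(G)+1$. Combined with the chain constructed above, this pins down $rk(G) = \chi(G)+1$.

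The bulk of the work is already done by Lemma \ref{rank_of_mu_graph} and Corollary \ref{order_of_subgraphs_of_mu}; the only subtle point is that one must explicitly use that $\mathcal{P}$ is ranked (so every maximal chain has length $rk(G)$), which is where Theorem \ref{elementary_faces_of_birkhoff} enters. No further structural analysis is needed.
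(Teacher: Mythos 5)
Your proposal is correct and fills in exactly the details the paper leaves implicit when it states the corollary as following by ``combining Corollary \ref{order_of_subgraphs_of_mu} and Lemma \ref{rank_of_mu_graph}'': one uses the Eulerian (hence ranked) structure from Theorem \ref{elementary_faces_of_birkhoff}, builds a chain of length $\chi(G)+1$ from $\hat{0}$ up to $G$ by iterating Lemma \ref{rank_of_mu_graph}, and caps all chains at that length via the strict monotonicity of $\chi$ from Corollary \ref{order_of_subgraphs_of_mu}. One small inaccuracy: for an arbitrary (not necessarily saturated) chain $\hat{0} < H_0 < \dots < H_\ell = G$, the element $H_0$ need not be an atom (e.g., $\hat{0} < K_{n,n}$), so $\chi(H_0) = 0$ need not hold; but the conclusion $\ell \le \chi(G)$ goes through unchanged because $\chi(H_0) \ge 0$ always and the strict increase of integer-valued $\chi$ along the chain already forces $\chi(G) \ge \ell$.
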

\end{leftbar}

\defsubsubsection{Completing the Proof of Theorem 1}

We are now ready to prove the main theorem for this section:

\begin{leftbar}
	\begin{customthm}{1}
	The unique multilinear polynomial representing $BPM_n$ over the Reals is:
		\[ BPM_{n}(x_{1,1}, \dots, x_{n,n}) = \sum_{G \in MC_n} {(-1)^{\chi(G)} \cdot \prod_{(i,j) \in E(G)} x_{i, j}}\]
	\end{customthm}
\end{leftbar}
\begin{proof}
	Let $\mathcal{P}=(MC_n \cup \{\hat{0}\}, \subseteq)$ be the lattice of matching-covered graphs, and let $B_n$ be the Birkhoff Polytope. Since $BPM_n$ is a graph cover function for the set $PM(K_{n,n})$, then by Proposition \ref{graph_cover_polynomial_mobius} we have:
	
	\[ BPM_n(x_{1,1}, \dots, x_{n,n}) = \sum_{G \in MC_n} -\mu_P(\hat{0}, G) \cdot \prod_{(i,j) \in E(G)} x_{i, j} \]
	
	By Theorem \ref{elementary_faces_of_birkhoff}, $\mathcal{P}$ is isomorphic to the face lattice of $B_n$, and thus by Corollary \ref{rank_in_mu_lattice} and Lemma \ref{mobius_of_face_lattice}, we get:
	
	\[ \forall G \in MC_n:\ \mu_P(\hat{0}, G) = (-1)^{rk(G)} = (-1)^{\chi(G)+1} \qedhere \]
\end{proof}

\defsubsection{Another Technique for Evasiveness?}

The proof regarding the multilinear polynomial of $BPM_n$ could, perhaps, be viewed as another ``technique'' for evasiveness. Given a (not necessarily bipartite) graph cover function whose corresponding lattice is isomorphic to the face lattice of some polytope, we can conclude that the function has full polynomial degree over the Reals, and is thus evasive. In fact, such functions also exhibit full polynomial degree over $\mathbb{F}_2$, and are therefore evasive even for $XOR$ decision trees. Nevertheless, we are presently only aware of two such functions exhibiting an isomorphism between their lattice and the face lattice of a polytope -- the first being that of bipartite perfect matching and the Birkhoff polytope, and the second being the $OR_n$ function and the n-dimensional Hypercube.

Previously, Kahn, Saks and Sturtevant \cite{kahn1984topological} showed a topological approach for evasiveness of monotone graph properties. Given a monotone graph property $\mathcal{P}$, their technique considers the abstract simplicial complex formed by all sets in the complement of $\mathcal{P}$, and shows that if the aforementioned complex is not contractible, then the property is evasive.

These two techniques are incomparable. While the \cite{kahn1984topological} technique is much more widely applicable, it does not imply that monotone graph properties exhibit full polynomial degree, neither over $\mathbb{F}_2$ nor over the Reals (and indeed, many do not). Nevertheless, our approach for evasiveness appears useful only in cases where the M\"obius numbers of the corresponding lattice are ``easy'' to compute, e.g. when the lattice is isomorphic to the face lattice of a polytope. Therefore, this technique appears rather limited.
\defsubsection{Corollaries of Theorem 1}
\label{comb_corollaries_subsection}

\begin{leftbar}
	\begin{corollary}
		\label{num_terms_bpm}
		The number of monomials in $BPM_n$ is at least $2^{n^2} \left(1 - \frac{2 n^4}{2^n}\right)$.
	\end{corollary}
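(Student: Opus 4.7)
The plan is to observe that the statement follows almost immediately by combining Theorem~\ref{bpm_poly} with Proposition~\ref{number_of_mu_graphs}. By Theorem~\ref{bpm_poly}, the coefficient $a_G$ of a subgraph $G \subseteq K_{n,n}$ in the multilinear expansion of $BPM_n$ equals $(-1)^{\chi(G)}$ when $G$ is matching-covered, and $0$ otherwise. In particular, every matching-covered graph contributes a nonzero coefficient (indeed, $\pm 1$), so the set of monomials with nonzero coefficients is in bijection with $MC_n$:
\begin{equation*}
    |mon(BPM_n)| \;=\; |MC_n|.
\end{equation*}

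The second step is then to invoke the lower bound on $|MC_n|$ already established in Proposition~\ref{number_of_mu_graphs}, which was proved by a probabilistic argument using Hall's condition and Hetyei's characterization of elementary graphs (Theorem~\ref{hetyei_conditions}): a uniformly random graph $G \sim G(n,n,1/2)$ is elementary except with probability at most $2n^4/2^n$, and every elementary graph is matching-covered. Chaining the two facts yields
\begin{equation*}
    |mon(BPM_n)| \;=\; |MC_n| \;\ge\; 2^{n^2}\!\left(1 - \frac{2n^4}{2^n}\right).
\end{equation*}

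There is no real obstacle here: the nontrivial content, namely the identification of the coefficients and the counting estimate on $|MC_n|$, has already been carried out. The only thing to flag in the write-up is the key point that in Theorem~\ref{bpm_poly} every matching-covered $G$ has coefficient exactly $\pm 1$, so there is no cancellation to worry about in passing from ``matching-covered graphs'' to ``monomials with nonzero coefficients.''
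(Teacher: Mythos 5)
Your proof is correct and matches the paper's approach exactly: the paper also derives the bound directly by combining Theorem~\ref{bpm_poly} (which shows $|mon(BPM_n)| = |MC_n|$ since every matching-covered graph has coefficient $\pm 1$) with the lower bound on $|MC_n|$ from Proposition~\ref{number_of_mu_graphs}. Your remark that no cancellation can occur is a good point to make explicit, but the substance is identical.
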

\end{leftbar}
\begin{proof}
	The bound follows immediately from Proposition \ref{number_of_mu_graphs} and Theorem \ref{bpm_poly}.
\end{proof}

\begin{leftbar}
\begin{corollary}
	\label{fulldeg}
	The degree of $BPM_n$ over the Reals, as well as over $\mathbb{F}_2$, is $n^2$.
\end{corollary}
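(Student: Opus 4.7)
The plan is to read both degree bounds off the coefficient of the top-degree monomial in Theorem~\ref{bpm_poly}. Since $BPM_n$ is multilinear in $n^2$ variables, any polynomial representation has degree at most $n^2$, so it suffices to show that the coefficient of $\prod_{(i,j) \in E(K_{n,n})} x_{i,j}$ is nonzero both in the Real representation and in the $\mathbb{F}_2$ representation.

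For the Real case, I would observe that $K_{n,n}$ is trivially matching-covered, since it is literally the union of all $n!$ perfect matchings on its vertex set. By Theorem~\ref{bpm_poly}, the coefficient of the top monomial therefore equals $(-1)^{\chi(K_{n,n})}$. A one-line computation from Definition~\ref{cyclomatic_number} gives
\[
\chi(K_{n,n}) \;=\; |E(K_{n,n})| - |V(K_{n,n})| + |C(K_{n,n})| \;=\; n^2 - 2n + 1 \;=\; (n-1)^2,
\]
so the coefficient is $\pm 1 \neq 0$, which pins the Real degree at exactly $n^2$.

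For the $\mathbb{F}_2$ case, I would invoke the standard fact that the unique multilinear representation of a Boolean function $f: \{0,1\}^m \to \{0,1\}$ over $\mathbb{F}_2$ is obtained by taking its unique multilinear representation over $\mathbb{R}$ (which by Möbius inversion on the Boolean lattice has integer coefficients $a_S = \sum_{T \subseteq S}(-1)^{|S|-|T|}f(T)$) and reducing those coefficients modulo $2$. In our setting every coefficient produced by Theorem~\ref{bpm_poly} is in $\{-1,0,+1\}$, and the top coefficient is $\pm 1 \equiv 1 \pmod 2$, so it remains nonzero over $\mathbb{F}_2$ and the $\mathbb{F}_2$ degree is also $n^2$.

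I do not expect a real obstacle here; the corollary is essentially immediate from Theorem~\ref{bpm_poly}. The only point requiring any care is the reduction-mod-$2$ step justifying the jump from the Real representation to the $\mathbb{F}_2$ representation, and this is a one-sentence appeal to Möbius inversion on the Boolean lattice together with Fact~\ref{uniqueboolean} applied over $\mathbb{F}_2$.
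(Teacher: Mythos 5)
Your proof is correct and matches the (unstated) approach the paper intends: the corollary is presented without proof precisely because, as you observe, $K_{n,n}$ is matching-covered with $\chi(K_{n,n})=(n-1)^2$, so the top coefficient in Theorem~\ref{bpm_poly} is $\pm1$, and reduction modulo~$2$ carries the unique integer-coefficient $\mathbb{R}$-representation to the unique $\mathbb{F}_2$-representation while preserving that nonzero top coefficient.
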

\end{leftbar}

\begin{leftbar}
\begin{corollary}
	\label{xor_evasiveness} 
	$BPM_n$ is \textit{evasive}, even for $XOR$ decision trees: 
	\[ D^{XOR}(BPM_n) = n^2 \]
\end{corollary}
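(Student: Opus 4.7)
The plan is to combine the $\mathbb{F}_2$-degree statement from Corollary \ref{fulldeg} with the standard inequality $D^{XOR}(f) \ge \deg_{\mathbb{F}_2}(f)$ for every Boolean function $f$. The upper bound $D^{XOR}(BPM_n) \le n^2$ is trivial: an XOR decision tree is allowed to query any single coordinate (take $S=\{(i,j)\}$), so the classical depth-$n^2$ tree that reads every edge already establishes it. Hence only the matching lower bound $D^{XOR}(BPM_n) \ge n^2$ requires work, and it will reduce entirely to the degree fact already in hand.

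For the lower bound, I would first record, as a short self-contained lemma, the fact that for any Boolean function $f : \{0,1\}^N \to \{0,1\}$, the depth of any XOR decision tree computing $f$ upper bounds $\deg_{\mathbb{F}_2}(f)$. The key step is to expand $f$ as a sum of leaf indicators: fix an XOR decision tree $T$ of depth $d$ computing $f$; for each leaf $\ell$, let the root-to-$\ell$ path consist of at most $d$ queries $\chi_{S_1}(x),\dots,\chi_{S_k}(x)$ (each a parity of a subset of coordinates), each fixed to a value $b_j \in \{0,1\}$. The indicator that input $x$ reaches leaf $\ell$ is then
\[
\mathbbm{1}_\ell(x) \;=\; \prod_{j=1}^{k} \bigl(\chi_{S_j}(x) + b_j + 1\bigr) \pmod{2},
\]
which is an $\mathbb{F}_2$-polynomial of degree at most $k \le d$. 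Summing $v_\ell \cdot \mathbbm{1}_\ell(x)$ over all leaves (with $v_\ell \in \{0,1\}$ the leaf label) gives an $\mathbb{F}_2$-polynomial of degree at most $d$ that agrees with $f$ on every input, so $\deg_{\mathbb{F}_2}(f) \le d$.

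Applying this to $f = BPM_n$ and invoking Corollary \ref{fulldeg}, which asserts $\deg_{\mathbb{F}_2}(BPM_n) = n^2$, yields $D^{XOR}(BPM_n) \ge n^2$. Combined with the trivial upper bound, this gives $D^{XOR}(BPM_n) = n^2$, i.e.\ evasiveness for XOR decision trees. There is no real obstacle here; the whole content of the corollary sits in Corollary \ref{fulldeg} (itself a direct consequence of Theorem \ref{bpm_poly} together with the observation that the leading monomial $\prod_{i,j} x_{i,j}$, corresponding to $G = K_{n,n}$, has nonzero coefficient $(-1)^{\chi(K_{n,n})}$ and is therefore the degree-$n^2$ term both over $\mathbb{R}$ and, reducing mod $2$, over $\mathbb{F}_2$). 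The only mild care needed is to present the leaf-indicator expansion cleanly so that the bound $D^{XOR}(f) \ge \deg_{\mathbb{F}_2}(f)$ is not invoked as folklore without justification.
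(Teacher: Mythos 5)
Your argument is correct and follows the paper's own proof essentially verbatim: both establish $D^{XOR}(f)\ge \deg_{\mathbb F_2}(f)$ by writing $f$ as an $\mathbb F_2$-sum of path/leaf indicators, each a product of at most $d$ affine-linear parity terms, and then invoke the full $\mathbb F_2$-degree of $BPM_n$ from Corollary~\ref{fulldeg}. The only cosmetic difference is that you sum over all leaves weighted by their labels while the paper restricts to $1$-leaves, and you make the trivial upper bound $D^{XOR}(BPM_n)\le n^2$ explicit.
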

\end{leftbar}
\begin{proof}
	We show that for any Boolean function $f$, $D^{XOR}(f) \ge deg_2(f)$, where $deg_2(f)$ is the degree of the polynomial representing $f$ over $\mathbb{F}_2$. Thus in particular $D^{XOR}(BPM_n) = n^2$. Let $f: \{0,1\}^n \rightarrow \{0,1\}$ be a Boolean function and let $T$ be a $XOR$ decision tree computing $f$. Let $\mathcal{P}$ be the set of all root to 1-leaf paths in $T$. For any path $P \in \mathcal{P}$ we construct the indicator over the path, denoted $\mathbbm{1}_P(x_1, \dots, x_n)$, by taking the product over any parity along the path (taking the parity itself for any right turn, and adding 1 to the term for any left turn). Observe that $f(x_1, \dots, x_n) = \sum_{P \in \mathcal{P}} \mathbbm{1}_P(x_1, \dots, x_n)$, therefore $deg_2(f) \le \max_{P \in \mathcal{P}} deg_2(\mathbbm{1}_P(x_1, \dots, x_n)) \le depth(T)$, where the last inequality follows since parities over $\mathbb{F}_2$ are linear functionals.
\end{proof}

\begin{leftbar}
	\begin{corollary}
		\label{odd_number_of_pm_graphs}
		The number of balanced bipartite graphs of order $2n$ containing a perfect matching is odd. Furthermore, the number of matching-covered graphs of order $2n$ is also odd.
	\end{corollary}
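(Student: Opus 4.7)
Both statements are consequences of the explicit polynomial identity given by Theorem \ref{bpm_poly}, and the plan is to extract each parity by plugging a single well-chosen evaluation into it.

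For the number of matching-covered graphs, the plan is to evaluate $BPM_n$ at $x=(1,1,\dots,1)$. Since $K_{n,n}$ has a perfect matching we have $BPM_n(1,\dots,1)=1$, while on the other hand Theorem \ref{bpm_poly} gives
\[
BPM_n(1,\dots,1)=\sum_{G\in MC_n}(-1)^{\chi(G)}.
\]
Splitting $MC_n$ according to the parity of $\chi(G)$, this identity says that the number of $G\in MC_n$ with even $\chi(G)$ minus the number with odd $\chi(G)$ is $1$. Adding $2\cdot(\text{number with odd }\chi)$ to both sides yields $|MC_n|=1+2\cdot(\text{number with odd }\chi)$, which is odd.

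For the number $N$ of balanced bipartite graphs of order $2n$ containing a perfect matching, the plan is to sum $BPM_n$ over all $2^{n^2}$ inputs. Indeed $N=\sum_{G\subseteq K_{n,n}}BPM_n(G)$, and substituting the expansion of Theorem \ref{bpm_poly} and swapping the two sums gives
\[
N=\sum_{H\in MC_n}(-1)^{\chi(H)}\cdot\bigl|\{G\subseteq K_{n,n} : H\subseteq G\}\bigr|
 =\sum_{H\in MC_n}(-1)^{\chi(H)}\,2^{\,n^2-|E(H)|}.
\]
Reducing modulo $2$, every term with $|E(H)|<n^2$ vanishes, so the only surviving contribution is $H=K_{n,n}$ (which is indeed matching-covered), giving $N\equiv(-1)^{\chi(K_{n,n})}\equiv 1\pmod 2$.

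There is no genuine obstacle here beyond the bookkeeping: the work has already been done in establishing Theorem \ref{bpm_poly}, and the only mild care needed is the swap of sums and the observation that the $2^{n^2-|E(H)|}$ factor kills everything but the top graph mod $2$. The same template would give parity statements for any monotone graph property whose unique multilinear polynomial has an explicit coefficient formula.
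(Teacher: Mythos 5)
Your proof is correct and follows essentially the same route as the paper: both prove the matching-covered parity by evaluating $BPM_n$ at the all-ones input and reducing the identity $1=\sum_{G\in MC_n}(-1)^{\chi(G)}$ modulo $2$. The only cosmetic difference is in the first claim, where the paper invokes the standard fact that $|f^{-1}(1)|$ is odd iff the $\mathbb{F}_2$-polynomial of $f$ has full degree (together with its Corollary \ref{fulldeg}), while you unpack this by summing $BPM_n$ over all inputs and observing that every summand $2^{n^2-|E(H)|}$ with $H\neq K_{n,n}$ vanishes mod $2$ --- this is just the direct proof of that same parity-degree fact specialized to $BPM_n$.
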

\end{leftbar}
\begin{proof}
	For any Boolean function $f$, $|\set{x \in \{0,1\}^n}{f(x)=1}| \equiv 1 \pmod 2$ if and only if the polynomial representing $f$ over $\mathbb{F}_2$ has full degree. Thus the number of graphs containing a perfect matching is odd. Let $H \in MC_n$. Clearly $H$ has a perfect matching, therefore:
	\begin{equation*}
		\begin{split}
			1 = BPM_n(H) &= \sum_{G \in MC_n} {(-1)^{\chi(G)} \cdot \mathbbm{1}\{G \subseteq H\}} \\
			&= \sum_{G \in MC(H)} {(-1)^{\chi(G)}} \end{split}
	\end{equation*}
	In particular $K_{n,n} \in MC_n$, thus:
	\begin{equation*}
		\begin{split}
			1 \equiv BPM_n(K_{n,n}) \pmod 2 \equiv |MC_n| \pmod 2 \qedhere
		\end{split}
	\end{equation*}
\end{proof}

\defsubsubsection{Lower Bound for $AND$ Decision Trees}

Given a Boolean function $f: \{0,1\}^n \rightarrow \{0,1\}$, the unique multilinear polynomial representing $f$ can be used to deduce lower bounds on the query complexity of $f$ for $AND$ decision trees. 

\begin{leftbar}
	\begin{lemma}
		\label{and_complexity_lb}
		Let $f: \{0,1\}^n \rightarrow \{0,1\}$ be a boolean function. Then:
		\[D^{AND}(f) \ge \log_3(|mon(f)|)\]
	\end{lemma}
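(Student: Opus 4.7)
The plan is to prove the contrapositive bound $|mon(f)| \le 3^d$ where $d = D^{AND}(f)$, by induction on $d$. Taking $\log_3$ of both sides will give the desired inequality.

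For the inductive step, I would take an optimal $AND$ decision tree $T$ of depth $d$ computing $f$, and let $S \subseteq [n]$ be the subset queried at the root. Let $f_L, f_R$ denote the Boolean functions computed by the left (``AND equals $0$'') and right (``AND equals $1$'') subtrees, each of which has depth at most $d-1$. The key identity is
\[ f(x) \;=\; \Bigl(1 - \prod_{i \in S} x_i\Bigr)\,f_L(x) \;+\; \Bigl(\prod_{i \in S} x_i\Bigr)\,f_R(x) \;=\; f_L(x) \;+\; \Bigl(\prod_{i \in S} x_i\Bigr)\bigl(f_R(x) - f_L(x)\bigr), \]
valid after multilinear reduction (using $x_i^2 = x_i$). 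The reason for favoring the second form is that it lets us ``charge'' only one subtree twice, rather than both, shaving the recursion base from $4$ down to $3$.

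Next I would invoke the simple observation that multiplying a multilinear polynomial $p$ by a monomial $\prod_{i \in S} x_i$ and then re-multilinearizing cannot increase the monomial count: each monomial $\prod_{j \in T} x_j$ of $p$ becomes $\prod_{j \in T \cup S} x_j$, so monomials either survive or collapse together. Combining this with the triangle inequality $|mon(f_R - f_L)| \le |mon(f_L)| + |mon(f_R)|$ yields
\[ |mon(f)| \;\le\; |mon(f_L)| + |mon(f_R - f_L)| \;\le\; 2\,|mon(f_L)| + |mon(f_R)|. \]
By the induction hypothesis both $|mon(f_L)|$ and $|mon(f_R)|$ are at most $3^{d-1}$, so $|mon(f)| \le 3 \cdot 3^{d-1} = 3^d$.

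The base case $d = 0$ is immediate: $f$ is constant, so $|mon(f)| \le 1 = 3^0$. I do not expect any serious obstacle: the main conceptual point is the slightly asymmetric decomposition $f = f_L + (\prod_S x_i)(f_R - f_L)$, which gives the base $3$ rather than the naive base $4$ one would obtain from $f = (1 - \prod_S x_i)f_L + (\prod_S x_i) f_R$ treated symmetrically. Everything else is routine multilinear bookkeeping.
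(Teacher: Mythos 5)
Your proof is correct, and it takes a genuinely different route from the paper's. The paper argues globally: it writes $f$ as a sum of indicator polynomials $\mathbbm{1}_P$ over all root-to-$1$-leaf paths $P$, observes that a path with $k$ left turns (negated $AND$s, each contributing a factor $1 - \prod_{i\in S} x_i$) has $|mon(\mathbbm{1}_P)| \le 2^k$ while right turns multiply by a single monomial and do not increase the count, and then sums $\sum_{k=0}^{d}\binom{d}{k}2^k = 3^d$ using the bound $\binom{d}{k}$ on the number of depth-$d$ paths with exactly $k$ left turns. You instead proceed top-down by induction on the depth, using the asymmetric rewriting $f = f_L + \bigl(\prod_{i\in S}x_i\bigr)(f_R - f_L)$ to obtain the recursion $|mon(f)| \le 2|mon(f_L)| + |mon(f_R)| \le 3\cdot 3^{d-1}$; this is exactly the right trick to avoid the naive base $4$ that the symmetric decomposition $(1-\prod x_i)f_L + (\prod x_i)f_R$ would give. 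The paper's path-counting argument has the advantage that it transfers almost verbatim to the dual $OR$ decision tree bound in Lemma \ref{or_complexity_lb} (swapping the roles of left and right turns and of $f$ with $f^\star$); your inductive argument is more elementary and local, and would also dualize cleanly with a corresponding decomposition $f = f_R + (1-\prod x_i)(f_L - f_R)$ for the $OR$ case, though you would then need to convert to $f^\star$ to get a monomial count rather than an $\ell_1$ count. Both approaches are sound and of comparable length; yours is a perfectly valid alternative.
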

\end{leftbar}
\begin{proof}
	Let $T$ be an AND decision tree computing $f$ and denote $d = depth(T)$. Let $\mathcal{P}$ be the set of all root to 1-leaf paths in $T$. For any $P \in \mathcal{P}$, construct the indicator function for the path as follows: 
	
	\[ \mathbbm{1}_{P}(x_1, \dots, x_n) = \left(\prod_{\lnot AND(S) \in P} \left(1-\prod_{i \in S} x_i\right)\right) \left(\prod_{ AND(S) \in P} \left(\prod_{i \in S} x_i\right) \right) \]  
	
	Notice that the multilinear polynomial of each indicator function of a path $P$ making $k$ left turns has at most $2^k$ monomials. Furthermore, in a binary tree of depth $d$ there are at most ${d \choose k}$ paths making exactly $k$ left turns (i.e., by selecting the position in the path at which the left turns are made). Finally, observe that $f(x_1, \dots, x_n) = \sum_{P \in \mathcal{P}} \mathbbm{1}_{P}(x_1, \dots, x_n)$. Thus by the uniqueness of the multilinear polynomial representing $f$, we have:
	\begin{equation*}
	\begin{split}
	|mon(f)| \le \sum_{P \in \mathcal{P}} |mon(\mathbbm{1}_{P})| \le \sum_{k=0}^{d} {d \choose k} 2^k = 3^d \qedhere
	\end{split}
	\end{equation*}
\end{proof}

Applying the aforementioned lemma to $BPM_n$ (and recalling Corollary \ref{num_terms_bpm}), we obtain:

\begin{leftbar}
	\begin{corollary}
		\label{bpm_and_lb}
		The depth of any $AND$ decision tree computing $BPM_n$ is at least:
		\[D^{AND}(BPM_n) \ge (\log_3 2) \cdot n^2 + o_n(1)\]
	\end{corollary}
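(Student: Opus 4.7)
The plan is essentially a direct substitution: combine Lemma \ref{and_complexity_lb}, which converts monomial count into an $AND$-decision-tree lower bound via $D^{AND}(f) \ge \log_3|mon(f)|$, with Corollary \ref{num_terms_bpm}, which gives an explicit lower bound on the monomial count of $BPM_n$ coming from Theorem \ref{bpm_poly} and the probabilistic count of elementary graphs in Proposition \ref{number_of_mu_graphs}.

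Concretely, the first step is to apply Lemma \ref{and_complexity_lb} to $f = BPM_n$, yielding $D^{AND}(BPM_n) \ge \log_3|mon(BPM_n)|$. The second step is to substitute the bound $|mon(BPM_n)| \ge 2^{n^2}\bigl(1 - \tfrac{2 n^4}{2^n}\bigr)$ from Corollary \ref{num_terms_bpm}. Taking logarithms base $3$ gives
\[
D^{AND}(BPM_n) \ge n^2 \log_3 2 + \log_3\!\Bigl(1 - \tfrac{2 n^4}{2^n}\Bigr).
\]
The third step is to observe that $\tfrac{2n^4}{2^n} \to 0$ as $n \to \infty$, so the error term $\log_3(1 - 2n^4/2^n)$ is $o_n(1)$, which yields the claimed bound $(\log_3 2)\cdot n^2 - o_n(1)$.

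There is no real obstacle here; every ingredient has already been proved in the excerpt. The only thing to be mindful of is the sign of the $o_n(1)$ term: the correction $\log_3(1 - 2n^4/2^n)$ is \emph{negative} for all sufficiently large $n$ (since the argument of the log is less than $1$), so it is absorbed as $-o_n(1)$ rather than $+o_n(1)$, matching the statement in the introduction. Apart from this bookkeeping, the proof is a one-line chain of inequalities: Lemma \ref{and_complexity_lb} $\;\Longrightarrow\;$ Corollary \ref{num_terms_bpm} $\;\Longrightarrow\;$ elementary asymptotic estimate.
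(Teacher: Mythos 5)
Your proof is correct and is exactly the argument the paper intends: apply Lemma \ref{and_complexity_lb} to $BPM_n$ and substitute the monomial-count bound from Corollary \ref{num_terms_bpm}, giving $D^{AND}(BPM_n) \ge n^2\log_3 2 + \log_3\bigl(1 - \tfrac{2n^4}{2^n}\bigr)$. Your remark about the sign of the error term is also well taken: since $\log_3(1 - 2n^4/2^n) < 0$ for large $n$, the honest statement is $(\log_3 2)\cdot n^2 - o_n(1)$ as in the introduction, and the ``$+o_n(1)$'' appearing in the body of this corollary is a sign slip.
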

\end{leftbar}

We note that the same decision tree lower bound can in fact be derived for the $l_1$-norm of the coefficient vector of the multilinear polynomial, rather than the number of monomials. Generally, the $l_1$ norm provides a stronger lower bound, however for the case of $BPM_n$ this does not yield a better bound, since all its coefficients have a magnitude of exactly 1.

\defsubsection{The Fourier Spectrum of $BPM_n$}
\label{fourier_subsubsection}

In this section we briefly discuss another multilinear polynomial representing $BPM_n$ over the Reals -- the Fourier Expansion of $BPM_n$. Given a multilinear polynomial over the Reals representing a Boolean function $f$ in the $\{0,1\}$ basis, the polynomial can be ``converted'' into the Fourier expansion of $f$ by replacing each monomial $\prod_{i \in S} x_i$ with the indicator $\mathbbm{1}_S(x_1, \dots, x_n) = \prod_{i \in S} \frac{-x_i + 1}{2}$, and applying the transformation $x \mapsto -2x + 1$ to the output. Thus:

\begin{leftbar}
	\begin{lemma}
		\label{convert_to_fourier}
		Let $f: \{0,1\}^n \rightarrow \{0,1\}$ be a Boolean function represented by the Real multilinear polynomial:
		
		\[f(x_1, \dots, x_n) = \sum_{S \subseteq [n]} a_S \cdot \prod_{i \in S} x_i\]
		
		\noindent Then the Fourier expansion of $f$ is given by:
		
		\[ \hat{f}(x_1, \dots, x_n) = 1 + \sum_{S \subseteq [n]} \left((-1)^{|S|-1} \sum_{T \supseteq S} \frac{a_T}{2^{|T|-1}}\right) \cdot \prod_{i \in S} x_i \]
	\end{lemma}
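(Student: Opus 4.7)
The plan is to derive the claimed identity by direct change of variables, following the standard recipe for moving from the $\{0,1\}$ basis to the $\{-1,1\}$ Fourier basis. Since $-1$ represents True and $1$ represents False on both the input and output side (as recalled in the preliminaries on Fourier analysis), the input change of variables is $x_i \mapsto \tfrac{1-x_i}{2}$ (sending the Boolean value $0$ to $+1$ and $1$ to $-1$), and the output transformation is $v \mapsto 1 - 2v$. Accordingly, $\hat{f}(x) = 1 - 2\,f\!\bigl(\tfrac{1-x_1}{2},\ldots,\tfrac{1-x_n}{2}\bigr)$, and the task reduces to simplifying the right-hand side into the claimed form.

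First I would substitute into the given multilinear expansion of $f$, obtaining
\[ f\!\bigl(\tfrac{1-x_1}{2},\ldots,\tfrac{1-x_n}{2}\bigr) \;=\; \sum_{T \subseteq [n]} \frac{a_T}{2^{|T|}} \prod_{i \in T}(1 - x_i). \]
Next I would expand each product by the distributive law, writing $\prod_{i \in T}(1-x_i) = \sum_{S \subseteq T}(-1)^{|S|}\prod_{i \in S} x_i$, and then swap the order of the double sum to group by the ``inner'' subset $S$ rather than $T$:
\[ f(\ldots) \;=\; \sum_{S \subseteq [n]} (-1)^{|S|}\left(\sum_{T \supseteq S}\frac{a_T}{2^{|T|}}\right)\prod_{i \in S} x_i. \]
Finally, multiplying by $-2$ and adding $1$ absorbs the factor of $2$ into the denominator (turning $2^{|T|}$ into $2^{|T|-1}$) and flips every sign, yielding
\[ \hat{f}(x) \;=\; 1 + \sum_{S \subseteq [n]}\left((-1)^{|S|-1}\sum_{T \supseteq S}\frac{a_T}{2^{|T|-1}}\right)\prod_{i \in S} x_i, \]
which is exactly the identity in the statement.

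The proof is a routine algebraic manipulation, and there is no real conceptual obstacle. The main bookkeeping point is to swap $\sum_T \sum_{S \subseteq T}$ with $\sum_S \sum_{T \supseteq S}$ carefully. A minor subtlety is the treatment of $S = \emptyset$: the outer term then contains the factor $2^{|\emptyset|-1} = \tfrac{1}{2}$, so that $(-1)^{-1}\sum_T a_T/2^{|T|-1}$ together with the standalone ``$+1$'' supplies the correct constant Fourier coefficient $\hat{f}_\emptyset$. This can be sanity-checked against the constant Boolean functions $f \equiv 0$ and $f \equiv 1$, which must respectively yield $\hat{f} \equiv 1$ and $\hat{f} \equiv -1$ under the chosen convention.
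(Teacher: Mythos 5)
Your proof is correct and matches the paper's approach exactly. The paper describes precisely this transformation in the sentence preceding the lemma (replace $x_i$ by $\tfrac{-x_i+1}{2}$, apply $v \mapsto -2v+1$ to the output) but leaves the algebra implicit; your proposal simply carries out the substitution, binomial expansion of $\prod_{i\in T}(1-x_i)$, and the $\sum_T\sum_{S\subseteq T} \to \sum_S\sum_{T\supseteq S}$ swap, and the $S=\emptyset$ bookkeeping and sanity checks against $f\equiv 0,1$ are all handled correctly.
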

\end{leftbar}

Combining Lemma \ref{convert_to_fourier} and Theorem \ref{bpm_poly}, we thus conclude that: 

\begin{leftbar}
\begin{corollary}
	The Fourier coefficients of $BPM_n$ are given by:
	\[ \forall \hat{0} \ne G \subseteq K_{n,n}:\ \ \widehat{BPM^G_n} = (-1)^{|E(G)|} \sum_{\substack{H \supseteq G\\H \in MC_n}} \frac{(-1)^{\chi(H)-1}}{2^{|E(H)|-1}} \]
\end{corollary}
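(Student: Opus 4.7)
The plan is to prove this corollary by a direct two-step substitution: first invoke Theorem \ref{bpm_poly} to identify the coefficients $a_T$ of the $\{0,1\}$-basis multilinear polynomial of $BPM_n$, and then plug those coefficients into the conversion formula of Lemma \ref{convert_to_fourier}. By Theorem \ref{bpm_poly}, indexing monomials by subgraphs $H \subseteq K_{n,n}$, the coefficient $a_{E(H)}$ equals $(-1)^{\chi(H)}$ whenever $H \in MC_n$, and vanishes otherwise. So the sum $\sum_{T \supseteq S} a_T / 2^{|T|-1}$ appearing in the conversion lemma, with $S = E(G)$, collapses immediately to a sum over matching-covered supergraphs of $G$.

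Concretely, for $G \neq \hat 0$, Lemma \ref{convert_to_fourier} gives
\begin{equation*}
\widehat{BPM_n^G} \;=\; (-1)^{|E(G)|-1} \sum_{T \supseteq E(G)} \frac{a_T}{2^{|T|-1}} \;=\; (-1)^{|E(G)|-1} \sum_{\substack{H \supseteq G \\ H \in MC_n}} \frac{(-1)^{\chi(H)}}{2^{|E(H)|-1}},
\end{equation*}
after substituting $a_{E(H)} = (-1)^{\chi(H)}$ and dropping the vanishing terms. To match the statement of the corollary, I would then perform the purely formal rewrite $(-1)^{|E(G)|-1}(-1)^{\chi(H)} = (-1)^{|E(G)|}(-1)^{\chi(H)-1}$, which redistributes one factor of $(-1)$ from the prefactor into each summand. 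This yields the claimed expression.

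There is no real obstacle here; the corollary is essentially a restatement of Theorem \ref{bpm_poly} under the basis change encoded in Lemma \ref{convert_to_fourier}. The only care required is parity bookkeeping in the final sign manipulation, and checking that the restriction of the summation index from ``all $T \supseteq E(G)$'' to ``$H \in MC_n$ with $H \supseteq G$'' is justified precisely because $a_T = 0$ for non-matching-covered $T$, which is exactly the content of the $\{0,1\}$-basis form of Theorem \ref{bpm_poly}.
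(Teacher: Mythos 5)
Your proof is correct and is exactly the argument the paper has in mind: the paper simply says ``Combining Lemma \ref{convert_to_fourier} and Theorem \ref{bpm_poly}'' and leaves the substitution and sign-juggling implicit, which is precisely what you carry out. The parity bookkeeping checks out: $(-1)^{|E(G)|-1}(-1)^{\chi(H)} = (-1)^{|E(G)|}(-1)^{\chi(H)-1}$, and the restriction of the sum to $H \in MC_n$ is justified by $a_T = 0$ for non-matching-covered $T$.
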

\end{leftbar}

While the above expression might be difficult to compute in the general case, we will now see that for the asymptotic majority of graphs (all elementary graphs), the Fourier coefficient can be exactly computed.

\begin{leftbar}
	\begin{proposition}
		\label{fourier_coeffs_of_c2n}
		Let $n \in \mathbb{N}^+$ and let $G \subseteq K_{n,n}$ be an elementary graph. Then: \[\widehat{BPM^G_n} = 2^{-n^2 + 1}\]
	\end{proposition}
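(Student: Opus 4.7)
The plan is to start from the expression for $\widehat{BPM^G_n}$ given in the preceding corollary, which sums over all matching-covered $H$ containing $G$, and show that when $G$ is elementary, the sum simplifies dramatically. The first (and most important) step is to establish that every supergraph of an elementary graph is again elementary (and in particular matching-covered), so the constraint $H \in MC_n$ becomes trivial.

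The key structural lemma is: if $G \subseteq K_{n,n}$ is elementary and $G \subseteq H \subseteq K_{n,n}$, then $H$ is elementary. The proof is immediate from Hetyei's characterization (Theorem \ref{hetyei_conditions}): $G$ being elementary means $|N_G(X)| \ge |X| + 1$ for every nonempty proper $X \subseteq A$. Since $H \supseteq G$, we have $N_H(X) \supseteq N_G(X)$, so the same expansion condition holds for $H$, which together with $|A|=|B|=n$ forces $H$ to be elementary. Consequently, the sum in the preceding corollary ranges over \emph{all} $H$ with $G \subseteq H \subseteq K_{n,n}$.

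Next I exploit that every such $H$ is connected, hence $\chi(H) = |E(H)| - 2n + 1$, so $(-1)^{\chi(H)-1} = (-1)^{|E(H)|-2n} = (-1)^{|E(H)|}$. Writing $m = |E(G)|$, the formula becomes
\begin{equation*}
\widehat{BPM^G_n} \;=\; (-1)^m \sum_{G \subseteq H \subseteq K_{n,n}} \frac{(-1)^{|E(H)|}}{2^{|E(H)|-1}} \;=\; 2\,(-1)^m \!\!\sum_{G \subseteq H \subseteq K_{n,n}} \!\!\left(-\tfrac{1}{2}\right)^{|E(H)|}.
\end{equation*}
Parametrizing $H$ by the subset of the $n^2 - m$ non-edges of $G$ that it adds, the inner sum factors out $(-1/2)^m$ and leaves a binomial sum, which by the binomial theorem equals $(1 - 1/2)^{n^2 - m} = 2^{-(n^2-m)}$. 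Combining the factors, the $(-1)^m$ terms cancel, the $2^{-m}$ and $2^{m}$ cancel, and we are left with $2 \cdot 2^{-n^2} = 2^{-n^2+1}$.

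The only real obstacle is the supergraph-closure lemma in the first paragraph; once that is in hand, steps 2--3 are short and the evaluation of the resulting binomial sum is routine. I expect the proof to be essentially three or four lines of calculation following the appeal to Hetyei's theorem.
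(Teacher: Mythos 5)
Your proof is correct and follows essentially the same path as the paper's: establish that any supergraph of an elementary graph is again elementary, so the sum ranges over all $H$ with $G \subseteq H \subseteq K_{n,n}$, then simplify to a binomial sum. The only cosmetic difference is that you justify the supergraph-closure step via Hetyei's expansion condition ($|N(X)| \ge |X|+1$) while the paper appeals to the ear-decomposition characterization (append single-edge ears for each added edge); both are immediate consequences of Theorem~\ref{hetyei_conditions}/Theorem~\ref{elementary_iff_ear_decomp}.
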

\end{leftbar}
\begin{proof}
	If $G$ is elementary, then any graph $H \supseteq G$ is also elementary, as its ear decomposition is that of $G$, followed by adding single-edge ears for each edge in $E(H) \setminus E(G)$. Thus by Theorem \ref{bpm_poly} and Lemma \ref{convert_to_fourier}:
	\begin{equation*}
	\begin{split}
	\widehat{BPM^G_n} &= (-1)^{|E(G)| - 1} \sum_{\substack{H \supseteq G\\ H \in MC_n}} \frac{(-1)^{\chi(H)}}{2^{|E(H)| - 1}} \\
	&= \sum_{t=0}^{n^2 - |E(G)|} {n^2 - |E(G)| \choose t} \frac{(-1)^t}{2^{t + |E(G)| - 1}} = 2^{-n^2 + 1} \qedhere
	\end{split}
	\end{equation*}
\end{proof}

\begin{leftbar}
\begin{corollary}
	Let $n > 0$. For $(1 - o_n(1))\cdot 2^{n^2}$ of the Fourier coefficients, we have: 
	\[\widehat{BPM^G_n} = 2^{-n^2 + 1}\]
\end{corollary}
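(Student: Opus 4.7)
The plan is to combine the two previous results in the section essentially mechanically: Proposition~\ref{fourier_coeffs_of_c2n} already evaluates the Fourier coefficient $\widehat{BPM^G_n}$ exactly for every elementary subgraph $G \subseteq K_{n,n}$, showing it equals $2^{-n^2+1}$. So all that remains is to point out that \emph{almost all} subgraphs of $K_{n,n}$ are elementary, and this is exactly the content of Proposition~\ref{number_of_mu_graphs}.

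Concretely, first I would let $\mathcal{E}_n \subseteq 2^{E(K_{n,n})}$ denote the family of elementary subgraphs of $K_{n,n}$. Proposition~\ref{number_of_mu_graphs} gives $|\mathcal{E}_n| \ge 2^{n^2}\bigl(1 - 2n^4/2^n\bigr)$, which is $(1-o_n(1))\cdot 2^{n^2}$. For every $G \in \mathcal{E}_n$, Proposition~\ref{fourier_coeffs_of_c2n} asserts $\widehat{BPM^G_n} = 2^{-n^2+1}$. Since there are exactly $2^{n^2}$ subsets $G \subseteq K_{n,n}$ indexing Fourier coefficients, the fraction of coefficients equal to $2^{-n^2+1}$ is at least $|\mathcal{E}_n|/2^{n^2} = 1 - o_n(1)$, which is the claimed bound.

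There is no real obstacle here: the two ingredients have already been established, and the statement is just a quantitative restatement of their conjunction. The only thing worth a line of care is noting that the index set of Fourier coefficients is canonically identified with subgraphs $G \subseteq K_{n,n}$ (i.e.\ with subsets of $E(K_{n,n})$), so that counting elementary graphs directly counts the coefficients of the stated form.
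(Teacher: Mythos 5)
Your argument is exactly the paper's: cite Proposition~\ref{number_of_mu_graphs} for the count $(1-o_n(1))\cdot 2^{n^2}$ of elementary subgraphs of $K_{n,n}$, and cite Proposition~\ref{fourier_coeffs_of_c2n} for the value $\widehat{BPM^G_n} = 2^{-n^2+1}$ on each such $G$. It is correct and matches the paper's proof line for line.
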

\end{leftbar}
\begin{proof}
	By Proposition \ref{number_of_mu_graphs}, the number of elementary graphs is at least:
	\[ |\set{G \subseteq K_{n,n}}{\text{G is elementary}}| \ge \left(1 - \frac{2n^4}{2^n}\right) \cdot 2^{n^2} = \left(1 - o_n(1)\right)\cdot 2^{n^2} \qedhere\]
\end{proof}

Lastly, we use the Fourier expansion of $BPM_n$ to derive a closed-form expression for the probability that a uniformly sampled random graph $G \sim G(n,n,\sfrac{1}{2})$ contains a bipartite perfect matching.

\begin{leftbar}
	\begin{proposition}
		Let $n > 0$. The probability that a perfect matching exists in a uniformly sampled balanced bipartite graph of order $2n$ is:
		
		\[ \Pr_{G \sim G\left(n,n,\sfrac{1}{2}\right)}\left[\text{G has a Perfect Matching}\right] = \sum_{G \in MC_n} \frac{(-1)^{\chi(G)}}{2^{|E(G)|}} \]
		
	\end{proposition}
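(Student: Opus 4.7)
The plan is to recognize the claimed probability as the expectation $\mathbb{E}_{G \sim G(n,n,1/2)}[BPM_n(G)]$, and then to evaluate this expectation directly by linearity using the closed-form polynomial from Theorem~\ref{bpm_poly}. In the uniform model $G(n,n,1/2)$ each edge indicator $x_{i,j}$ is an independent Bernoulli$(1/2)$, so for any subgraph $H \subseteq K_{n,n}$ the product $\prod_{(i,j) \in E(H)} x_{i,j}$ equals $1$ exactly when every edge of $H$ is present, which happens with probability $2^{-|E(H)|}$.

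First I would observe that, because $BPM_n$ is $\{0,1\}$-valued,
\[
\Pr_{G \sim G(n,n,1/2)}\!\left[G \text{ has a perfect matching}\right] \;=\; \mathbb{E}_{G \sim G(n,n,1/2)}\bigl[BPM_n(G)\bigr].
\]
Next, I would substitute the polynomial representation given by Theorem~\ref{bpm_poly},
\[
BPM_n(x_{1,1}, \dots, x_{n,n}) \;=\; \sum_{H \in MC_n} (-1)^{\chi(H)} \prod_{(i,j) \in E(H)} x_{i,j},
\]
and apply linearity of expectation together with independence of the edge indicators to obtain
\[
\mathbb{E}\bigl[BPM_n\bigr] \;=\; \sum_{H \in MC_n} (-1)^{\chi(H)} \prod_{(i,j) \in E(H)} \mathbb{E}[x_{i,j}] \;=\; \sum_{H \in MC_n} \frac{(-1)^{\chi(H)}}{2^{|E(H)|}},
\]
which is exactly the claimed identity.

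There is essentially no obstacle here; the statement is an immediate consequence of Theorem~\ref{bpm_poly} once one recognizes the probability as an expectation. (Alternatively, one could route through the Fourier expansion and evaluate $\widehat{BPM^{\emptyset}_n}$ using Lemma~\ref{convert_to_fourier}, which would produce the same formula after re-centering via $x \mapsto -2x+1$; the direct $\{0,1\}$-basis computation is cleaner since the empty-set Fourier coefficient on the $\{0,1\}$ side is precisely the uniform expectation.)
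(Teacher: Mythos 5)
Your proof is correct, and it takes a genuinely more direct route than the paper. You compute $\Pr[G\text{ has a perfect matching}]$ as the expectation $\mathbb{E}[BPM_n(G)]$ in the $\{0,1\}$ basis: substitute the polynomial from Theorem~\ref{bpm_poly}, invoke linearity and independence of the edge indicators, and each monomial of support size $|E(H)|$ contributes $2^{-|E(H)|}$. The paper instead routes through the Fourier expansion: it applies Lemma~\ref{convert_to_fourier} to Theorem~\ref{bpm_poly} to obtain the empty-set Fourier coefficient
\[
\widehat{BPM^\emptyset_n} = 1 - \sum_{G \in MC_n} \frac{(-1)^{\chi(G)}}{2^{|E(G)|-1}},
\]
then uses the identity $\hat{f}_\emptyset = -2\Pr[f=-1]+1$ for $\{\pm 1\}$-valued $f$ and rearranges. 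The two derivations are equivalent (your parenthetical already notices this), but yours avoids the basis change entirely and is the cleaner presentation of the same fact; the paper's version is natural in context because it follows the Fourier-spectrum discussion of Subsection~\ref{fourier_subsubsection} and so the empty-set coefficient is already in hand.
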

\end{leftbar}
\begin{proof}
	By Theorem \ref{bpm_poly} and Lemma \ref{convert_to_fourier}, the Fourier coefficient of the empty set in $BPM_n$ is:
	
	\[ \widehat{BPM^\emptyset_n} = 1 - \sum_{G \in MC_n} \frac{(-1)^{\chi(G)}}{2^{|E(G)|-1}} \]
	
	\noindent Furthermore, for any Boolean function $f: \{1,-1\}^n \rightarrow \{1,-1\}$:
	\[ \hat{f}_\emptyset = \mathop{\mathbb{E}}_{x \sim \{1,-1 \}^n}\left[f(x)\right] = \mathop{\mathbb{E}}_{x \sim \{1,-1\}^n}\left[-2 \cdot \mathbbm{1}\{f(x)=-1\} + 1\right] = -2 \cdot \Pr_{x \sim \{1,-1\}^n}\left[f(x) = -1\right] + 1 \]
	
	\noindent And the equality now follows by rearranging.
\end{proof}

\defsection{The Dual Bipartite Perfect Matching Polynomial}
\label{dual_perfect_matching_polynomial_subsection}

In the previous {\sectionnamelower}, we dealt with the unique multilinear polynomial representing $BPM_n$ over the Reals in the $\{0,1\}$ basis ({\subsectionname} \ref{perfect_matching_polynomial_subsection}). We also briefly encountered the multilinear polynomial representing $BPM_n$ in the $\{1,-1\}$ basis, i.e., its Fourier expansion ({\subsectionname} \ref{fourier_subsubsection}).

We now turn our attention towards a \textit{third} multilinear polynomial, the one representing the ``dual function'' of $BPM_n$; namely, the function in which the symbols $0$ and $1$ have been ``flipped'', whereby $1$ indicates $False$ and $0$ is $True$ (which we refer to as the ``$\{1,0\}$-basis''). In this {\sectionnamelower} we will prove Theorem \ref{dual_bpm_poly}, which exhibits a fine-grained characterization of the dual polynomial. To this end, let us now introduce several more useful definitions and notation for this {\sectionnamelower}.

\defsubsection{Definitions and Notation} 
\defsubsubsection{Dual Functions}
\begin{leftbar}
	\begin{definition}
		\label{dual_polynomial}
		Let $f:\ \{0,1\}^n \rightarrow \{0,1\}$ be a Boolean function. The \textbf{dual function} of $f$, denoted $f^{\star}:\ \{0,1\}^n \rightarrow \{0,1\}$, is defined by:
		\[ \forall x \in \{0,1\}^n:\ f^{\star}(x_1, \dots, x_n) = 1 - f(1 - x_1, \dots, 1 - x_n) \]
	\end{definition}
\end{leftbar}

Hereafter, we denote by $\mathbf{BPM_n^\star}$ the dual function of $BPM_n$. For any graph $G \subseteq K_{n,n}$, we denote its 
corresponding coefficient in the polynomial representing $BPM_n^\star$ by $\mathbf{a_G^\star}$. Under this notation, the polynomial representing $BPM_n^\star$ is given by:

\[ BPM_n^\star(x_{1,1}, \dots, x_{n,n}) = \sum_{G \subseteq K_{n,n}} a_G^\star \cdot \prod_{(i,j) \in E(G)} x_{i, j} \]

\defsubsubsection{Graphs}

Hall's Theorem states that a balanced bipartite graph $G$ has \textit{no perfect matching} if and only if there exists an anticlique over a total of $n+1$ vertices.  In this {\sectionnamelower}, we consider dual functions, wherein the input bits (and output bit) are flipped. Thus, it will be useful to consider the ``dual'' to the above condition: the set of all \textit{complete bipartite graphs} over a total of $n+1$ vertices. We will hereafter refer to any such graph as a ``Hall Violator'', and will use the following notation for these graphs and for graphs which are ``covered'' by them:

\begin{leftbar}
	\begin{notation}
		\label{hall_violations_and_covers}
		Let $n > 1$. The set of all ``Hall Violator'' graphs is defined as follows:
		\[ \mathbf{HV_n} = \set{K_{X,Y} \subseteq K_{n,n}}{|X|+|Y| = n+1}\]
		\noindent Where $K_{X,Y}$ is the complete bipartite graph whose edges are $X \times Y$, and the remaining vertices are isolated. 
	\end{notation}
\end{leftbar}
\begin{leftbar}
	\begin{notation}
		Let $n > 1$. The set of all graphs which are ``covered'' by Hall violators is denoted by $\mathbf{HVC_n}$, where for every $G \subseteq K_{n,n}$:
		\[G \in HVC_n \iff \exists S \subseteq HV_n: \bar{E}(S)=E(G)\]
	\end{notation}
\end{leftbar}

We also consider the following two families of graphs:

\begin{leftbar}
	\begin{definition}
		\label{totally_and_strictly_totally_ordered}
		Let $n > 1$. A bipartite graph $G \subseteq K_{n,n}$ is called \textbf{totally ordered} if there exists an ordering of its left vertices $\{a_1, \dots, a_n\}$, such that:
		
		\[ N_G(a_1) \supseteq N_G(a_2) \supseteq \dots \supseteq N_G(a_n) \]
		
		\noindent Similarly, $G$ is called \textbf{strictly totally ordered} if in fact:
		
		\[ N_G(a_1) \supsetneq N_G(a_2) \supsetneq \dots \supsetneq N_G(a_n) \supsetneq \emptyset \]
	\end{definition}
\end{leftbar}

\defsubsection{A Fine Grained Characterization of the Dual Polynomial}

In this \subsectionnamelower, we obtain a fine grained characterization of the multilinear polynomial representing $BPM_n^\star$. Unlike the multilinear polynomial of $BPM_n$, we do not provide an explicit closed form of this polynomial. Nevertheless, we obtain an asymptotically tight estimate of the number of monomials appearing in the dual polynomial. Our characterization is the following:

\begin{leftbar}
	\begin{customthm}{2}
		\label{finer_characterization_of_dual}
		Let $n > 1$ and let $BPM_n^\star$ be the dual function of $BPM_n$, represented by the following multilinear polynomial over the Reals:
		
		\begin{equation*}
			BPM^{\star}_{n}(x_{1,1}, \dots, x_{n,n}) = \sum_{G \subseteq K_{n,n}} a^{\star}_G \prod_{(i,j) \in E(G)} x_{i,j}
		\end{equation*}
		Then for every $G \subseteq K_{n,n}$, we have:
		\begin{itemize}
			\item If $G$ \textit{is not totally ordered}, then $a^\star_G = 0$.
			\item If $G$ \textit{is strictly totally ordered}, then $a^\star_G = (-1)^{n+1}$
		\end{itemize}
	
	\end{customthm}
\end{leftbar} 

\noindent For the remainder of this \subsectionnamelower, we will set about proving Theorem \ref{finer_characterization_of_dual}.

\defsubsubsection{$\mathbf{BPM_n^\star}$ as a Graph Cover Function}

Let $G \subseteq K_{n,n}$. By Hall's Theorem, $G$ has a perfect matching if and only if its complement does not have a biclique over $n+1$ vertices. Therefore, by the definition of the dual function, we have:
\begin{equation*}
\begin{split}
BPM^\star_n(G) &= \mathbbm{1}\{\bar{G} \text{ does not have a perfect matching}\} \\
&= \mathbbm{1}\{\bar{G} \text{ has an anticlique over a total of } n+1 \text{ vertices}\} \\
&= \mathbbm{1}\{G \text{ has a biclique over a total of } n+1 \text{ vertices}\} \\
&= \mathbbm{1}\{\exists H \in HV_n,\ H \subseteq G\}
\end{split}
\end{equation*}

Thus, $BPM^\star_n$ is a graph cover function over the set $HV_n$. In particular, by Proposition \ref{monomials_graph_cover_functions}, the only monomials appearing in the multilinear polynomial representing $BPM_n^\star$ are those corresponding to graphs $G \in HVC_n$.

This observation alone already restricts the possible graphs which may appear as monomials of $BPM_n^\star$. For example, it allows us to deduce that every $G \subseteq K_{n,n}$ with $a^\star_G \ne 0$ has a single non-trivial connected component, since every $H \in HV_n$ appearing in $G$ contributes a connected component with exactly $n+1$ vertices. Nevertheless, this restriction does not suffice for bounding the number of monomials of $BPM_n^\star$ (as is exemplified later, in {\subsubsectionname} \ref{tot_ord_necessary}). Thus we now turn to our second characterization.

\defsubsubsection{Using The Eulerian Matching-Covered Lattice}

The characterization of $BPM_n^\star$ as a graph cover function for the lattice of graphs covered by Hall violators allowed us to restrict the set of \textit{monomials} that may appear in its polynomial representation. To gain further headway, we now shift our attention back to the Eulerian matching-covered lattice. Ideally, it would be advantageous to take the ``neat'' representation of $BPM_n$ in terms of the matching-covered lattice, and ``convert'' it into a characterization of $BPM_n^\star$.

Given a multilinear polynomial over the Reals representing any Boolean function $f: \{0,1\}^n \rightarrow \{0,1\}$, the dual polynomial of $f$ can immediately be derived by negating the inputs and output of $f$. In particular, if the polynomial representing $f$ is given by: 

\[ f(x_1, \dots, x_n) = \sum_{S \subseteq [n]} a_{S} \cdot \prod_{i \in S} x_i \]

\noindent Then the dual polynomial of $f$ can be expressed as follows:
\begin{equation*}
\begin{split}
	f^{\star}(x_1, \dots, x_n) &= 1 - f(1 - x_1, \dots, 1 - x_n) = 1 - \sum_{T \subseteq [n]} a_{T} \cdot \prod_{i \in T} (1 - x_i) \\
	&= 1 - \sum_{T \subseteq [n]} a_{T} \left( \sum_{S \subseteq T} (-1)^{|S|} \prod_{i \in S} x_i \right) \\
	&= 1 + \sum_{S \subseteq [n]} (-1)^{|S|+1} \left( \sum_{T \supseteq S} a_T \right) \cdot \prod_{i \in S} x_i
\end{split}
\end{equation*}

Thus by applying the above to $BPM_n$ and using the characterization of Theorem \ref{bpm_poly}, we obtain the following:

\begin{leftbar}
	\begin{lemma}
		\label{upper_sum_mobius}
		Let $\mathcal{P} = (MC_n \cupdot \{\hat{0}\}, \subseteq)$ be the matching-covered lattice. Then for every nonempty $G \subseteq K_{n,n}$, we have:
		\begin{equation*}
			\begin{split}
				a_G^\star = (-1)^{|E(G)| + 1} \sum_{\substack{G \subseteq H \subseteq K_{n,n}\\H \in MC_n}} (-1)^{\chi(H)} = (-1)^{|E(G)|}  \sum_{\substack{G \subseteq H \subseteq K_{n,n}\\H \in MC_n}} \mu_P(\hat{0}, H) \\	
			\end{split}
		\end{equation*}
	\end{lemma}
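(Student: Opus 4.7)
The plan is to invoke the general dual-polynomial identity that has just been derived in the paragraph preceding the lemma statement. That derivation shows that for any Boolean function $f(x_1,\dots,x_n) = \sum_{S \subseteq [n]} a_S \prod_{i \in S} x_i$ and any nonempty $S \subseteq [n]$, the coefficient of the corresponding monomial in $f^\star$ is $(-1)^{|S|+1} \sum_{T \supseteq S} a_T$. Applied to $f = BPM_n$ with $S = E(G)$ for nonempty $G$, this gives
\[
a_G^\star = (-1)^{|E(G)|+1} \sum_{G \subseteq H \subseteq K_{n,n}} a_H,
\]
where $a_H$ denotes the coefficient of $\prod_{(i,j) \in E(H)} x_{i,j}$ in $BPM_n$.

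Next I would substitute in the exact values of $a_H$ supplied by Theorem \ref{bpm_poly}: namely, $a_H = 0$ when $H \notin MC_n$, and $a_H = (-1)^{\chi(H)}$ otherwise. Pruning the sum to the matching-covered graphs containing $G$ yields the first equality of the lemma immediately.

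For the second equality I would appeal to the identity $\mu_P(\hat{0}, H) = (-1)^{\chi(H)+1}$ valid for every $H \in MC_n$, which follows by combining Lemma \ref{mobius_of_face_lattice} (the M\"obius numbers of an Eulerian face lattice satisfy $\mu_P(\hat{0},x) = (-1)^{\mathrm{rk}(x)}$) with Corollary \ref{rank_in_mu_lattice} (which says that $\mathrm{rk}(H) = \chi(H)+1$ in the matching-covered lattice, courtesy of the Billera--Sarangarajan isomorphism). Substituting $(-1)^{\chi(H)} = -\mu_P(\hat{0},H)$ converts the prefactor $(-1)^{|E(G)|+1}$ into $(-1)^{|E(G)|}$ and rewrites the sum in its M\"obius form.

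There is no genuine obstacle here: once Theorem \ref{bpm_poly} and the Eulerian structure of the matching-covered lattice are in hand, the lemma is a bookkeeping corollary. The only mild care required is in tracking the three sign sources in concert --- the $(-1)^{|S|+1}$ of the dual expansion, the $(-1)^{\chi(H)}$ from the primal coefficient, and the rank offset inside the M\"obius formula --- to verify that both displayed expressions for $a_G^\star$ are indeed equal.
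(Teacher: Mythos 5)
Your proposal is correct and follows exactly the paper's route: the lemma is obtained by plugging the coefficients from Theorem \ref{bpm_poly} into the general dual-coefficient formula derived in the paragraph immediately preceding it, and then rewriting $(-1)^{\chi(H)} = -\mu_P(\hat{0},H)$ via Lemma \ref{mobius_of_face_lattice} together with Corollary \ref{rank_in_mu_lattice}. The sign bookkeeping you describe ($(-1)^{|E(G)|+1} \cdot (-1) = (-1)^{|E(G)|}$) is precisely what justifies the second equality, and there is nothing more to it.
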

\end{leftbar}

We now show the following powerful characterization, which leverages the properties of the M\"obius function of an Eulerian lattice:

\begin{leftbar}
	\begin{lemma}
		\label{mu_graphs_zero_coeff}
		Let $n > 1$. For all $G \in (MC_n \setminus \{K_{n,n}\})$, we have $a^\star_G = 0$.
	\end{lemma}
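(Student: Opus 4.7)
The plan is to apply Lemma \ref{upper_sum_mobius} and then leverage the Eulerian structure of the matching-covered lattice inherited from the face lattice of the Birkhoff polytope. By Lemma \ref{upper_sum_mobius},
\[
a_G^\star \;=\; (-1)^{|E(G)|} \sum_{\substack{G \subseteq H \subseteq K_{n,n}\\ H \in MC_n}} \mu_P(\hat{0}, H).
\]
Since $G \in MC_n$, the indexing set is precisely the upper interval $[G, K_{n,n}]$ in $\mathcal{P}$. Hence it suffices to show that this sum vanishes whenever $G \neq K_{n,n}$.

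The key step is to ``re-root'' the sum at $G$ using the Eulerian rank formula. By Lemma \ref{mobius_of_face_lattice} combined with Corollary \ref{rank_in_mu_lattice}, $\mu_P(\hat{0}, H) = (-1)^{rk(H)}$ for every $H \in MC_n$, where $rk(H) = \chi(H)+1$. Consequently,
\[
\sum_{H \in [G, K_{n,n}]} \mu_P(\hat{0}, H) \;=\; (-1)^{rk(G)} \sum_{H \in [G, K_{n,n}]} (-1)^{rk(H) - rk(G)}.
\]
I would then recognize the right-hand sum as $\sum_{H \in [G, K_{n,n}]} \mu_P(G, H)$, which equals $0$ for $G < K_{n,n}$ by the defining identity $\sum_{z \in [x,y]} \mu(x, z) = [x = y]$ of the M\"obius function.

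The one subtlety, and the main obstacle, is justifying the identity $\mu_P(G, H) = (-1)^{rk(H) - rk(G)}$ for pairs with $G \neq \hat{0}$, since Lemma \ref{mobius_of_face_lattice} was only stated for M\"obius numbers out of the bottom element. I would handle this by noting that the upper interval $[G, K_{n,n}]$ in the face lattice of a polytope is again the face lattice of a polytope (namely the quotient of $B_n$ by the face corresponding to $G$), so the inductive Euler-Poincar\'e argument of Lemma \ref{mobius_of_face_lattice} applies verbatim inside this sub-lattice and delivers the required rank identity. With this bookkeeping lemma in hand, the conclusion $a_G^\star = 0$ follows immediately.
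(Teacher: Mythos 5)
Your proposal is correct and follows essentially the same route as the paper: plug into Lemma \ref{upper_sum_mobius}, factor out the sign $(-1)^{rk(G)}$ using the Eulerian rank formula, and recognize what remains as $\sum_{H \in [G, K_{n,n}]} \mu_P(G, H)$, which vanishes by the defining recursion of the M\"obius function since $G \ne K_{n,n}$. The paper compresses this into a single appeal to the ``multiplicativity'' $\mu_P(\hat{0}, H) = \mu_P(\hat{0}, G)\,\mu_P(G, H)$ of the Eulerian lattice, leaning on the externally cited identity $\mu(x,y) = (-1)^{rk(y)-rk(x)}$, whereas you correctly flag that Lemma \ref{mobius_of_face_lattice} only establishes the M\"obius numbers $\mu_P(\hat{0}, \cdot)$ and patch this by observing that the upper interval $[G, K_{n,n}]$ is itself the face lattice of a (quotient) polytope, so the Euler--Poincar\'e induction transfers. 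This is the same argument with a small gap filled in explicitly rather than by citation; nothing to change.
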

\end{leftbar}
\begin{proof}
	
	Let $G \in (MC_n \setminus \{K_{n,n}\})$ and let $\mathcal{P} = (MC_n \cup \{\hat{0}\}, \subseteq)$ be the Eulerian matching-covered lattice, where $\hat{0}$ is the empty graph. Since $\mathcal{P}$ is Eulerian, its M\"obius function is multiplicative, thus: $\forall H \in MC_n, H \supseteq G:\ \mu_{\mathcal{P}}(\hat{0}, H) = \mu_{\mathcal{P}}(\hat{0}, G) \cdot \mu_{\mathcal{P}}(G, H)$. Therefore by Definition \ref{mobius_func_poset} and Lemma \ref{upper_sum_mobius}, we have: 
	
	\begin{equation*}
	\begin{split}
	a^\star_G &= (-1)^{|E(G)|}  \sum_{\substack{G \subseteq H \subseteq K_{n,n}\\H \in MC_n}} \mu_P(\hat{0}, H) \\
	&= (-1)^{|E(G)|} \mu_{\mathcal{P}}(\hat{0}, G) \sum_{\substack{G \subseteq H \subseteq K_{n,n}\\H \in MC_n}} \mu_{\mathcal{P}}(G, H) = 0 \qedhere
	\end{split}
	\end{equation*}
\end{proof}

\defsubsubsection{Extending Beyond the Matching-Covered Lattice}

By using the properties of the M\"obius function of an Eulerian lattice, we were able to deduce that all matching-covered graphs (other than the complete bipartite graph), have a zero dual coefficient. While matching-covered graphs constitute the asymptotic majority of all balanced bipartite graphs, the previous observation is nevertheless insufficiently powerful to obtain our bound (indeed there are at least $2^{n^2 - 2n}$ graphs that are not matching-covered).

Subsequently, we now extend our characterization to graphs beyond the matching-covered lattice. To this end, we introduce the notion of ``umbrellas'' -- a set of matching-covered graphs that forms a ``basis'' for a given graph $G$, even when $G$ itself is not matching-covered.

\begin{leftbar}
	\begin{notation}
		\label{umbrella_notation}
		Let $n > 1$ and let $G \subseteq K_{n,n}$ be a graph. The \textbf{Umbrella of G}, $\mathcal{U}(G) \subseteq MC_n$, is the set of all minimal matching-covered graphs, with respect to containment, which contain $G$ as a subgraph. Formally:
		
		\[H \in \mathcal{U}(G) \iff (G \subseteq H \in MC_n) \wedge (\not\exists H' \in MC_n:\ G \subseteq H' \subset H)\]
		
	\end{notation}
\end{leftbar}

The umbrella of $G$ is an anti-chain in the matching-covered lattice. In particular, any matching-covered graph $H \in MC_n$ containing $G$ as a subgraph, also contains a graph from the umbrella of $G$. Using umbrellas we now show the following identity for general graphs (i.e., not necessarily matching-covered):

\begin{leftbar}
	\begin{lemma}
		\label{base_of_graph}
		Let $n > 1$ and let $G \subseteq K_{n,n}$ be a nonempty graph. Then:
		
		\[ a_G^\star = (-1)^{n+|E(G)|} \cdot \sum_{\substack{\emptyset \ne S \subseteq \mathcal{U}(G)\\\bar{E}(S) = K_{n,n}}} (-1)^{|S|+1} \]
	\end{lemma}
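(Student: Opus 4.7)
The plan is to bootstrap from Lemma \ref{upper_sum_mobius} by reindexing the sum $\sum_{G \subseteq H \in MC_n} \mu_P(\hat{0}, H)$ through inclusion-exclusion over the antichain $\mathcal{U}(G)$ of minimal matching-covered supergraphs of $G$. The starting observation is that for any $H \in MC_n$ with $G \subseteq H$, some $U \in \mathcal{U}(G)$ lies below $H$: just take a minimal element of the finite set $\{H' \in MC_n : G \subseteq H' \subseteq H\}$, which by definition belongs to $\mathcal{U}(G)$. Hence the matching-covered supergraphs of $G$ are exactly $\bigcup_{U \in \mathcal{U}(G)} \{H \in MC_n : U \subseteq H\}$, and standard inclusion-exclusion gives
\begin{equation*}
\sum_{\substack{G \subseteq H \\ H \in MC_n}} \mu_P(\hat{0}, H) \;=\; \sum_{\emptyset \ne S \subseteq \mathcal{U}(G)} (-1)^{|S|+1} \sum_{\substack{H \in MC_n \\ \bar{E}(S) \subseteq E(H)}} \mu_P(\hat{0}, H),
\end{equation*}
using that $(\forall U \in S:\ U \subseteq H) \iff \bar{E}(S) \subseteq E(H)$.

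Next I would collapse the inner sum. The essential remark is that $\bar{E}(S)$ is \emph{itself} the edge set of a matching-covered graph $K_S \in MC_n$: each $U \in S$ is a union of perfect matchings, so $\bar{E}(S)$ is too. Consequently the inner sum equals $\sum_{K_S \subseteq H \in MC_n} \mu_P(\hat{0}, H) = (-1)^{|E(K_S)|}\, a^\star_{K_S}$ by another application of Lemma \ref{upper_sum_mobius} (this time to $K_S$). Lemma \ref{mu_graphs_zero_coeff} then kills this contribution whenever $K_S \in MC_n \setminus \{K_{n,n}\}$, leaving only those $S$ with $\bar{E}(S) = K_{n,n}$. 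For such $S$, the unique $H \in MC_n$ above $K_{n,n}$ is $K_{n,n}$ itself, contributing
\begin{equation*}
\mu_P(\hat{0}, K_{n,n}) \;=\; (-1)^{\chi(K_{n,n})+1} \;=\; (-1)^{n^2 - 2n + 2} \;=\; (-1)^n,
\end{equation*}
by Corollary \ref{rank_in_mu_lattice} and Lemma \ref{mobius_of_face_lattice}, together with the fact that $n^2 \equiv n \pmod 2$.

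Plugging these back into Lemma \ref{upper_sum_mobius} and absorbing the overall $(-1)^{|E(G)|}$ prefactor yields the claimed identity. The main conceptual content is packaged into the closure observation that $\bar{E}(S)$ is matching-covered for every $S \subseteq \mathcal{U}(G)$, since this is what allows Lemma \ref{mu_graphs_zero_coeff} to be invoked uniformly on every term of the inclusion-exclusion expansion and thereby forces the sum to concentrate entirely on the ``full-covering'' subsets $S$ with $\bar{E}(S) = K_{n,n}$. Everything else is routine inclusion-exclusion bookkeeping and the explicit parity calculation for $\chi(K_{n,n})$.
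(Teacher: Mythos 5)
Your proposal is correct and follows essentially the same approach as the paper: inclusion-exclusion over the umbrella $\mathcal{U}(G)$, the observation that $\bar{E}(S)$ is a matching-covered graph (hence an element of the lattice), the Eulerian vanishing of $\sum_{K_S \subseteq H} \mu_P(\hat{0},H)$ for $K_S \neq K_{n,n}$, and the parity computation $\mu_P(\hat{0},K_{n,n})=(-1)^{\chi(K_{n,n})+1}=(-1)^n$. Your only departure is cosmetic: you route the vanishing of the inner sum through a second application of Lemma~\ref{upper_sum_mobius} and Lemma~\ref{mu_graphs_zero_coeff} (reading off $(-1)^{|E(K_S)|}a^\star_{K_S}=0$), whereas the paper invokes the same Eulerian fact directly as ``the sum of M\"obius numbers over a nontrivial interval is zero''; these are literally the same argument since the proof of Lemma~\ref{mu_graphs_zero_coeff} is exactly that interval computation.
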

\end{leftbar}
\begin{proof}
	Let $n > 1$ and let $\mathcal{P} = (MC_n \cup \{\hat{0}\}, \subseteq)$ be the Eulerian matching-covered lattice, where $\hat{0}$ is the empty graph. Let $G \subseteq K_{n,n}$, where $G \ne \hat{0}$. For any $\emptyset \ne S \subseteq MC_n$, denote by $\bigvee S$ the \textit{join} of all graphs in $S$. Recall (Proposition \ref{graph_cover_lattice}) that in $\mathcal{P}$, the join $\bigvee S$ is the union of all graphs in $S$. By Lemma \ref{upper_sum_mobius}, we have: \[a_G^\star = (-1)^{|E(G)|} \cdot \sum_{\substack{G \subseteq H \subseteq K_{n,n}\\H \in MC_n}} \mu_{\mathcal{P}}(\hat{0}, H)\]
	\noindent Using the inclusion-exclusion principle on the umbrella of $G$, $\mathcal{U}(G)$, we obtain:
	\begin{equation*}
	\begin{split}
	\sum_{\substack{G \subseteq H \subseteq K_{n,n}\\H \in MC_n}} \mu_{\mathcal{P}}(\hat{0}, H) &= \sum_{\emptyset \ne S \subseteq \mathcal{U}(G)} (-1)^{|S|+1} \sum_{\substack{(\bigvee S) \subseteq H \subseteq K_{n,n}\\H \in MC_n}} \mu_{\mathcal{P}}(\hat{0}, H) \\
	&= \left(\sum_{\substack{\emptyset \ne S \subseteq \mathcal{U}(G)\\(\bigvee S) \subset K_{n,n}}} (-1)^{|S|+1} \sum_{\substack{(\bigvee S) \subseteq H \subseteq K_{n,n}\\H \in MC_n}} \mu_{\mathcal{P}}(\hat{0}, H) + \sum_{\substack{\emptyset \ne S \subseteq \mathcal{U}(G)\\(\bigvee S) = K_{n,n}}} (-1)^{|S|+1} \mu_{\mathcal{P}}(\hat{0}, K_{n,n})\right)
	\end{split}
	\end{equation*}
	
	Since $\mathcal{P}$ is Eulerian, the sum of M\"obius numbers in any \textbf{nontrivial closed interval} is zero (see Lemma \ref{mu_graphs_zero_coeff}). In particular, for any $S \subseteq \mathcal{U}(G)$ where $(\bigvee S) \ne K_{n,n}$, we have:
	\[\sum_{\substack{(\bigvee S) \subseteq H \subseteq K_{n,n}\\H \in MC_n}} \mu_{\mathcal{P}}(\hat{0}, H) = 0\]
	Therefore:
	\begin{equation*}
	\begin{split}
	a^\star_G &= (-1)^{|E(G)|} \cdot \sum_{\substack{\emptyset \ne S \subseteq \mathcal{U}(G)\\\bar{E}(S) = K_{n,n}}} (-1)^{|S|+1} (-1)^{\chi(K_{n,n}) + 1} \\
	&= (-1)^{n+|E(G)|} \cdot \sum_{\substack{\emptyset \ne S \subseteq \mathcal{U}(G)\\\bar{E}(S) = K_{n,n}}} (-1)^{|S|+1} \qedhere
	\end{split}
	\end{equation*}
\end{proof}

Given a graph $G \subseteq K_{n,n}$, we say that $G$ has an \textbf{Incomplete Umbrella} if $\bar{E}(\mathcal{U}(G)) \ne K_{n,n}$, i.e., there exists some edge which is not present in any of the graphs in the umbrella of $G$. Observe that by Lemma \ref{base_of_graph}, this is a sufficient condition for exhibiting a zero dual coefficient.

\begin{leftbar}
	\begin{corollary}
		\label{small_basis_zero_coeff}
		Let $n > 1$ and let $G \subseteq K_{n,n}$ be a nonempty graph. Then:
		
		\[ \bar{E}(\mathcal{U}(G)) \ne K_{n,n} \implies a^\star_G = 0 \]
		
		\noindent 
	\end{corollary}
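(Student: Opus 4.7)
The plan is to observe that this corollary is an immediate consequence of Lemma \ref{base_of_graph}. That lemma expresses $a_G^\star$ as a signed sum over subsets $S \subseteq \mathcal{U}(G)$ whose combined edge set equals $K_{n,n}$. The hypothesis $\bar{E}(\mathcal{U}(G)) \neq K_{n,n}$ rules out any such $S$ existing.

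More concretely, I would proceed as follows. First, note the elementary monotonicity fact that for any $S \subseteq \mathcal{U}(G)$, the edge set $\bar{E}(S) = \bigcup_{H \in S} E(H)$ is contained in $\bar{E}(\mathcal{U}(G)) = \bigcup_{H \in \mathcal{U}(G)} E(H)$. Therefore, under the assumption $\bar{E}(\mathcal{U}(G)) \subsetneq E(K_{n,n})$, there is no nonempty subset $S \subseteq \mathcal{U}(G)$ satisfying $\bar{E}(S) = K_{n,n}$. Consequently, the summation on the right-hand side of Lemma \ref{base_of_graph} is vacuous, yielding
\[ a_G^\star = (-1)^{n+|E(G)|} \cdot \sum_{\substack{\emptyset \ne S \subseteq \mathcal{U}(G)\\\bar{E}(S) = K_{n,n}}} (-1)^{|S|+1} = 0. \]

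There is no real obstacle to this step; the entire content of the corollary is packed into Lemma \ref{base_of_graph}, and what remains is just the observation that an empty sum vanishes. The interesting work is already done: Lemma \ref{base_of_graph} reduced the problem of computing $a_G^\star$ to a combinatorial question about covers of $K_{n,n}$ by members of $\mathcal{U}(G)$, and Corollary \ref{small_basis_zero_coeff} simply records the most trivial obstruction to such a cover, namely that the union of all umbrella members already fails to be $K_{n,n}$. I would therefore keep the proof to a single short paragraph and move on to applying this criterion to the structural analysis of which graphs $G$ fail to be totally ordered.
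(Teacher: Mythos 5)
Your argument is correct and matches the paper's intent exactly: the paper states this corollary as an immediate consequence of Lemma \ref{base_of_graph} without a separate proof, and your spelled-out observation (that $\bar{E}(S) \subseteq \bar{E}(\mathcal{U}(G)) \subsetneq E(K_{n,n})$ for every $S \subseteq \mathcal{U}(G)$, so the sum is empty) is precisely the short justification the paper leaves implicit.
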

\end{leftbar}
\IfStrEq{\memomode}{ON}{
	\pagebreak
}{} 
\defsubsubsection{Graphs with an Incomplete Umbrella}

\begin{leftbar}
	\begin{definition}
		\label{wildcard_edge_defn}
		Let $n > 1$ and let $G \subseteq K_{n,n}$ be a nonempty graph. An edge $(a,b) \notin E(G)$ is called a \textbf{Wildcard Edge} for $G$ if and only if:
		
		\[ \forall H \in MC_n,\ H \supseteq G \cupdot \{(a,b)\}:\ (H \setminus \{(a,b)\}) \in MC_n \]
		
	\end{definition}
\end{leftbar}

\begin{leftbar}
	\begin{lemma}
		\label{wildcard_zero_coeff}
		Let $n > 1$ and let $G \subseteq K_{n,n}$ be a nonempty graph. Then:
		
		\[ G \text{ has a wildcard edge} \implies G \text{ has an incomplete umbrella} \]
	\end{lemma}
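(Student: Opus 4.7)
The plan is to observe that the statement follows directly from the definitions once one traces through what ``wildcard'' and ``minimal matching-covered supergraph'' say about a candidate umbrella element containing the wildcard edge. I would prove the contrapositive-style claim: if $(a,b) \notin E(G)$ is a wildcard edge, then no graph in $\mathcal{U}(G)$ contains the edge $(a,b)$; this immediately yields $\bar{E}(\mathcal{U}(G)) \subsetneq K_{n,n}$.

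First I would pick an arbitrary $H \in \mathcal{U}(G)$ with $(a,b) \in E(H)$, aiming for a contradiction. By definition of the umbrella, $H \in MC_n$ and $G \subseteq H$, so in particular $H \supseteq G \cupdot \{(a,b)\}$. The wildcard hypothesis on $(a,b)$ then supplies that $H \setminus \{(a,b)\} \in MC_n$. Since $(a,b) \notin E(G)$, removing it from $H$ does not affect containment of $G$, hence $G \subseteq H \setminus \{(a,b)\} \subsetneq H$. This exhibits a matching-covered graph strictly between $G$ and $H$, directly violating the minimality clause in the definition of $\mathcal{U}(G)$.

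Hence every $H \in \mathcal{U}(G)$ omits the edge $(a,b)$, so $(a,b) \notin \bar{E}(\mathcal{U}(G))$ and the umbrella is incomplete. The only mildly delicate point is checking that $\mathcal{U}(G)$ is nonempty to begin with, i.e., that there is at least one matching-covered graph containing $G$; this is immediate because $K_{n,n} \in MC_n$ and $K_{n,n} \supseteq G$, so one can always descend from $K_{n,n}$ to a minimal matching-covered supergraph of $G$. I do not expect any real obstacle here — the lemma is essentially bookkeeping between the two definitions, and the argument is a two-line application of minimality after unpacking ``wildcard''.
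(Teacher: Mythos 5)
Your proof is correct and follows essentially the same two-line minimality argument as the paper: assume some $H \in \mathcal{U}(G)$ contains the wildcard edge, invoke the wildcard property to obtain $G \subseteq H \setminus \{(a,b)\} \subsetneq H$ with $H \setminus \{(a,b)\} \in MC_n$, and contradict minimality. The remark about nonemptiness of $\mathcal{U}(G)$ is harmless but unnecessary (if the umbrella were empty its edge union would be empty and hence trivially not $K_{n,n}$), and the paper omits it.
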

\end{leftbar}
\begin{proof}
	Let $(a,b) \notin E(G)$ be a wildcard edge for $G$. We show that $(a,b) \notin \bar{E}(\mathcal{U}(G))$. Assume towards a contradiction that $(a,b) \in \bar{E}(\mathcal{U}(G))$, and let $H \in \mathcal{U}(G)$ be a graph such that $(a,b) \in E(H)$. Then by the definition of $(a,b)$, we have $H' = (H \setminus \{(a,b)\}) \in MC_n$, and furthermore $G \subseteq H' \subset H$, in contradiction to the fact that $H \in \mathcal{U}(G)$.
\end{proof}

Building upon wildcard edges, we now introduce the following (slightly weaker) sufficient condition:

\begin{leftbar}
	\begin{definition}
		\label{surplus_edge_definition}
		Let $n > 1$ and let $G \subseteq K_{n,n}$ be a nonempty graph. Denote by $A$ the set of left vertices of $G$. An edge $(a,b) \notin E(G)$ is called a \textbf{Surplus Edge} for $G$ if and only if:
		
		\[ \forall X \subset A,\ a \in X,\ b \notin N_G(X):\ |N_G(X)| > |X| \]
	\end{definition}
\end{leftbar}

The above can be seen as a strengthening of Hall's condition, in which we require that the condition holds with a \textit{positive surplus}. However, note that we only require the condition for a particular family of sets -- those in which $a$ is present, and $b$ is not in the neighbour set. Finally, we show that surplus edges are, in fact, wildcard edges.

\begin{leftbar}
	\begin{lemma}
		\label{surplus_edges_are_wildcards}
		Let $n > 1$ and let $G \subseteq K_{n,n}$ be a nonempty graph. Then:
		\[ (a,b) \notin E(G) \text{ is a surplus edge for }G \implies (a,b) \notin E(G) \text{ is a wildcard edge for }G \]
		
	\end{lemma}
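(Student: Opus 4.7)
The plan is to show that for every $H \in MC_n$ with $H \supseteq G \cup \{(a,b)\}$, the graph $H' \vcentcolon= H \setminus \{(a,b)\}$ is again matching-covered. Since the edge $(a,b)$ lies in a single elementary component $H_1$ of $H$ and all other elementary components are untouched by the removal, it suffices to prove the claim with $H_1$ in place of $H$. First I would verify that the surplus hypothesis transfers to the restriction $G_1 \vcentcolon= G \cap H_1$ within its vertex classes $A_1, B_1$: given $X \subsetneq A_1$ with $a \in X$ and $b \notin N_{G_1}(X)$, apply the original surplus property to $Y \vcentcolon= X \cup (A \setminus A_1) \subsetneq A$. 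Edges of $G \subseteq H$ respect elementary components, so $b \notin N_G(Y)$, and because the non-$H_1$ components are balanced we have $|N_G(A \setminus A_1)| \le |A \setminus A_1|$; subtracting this contribution from the surplus inequality on $Y$ yields $|N_{G_1}(X)| > |X|$. Henceforth assume $H$ is itself elementary.

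To show that $H'$ is matching-covered, I would fix an arbitrary edge $e = (u,v) \in E(H')$ and prove that $H'$ has a perfect matching containing $e$; equivalently, the graph $(H - u - v) \setminus \{(a,b)\}$ admits a perfect matching. By Hall's theorem this reduces to checking $|N_{H'}(X) \setminus \{v\}| \ge |X|$ for every $X \subseteq A \setminus \{u\}$. I would split into cases according to whether $a \in X$, whether $v \in N_{H'}(X)$, and how $b$ sits relative to $N_{H'}(X)$ versus $N_H(X)$. In every case except one, Hetyei's characterisation of elementary bipartite graphs (Theorem~\ref{hetyei_conditions}), which gives $|N_H(X)| \ge |X|+1$ for every $\emptyset \ne X \subsetneq A$, is already enough: for any $X$ with either $a \notin X$, or $b \notin N_H(X)$, or $b \in N_{H'}(X)$, one has $|N_{H'}(X)| = |N_H(X)| \ge |X|+1$, and subtracting at most one (for $v$) still leaves $\ge |X|$.

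The one delicate case, and the main technical point, is when $a \in X$, $v \in N_{H'}(X)$, and $b$ is reachable from $X$ in $H$ exclusively via the forbidden edge $(a,b)$, so $|N_{H'}(X)| = |N_H(X)| - 1$ and Hall demands $|N_H(X)| \ge |X|+2$. This is precisely where the surplus hypothesis must be invoked. Because no edge $(x,b)$ with $x \in X \setminus \{a\}$ lies in $H$, no such edge lies in $G$ either, and since $(a,b) \notin E(G)$ we conclude $b \notin N_G(X)$. The surplus condition for $(a,b)$ then yields $|N_G(X)| \ge |X|+1$. If Hetyei's bound were sharp, $|N_H(X)| = |X|+1$, then the containment $N_G(X) \subseteq N_H(X)$ combined with $|N_G(X)| = |N_H(X)|$ would force $N_G(X) = N_H(X)$, contradicting $b \in N_H(X) \setminus N_G(X)$. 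Hence $|N_H(X)| \ge |X|+2$, Hall's condition holds on $(H - u - v) \setminus \{(a,b)\}$, and a perfect matching of $H'$ containing $e$ is produced, completing the proof.
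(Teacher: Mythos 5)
Your argument is essentially correct and reaches the same structural conclusion as the paper, but through a noticeably different path. The paper argues by contradiction: it assumes $H' = H \setminus \{(a,b)\}$ is not matching-covered, passes to the component $C$ containing $(a,b)$, invokes Hetyei's condition ``$|N(X)| \ge |X|+1$ for all $\emptyset \ne X \subsetneq A_C$'' to extract a Hall-violating set $X$ for $C \setminus \{(a,b)\}$, observes this forces $a \in X$ and $b \notin N_{H'}(X)$, and then contradicts the surplus hypothesis in one line via $|N_{H'}(X)| \ge |N_G(X)| > |X|$. You instead go through the ``every edge is allowed'' facet of Hetyei's theorem, reduce to a single elementary component, and verify Hall's condition on $H'-u-v$ case by case; the surplus hypothesis is spent to upgrade $|N_H(X)| \ge |X|+1$ to $|N_H(X)| \ge |X|+2$ in the one delicate case. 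The core insight (surplus $\Rightarrow N_G(X)$ is strictly larger than $N_H(X)$ could afford to lose) is the same; the paper's contrapositive framing is shorter, while yours makes the Hall bookkeeping explicit and shows more directly \emph{why} every edge remains allowed.

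One genuine issue in your reduction step: after you restrict to $G_1 = G \cap H_1$ over $A_1 \cupdot B_1$, the transferred surplus condition quantifies over $X \subsetneq A_1$ with $a \in X$. If $H_1 = K_2$ (i.e. $A_1 = \{a\}$, $B_1 = \{b\}$, and $(a,b)$ is the only edge of the component) then this quantifier is vacuous, the ``reduced'' hypothesis carries no information, and the claim you are reducing to is actually \emph{false} — removing $(a,b)$ leaves isolated vertices, which is not matching-covered. So the reduction as stated is not sound. The fix is easy: before reducing, apply the original surplus hypothesis on $G$ to $X = \{a\}$ (legal since $n > 1$, $a \in X$, and $b \notin N_G(a)$) to conclude $|N_G(a)| \ge 2$, hence $\deg_H(a) \ge 3$ and $H_1 \ne K_2$. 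Note that the paper glosses over the same corner case when it asserts ``$K_2 \ne C \setminus \{(a,b)\}$'' without justification, but because the paper never passes to a restricted surplus hypothesis, the repair there is a mere remark, whereas in your argument it is needed to make the reduction valid. You also leave implicit the subcase $a \in X$, $b \in N_H(X) \setminus N_{H'}(X)$, $v \notin N_{H'}(X)$, which is not among your three ``easy'' conditions; it is immediate ($|N_{H'}(X)\setminus\{v\}| = |N_H(X)| - 1 \ge |X|$), but worth a sentence.
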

\end{leftbar}
\begin{proof}
	Let $(a,b) \notin E(G)$ be a surplus edge for $G$ and let $H \in MC_n$ such that $H \supseteq G \cupdot \{(a,b)\}$. Denote $H' = H \setminus \{(a,b)\}$. It remains to show that $H' \in MC_n$. Assume towards a contradiction that $H' \notin MC_n$ and denote by $C=(A_C \cupdot B_C, E) \in C(H)$ the connected component of $H$ containing the edge $(a,b)$. First, note that $K_2 \ne (C \setminus \{(a,b)\}) \in C(H')$, since elementary graphs are $2-connected$. Since $C \setminus \{(a,b)\}$ is not elementary, then by Theorem \ref{hetyei_conditions} there exists $\emptyset \ne X \subset A_C \subseteq A$ such that $|N_{H'}(X)| \le |X|$.
	
	Observe that $a \in X$ and $b \notin N_{H'}(X)$. Otherwise, we have $N_{H'}(X) = N_{H}(X)$ and since $H \in MC_n$ then $C$ is elementary and thus $|N_{H'}(X)| = |N_H(X)| > |X|$, a contradiction. However, since $(a,b)$ is a surplus edge for $G$, then for all $X \subset A$ such that $a \in X$, $b \notin N_G(X)$, we have $|N_G(X)| > |X|$. In particular, since $H' \supseteq G$, then for our $X$ we have $a \in X$ and $b \notin N_G(X)$ and thus $|N_{H'}(X)| \ge |N_G(X)| > |X|$, in contradiction to the definition of $X$.
\end{proof}

\defsubsubsection{Non-Totally Ordered Graphs Have a Zero Coefficient}
\label{dual_poly_upperbound_section}

Recall that the only monomials which may appear in $BPM_n^\star$ 
are those corresponding to graphs $G \in HVC_n$. Combining this characterization with those obtained using the Eulerian matching-covered lattice, we get:

\begin{leftbar}
	\begin{lemma}
		\label{non_containment_ordered_have_surplus_edges}
		Let $n > 1$ and let $G \in HVC_n$. If $G$ is \textbf{not totally ordered}, then $G$ has a surplus edge.
	\end{lemma}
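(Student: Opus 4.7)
The plan is to prove this by constructing an explicit surplus edge. Since $G$ is not totally ordered, I would first extract two left vertices $a_1, a_2 \in A$ with incomparable neighborhoods, witnessed by some $b_1 \in N_G(a_1) \setminus N_G(a_2)$ and $b_2 \in N_G(a_2) \setminus N_G(a_1)$. I would then argue that at least one of the two ``crossing'' non-edges, $(a_1, b_2)$ or $(a_2, b_1)$, must be a surplus edge for $G$, and prove this by contradiction.

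Suppose both $(a_1, b_2)$ and $(a_2, b_1)$ fail to be surplus. Unpacking Definition~\ref{surplus_edge_definition}, this produces proper subsets $X, X' \subsetneq A$ with $a_1 \in X$, $a_2 \in X'$, $b_2 \notin N_G(X)$, $b_1 \notin N_G(X')$, $|N_G(X)| \le |X|$, and $|N_G(X')| \le |X'|$. Next I would invoke the hypothesis $G \in HVC_n$: every edge of $G$ lies inside some Hall violator $K_{X_i, Y_i} \subseteq G$ with $|X_i|+|Y_i|=n+1$. In particular, the edges $(a_2, b_2), (a_1, b_1) \in E(G)$ are covered by Hall violators $K_{X^{*}, Y^{*}}, K_{X^{**}, Y^{**}} \subseteq G$ satisfying $a_2 \in X^{*}$, $b_2 \in Y^{*}$ and $a_1 \in X^{**}$, $b_1 \in Y^{**}$.

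The contradiction will emerge from two opposing cardinality chains. Every vertex of $X^{*}$ is adjacent to $b_2 \in Y^{*}$, while no vertex of $X$ is adjacent to $b_2$; hence $X^{*} \cap X = \emptyset$ and therefore $|Y^{*}| = (n+1) - |X^{*}| \ge |X|+1$. Analogously $X^{**} \cap X' = \emptyset$, giving $|Y^{**}| \ge |X'|+1$. On the other hand, $a_2 \in X^{*} \cap X'$ forces $Y^{*} \subseteq N_G(X')$, so $|X'| \ge |N_G(X')| \ge |Y^{*}| \ge |X|+1$; and by symmetry $a_1 \in X^{**} \cap X$ forces $|X| \ge |X'|+1$. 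These two inequalities are incompatible, closing the argument. I do not foresee a substantive obstacle beyond the choice of candidate surplus edges; once that choice is made, the $|X_i|+|Y_i|=n+1$ rigidity of Hall violators meshes cleanly with the tight-set hypotheses to force the contradiction.
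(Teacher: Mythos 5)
Your proof is correct, and it takes a genuinely different, symmetric route from the paper's. The paper first orders the two incomparable vertices by neighborhood size, choosing $a_i, a_j$ with $|N(a_i)| \ge |N(a_j)|$, and then pins down a single specific surplus edge $(a_i, b_k)$ for any $b_k \in N(a_j)\setminus N(a_i)$, closing the argument with one tight set $X$ and one Hall violator: the chain $n - |X_{j,k}| \ge |X| \ge |N(X)| \ge |N(a_i)| \ge |N(a_j)| \ge |Y_{j,k}|$ already contradicts $|X_{j,k}|+|Y_{j,k}|=n+1$. You instead avoid the WLOG cardinality comparison, concede to only proving that \emph{at least one} of the two crossing non-edges $(a_1,b_2)$, $(a_2,b_1)$ is surplus, and run a double contradiction with two tight sets $X, X'$ and two Hall violators $K_{X^*,Y^*}, K_{X^{**},Y^{**}}$, deriving the incompatible pair $|X'| \ge |X|+1$ and $|X| \ge |X'|+1$. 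The disjointness step $X^* \cap X = \emptyset$ and the inclusion $Y^* \subseteq N_G(a_2) \subseteq N_G(X')$ are justified correctly. The paper's version is slightly leaner (one Hall violator suffices, and it identifies exactly which edge is surplus), while yours trades that precision for symmetry and avoids the initial cardinality-ordering choice; both exploit the same underlying rigidity $|X_i|+|Y_i|=n+1$ of Hall violators against the tight-set hypotheses.
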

\end{leftbar}
\begin{proof}
	Let $A = \{a_1, \dots, a_n\}$, $B = \{b_1, \dots, b_n\}$ be two sets, and let $G =(A \cupdot B, E) \in HVC_n$, such that $G$ is not totally ordered. Thus, there exist two vertices $a_i, a_j \in A$ such that:
	\[N(a_i) \not\supseteq N(a_j) \ \ \land\ \  N(a_j) \not\supseteq N(a_i) \ \ \land\ \  |N(a_i)| \ge |N(a_j)| \]
	We will show that $\forall b_k \in (N(a_j) \setminus N(a_i)):$ $(a_i,b_k)$ is a surplus edge for $G$. Let $b_k \in N(a_j) \setminus N(a_i)$, $b_m \in N(a_i) \setminus N(a_j)$. Since $G \in HVC_n$, every edge of $G$ is covered by some graph $K \in HV_n$, and in particular so are $(a_i, b_m)$, $(a_j, b_k)$. Thus, there exist $X_{i,m}, X_{j,k} \subseteq A$, $Y_{i,m}, Y_{j,k} \subseteq B$ such that:
	\begin{equation*}
	\begin{split}
	|X_{i,m}| + |Y_{i,m}| &= n+1, \\
	(X_{i,m} \times Y_{i,m}) &\subseteq E(G)
	\end{split}
	\quad\quad
	\begin{split}
	|X_{j,k}| + |Y_{j,k}| &= n+1 \\
	(X_{j,k} \times Y_{j,k}) &\subseteq E(G)
	\end{split}
	\end{equation*}
	
	and furthermore, $a_i \in X_{i,m}$, $b_m \in Y_{i,m}$, $a_j \in X_{j,k}$ and $b_k \in Y_{j,k}$. Assume towards a contradiction that $(a_i, b_k)$ is not a surplus edge for $G$. Then, there exists $X \subset A$ such that $a_i \in X$, $b_k \notin N(X)$ and $|N(X)| \le |X|$.
	\begin{figure}[h!]
		\centering
		\captionsetup{justification=centering,margin=1cm}
		\includegraphics[width=5.5cm, keepaspectratio]{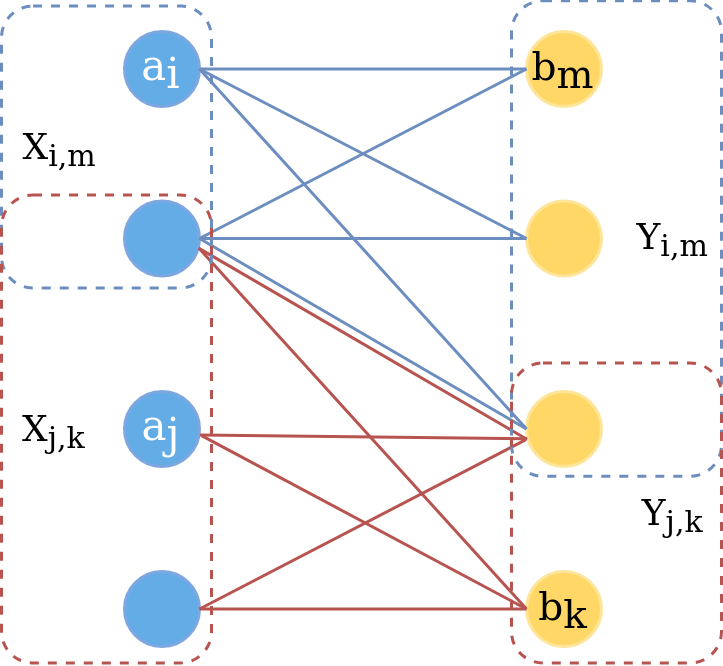}
		\caption{$G \in HVC_4$, which is not totally ordered. \newline The edge $(a_i, b_m)$ is covered by $K_{X_{i,m}, Y_{i,m}}$, and $(a_j, b_k)$ is covered by $K_{X_{j,k}, Y_{j,k}}$.} 
	\end{figure}
	
	Since $a_i \in X$ then $N(X) \supseteq N(a_i)$ and in particular $|N(X)| \ge |N(a_i)|$. Furthermore, observe that $X \cap X_{j,k} = \emptyset$, since otherwise $b_k \in N(X)$, in contradiction to the definition of $X$. Thus, we have $n - |X_{j,k}| \ge |X|$. Moreover, recall that by the definition of $a_i$ and $a_j$, we have $|N(a_i)| \ge |N(a_j)|$. Since the edge $(a_j, b_k)$ is covered by $K_{X_{j,k}, Y_{j,k}}$, then $N(a_j) \supseteq Y_{j,k}$. Lastly, by the definition of $X$, $|N(X)| \le |X|$. Putting all the above inequalities together, we have:
	\begin{equation*}
	\begin{split}
	n - |X_{j,k}| \ge |X| \ge |N(X)| \ge |N(a_i)| \ge |N(a_j)| \ge |Y_{j,k}|
	\end{split}
	\end{equation*}
	
	Therefore $|X_{j,k}| + |Y_{j,k}| \le n$, in contradiction to the fact that $|X_{j,k}| + |Y_{j,k}| = n+1$.	
\end{proof}
\begin{leftbar}
	\begin{corollary}
		\label{containment_order_is_necessary}
		Let $n > 1$ and let $G \subseteq K_{n,n}$. If $G$ is \textbf{not totally ordered}, then $a_G^\star = 0$.
	\end{corollary}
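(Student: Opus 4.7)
The plan is to assemble this corollary directly from the chain of preparatory lemmas that have been proved in the preceding subsubsections, splitting into two cases according to whether $G$ lies in $HVC_n$ or not.

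First, I would handle the easy case $G \notin HVC_n$. Since $BPM_n^\star$ was shown to be the graph cover function of the family $HV_n$ of Hall violators, Proposition~\ref{monomials_graph_cover_functions} guarantees that only graphs covered by $HV_n$ can contribute a nonzero monomial. Thus if $G \notin HVC_n$ then $a_G^\star = 0$ immediately, regardless of whether $G$ is totally ordered.

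The interesting case is $G \in HVC_n$ with $G$ not totally ordered. Here I would chain the four lemmas already at our disposal. By Lemma~\ref{non_containment_ordered_have_surplus_edges}, the hypothesis of being $HVC_n$-covered but not totally ordered produces a surplus edge $(a,b) \notin E(G)$. By Lemma~\ref{surplus_edges_are_wildcards}, every surplus edge is a wildcard edge, so $(a,b)$ is in particular a wildcard edge for $G$. By Lemma~\ref{wildcard_zero_coeff}, the existence of a wildcard edge forces $G$ to have an incomplete umbrella, i.e.\ $\bar{E}(\mathcal{U}(G)) \ne K_{n,n}$. Finally, Corollary~\ref{small_basis_zero_coeff} concludes that $a_G^\star = 0$.

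There is no real obstacle here; all the structural work was done in the earlier lemmas. The only minor thing to be careful about is the case split: one must remember that ``not totally ordered'' does not automatically imply ``in $HVC_n$'', so the $G \notin HVC_n$ case must be dispatched separately via the graph cover observation before invoking the surplus/wildcard/umbrella machinery. Once both cases are addressed, the corollary follows.
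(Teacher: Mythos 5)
Your proof is correct and follows exactly the chain of lemmas the paper intends: dispatch $G \notin HVC_n$ via Proposition~\ref{monomials_graph_cover_functions}, and for $G \in HVC_n$ not totally ordered, apply Lemma~\ref{non_containment_ordered_have_surplus_edges}, Lemma~\ref{surplus_edges_are_wildcards}, Lemma~\ref{wildcard_zero_coeff}, and Corollary~\ref{small_basis_zero_coeff} in sequence. Your explicit handling of the $G \notin HVC_n$ case is a careful touch that the paper leaves implicit.
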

\end{leftbar}

\defsubsubsection{Strictly Totally Ordered Graphs Have a Non-Zero Coefficient}
\label{dual_poly_lowerbound_section}

\begin{leftbar}
	\begin{lemma}
		\label{lower_bound_graph}
		Let $n > 1$ and let $G \subseteq K_{n,n}$ be \textbf{strictly totally ordered}. Then: 
		\[a^\star_G = (-1)^{n+1}\]
	\end{lemma}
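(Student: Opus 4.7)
My plan is to bypass the umbrella machinery of Lemma~\ref{base_of_graph} and compute $a^\star_G$ directly from Lemma~\ref{upper_sum_mobius}, by parameterizing the matching-covered supergraphs of $G$ explicitly via Hetyei's criterion. After relabeling vertices I may assume the canonical staircase $N_G(a_i) = \{b_1,\dots,b_{n-i+1}\}$, so that $E(G) = \{(a_i,b_j) : i+j \le n+1\}$ and the ``missing'' edges form $M = \{(a_i,b_j) : i+j \ge n+2\}$.

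The first substantive step is to characterize which $H \supseteq G$ lie in $MC_n$. Because $a_1$ is adjacent in $G$ to every right vertex and every $a_i$ is adjacent to $b_1$, $G$ is connected; hence so is any $H \supseteq G$ on the same vertex set, and $H$ is matching-covered iff it is elementary. Applying Theorem~\ref{hetyei_conditions} and using that $N_G(X) = N_G(a_{\min X})$, a short case analysis will show that the only strict Hall constraints $|N_H(X)| \ge |X|+1$ not already enforced by $G$ itself come from the ``tail'' sets $X = \{a_{n-k+1},\dots,a_n\}$ for $k \in \{1,\dots,n-1\}$. Setting
\[ R_k = \{(i,j) \in M : i \ge n-k+1,\ j \ge k+1\}, \]
the characterization becomes: $H = G \cup S$ with $S \subseteq M$ lies in $MC_n$ iff $S$ meets every $R_k$.

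With this in hand, the rest is mechanical. Since $H$ is connected, $\chi(H) = |E(G)| + |S| - 2n + 1$, and a parity simplification collapses Lemma~\ref{upper_sum_mobius} to
\[ a^\star_G \;=\; \sum_{S\text{ a hitting set for }\{R_1,\dots,R_{n-1}\}} (-1)^{|S|}. \]
I will then expand the ``$S$ hits every $R_k$'' indicator by standard inclusion-exclusion, writing $R_K = \bigcup_{k \in K} R_k$ and obtaining
\[ a^\star_G \;=\; \sum_{K \subseteq [n-1]} (-1)^{|K|} \sum_{S \subseteq M \setminus R_K} (-1)^{|S|} \;=\; \sum_{K :\, M \subseteq R_K} (-1)^{|K|}, \]
since the inner sum vanishes unless $M \setminus R_K = \emptyset$. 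To close, I will observe that the anti-diagonal point $(n-k+1,k+1) \in M$ belongs to $R_{k'}$ precisely when $k' = k$, so $M \subseteq R_K$ forces $k \in K$ for every $k \in [n-1]$; hence $K = [n-1]$ is the unique covering choice, yielding $a^\star_G = (-1)^{n-1} = (-1)^{n+1}$. The main obstacle I expect is the second step --- correctly isolating the rectangles $R_k$ and verifying that they capture exactly the elementarity constraint for supergraphs of $G$; once that is in place the inclusion-exclusion collapses almost immediately thanks to the singleton ``interval'' of each anti-diagonal edge.
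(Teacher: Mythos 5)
Your proposal is correct, and it takes a genuinely different route from the paper. The paper proves this lemma by \emph{leaving} the matching-covered lattice entirely and going back to the Hall-violator DNF of Proposition~\ref{monomials_graph_cover_functions}: it observes that the only Hall violators contained in a strictly totally ordered $G$ are the nested family $\{K_{A_k,B_k}\}_{k=1}^n$, that the union of this family is exactly $G$, and that no proper subfamily covers $G$ (deleting $K_{A_k,B_k}$ misses the corner edge $(a_k,b_k)$); the coefficient $(-1)^{n+1}$ then falls out in one line. You instead stay inside the machinery of Lemma~\ref{upper_sum_mobius}, parameterize the elementary supergraphs $H=G\cup S$ explicitly via Hetyei's criterion, and reduce the alternating sum to an inclusion-exclusion over anti-diagonal rectangles. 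Your reduction is sound: since $G$ is a connected spanning subgraph, $H\in MC_n$ iff $H$ is elementary; the only sets $X$ for which $|N_G(X)|\ge|X|+1$ can fail are the tails $\{a_{n-k+1},\dots,a_n\}$, giving exactly the rectangles $R_k$; the parity computation $(-1)^{|E(G)|+1}\cdot(-1)^{\chi(H)}=(-1)^{|S|}$ is correct; and the collapse to $K=[n-1]$ is forced by the private corner $(n-k+1,k+1)\in R_k\setminus\bigcup_{k'\ne k}R_{k'}$ together with $M=\bigcup_k R_k$. Your approach is more computational but has the merit of proving both the vanishing and the non-vanishing results from the same source, Lemma~\ref{upper_sum_mobius}, whereas the paper uses the Hall-violator DNF only here; the paper's argument is shorter because the nested Hall violators make the covering-set count trivially equal to one.
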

\end{leftbar}
\begin{proof}
	Let $G=(A \cupdot B, E) \subseteq K_{n,n}$ be a graph, where $A=\{a_1, \dots, a_n\}$ and $B = \{b_1, \dots, b_n\}$. The edges of $G$ are given by: $\forall i \in [n]: N_G(a_i) = \{b_1, \dots, b_i\}$. Observe that $G$ is \textbf{strictly totally ordered}, since $N(a_n) \supsetneq N(a_{n-1}) \supsetneq \dots N(a_1) \supsetneq \emptyset$.
	
	For every $k \in [n]$, denote $A_k = \{a_k, \dots, a_n\} \subseteq A$, $B_k = \{b_1, \dots, b_k\} \subseteq B$. By the definition of $G$, $\forall k \in [n]$, $K_{A_k, B_k} \subseteq G$. We now show that for any $K_{X,Y} \in HV_n$ such that $K_{X,Y} \subseteq G$ and $|Y| = k$, we necessarily have $X = A_k$ and $Y = B_k$.
	
	Assume towards a contradiction this is not the case. Let $K_{X,Y} \in HV_n$ such that $K_{X,Y} \subseteq G$, $|Y|=k$ and $Y \ne B_k$ or $X \ne A_k$. If $Y = B_k$, then for any $a_i \in X$ where $a_i \notin A_k$ (i.e., $i < k$), the edge $(a_i, b_1) \notin E(G)$ -- a contradiction. Otherwise, let $Y \ne B_k$ and let $j > k$ the maximal index such that $y_j \in Y$. By the definition of $G$, $\bigcap_{b \in Y} N_G(b) = N_G(y_j) = A_j$. Since $K_{X,Y} \subseteq G$, then in particular $X \subseteq A_j$, and therefore $|Y| + |X| \le k + |A_j| = n$, in contradiction to the fact that $K_{X,Y} \in HV_n$.
	
	Thus, the only Hall violators appearing in $G$ are the set:
	
	\[ \mathcal{H} \eqdef \set{H \in HV_n}{H \subseteq G} = \set{K_{A_k, B_k}}{k \in [n]} \]
	
	The union of all graphs in $\mathcal{H}$ is exactly $G$, therefore $G \in HVC_n$. However, recall that $BPM_n^\star$ is a graph cover function for the set $HV_n$, and thus by arithmetizing the formula representing the function (recall Proposition \ref{monomials_graph_cover_functions}), we get that:
	
	\[ BPM_n^\star(x_{1,1}, \dots, x_{n,n}) =  \sum_{G \in HVC_n} \left(\sum_{\substack{\emptyset \ne S \subseteq HVC_n\\ \bar{E}(S)=E(G)}}(-1)^{|S|+1}\right) \prod_{(i,j) \in E(G)} x_{i, j} \]
	
	Observe that the \textit{only} set of graphs $S \subseteq HVC_n$ whose union is equal to $G$ is the set $\mathcal{H}$ itself, since by omitting any $K_{A_k, B_k}$ we will fail to cover the edge $(a_k, b_k) \in E(G)$. Thus $a^\star_G = (-1)^{|\mathcal{H}| + 1} = (-1)^{n+1}$, as required.
	
	Lastly, we observe that \textbf{any} strictly totally ordered graph $G' \subseteq K_{n,n}$ is equivalent, up to permutations over each bipartition, to $G$ (and therefore has the same coefficient). This equivalence can be achieved by sorting the vertices of each bipartition by the cardinality of their neighbour sets, where the left vertices are sorted in ascending order, and the right vertices in descending order. 
\end{proof}

\centerline{This concludes the proof of Theorem \ref{finer_characterization_of_dual}.}

\defsubsection{Counting the Monomials of $BPM^\star_n$}

Using Theorem \ref{finer_characterization_of_dual}, we now deduce the following asymptotically tight bound on the number of monomials appearing in $BPM^\star$.

\begin{leftbar}
	\begin{corollary}
		\label{counting_monomials_dual}
		Let $n > 1$. The number of monomials in $BPM_n^\star$ satisfies:
		
		\[ (n!)^2 \le |mon(BPM_n^\star)| < (n+2)^{2n+2} \]
		
		And in particular: 
		
		\[\log_2 \left(|mon(BPM_n^\star)| \right) = 2n \log_2 n + O(n) = \Theta(n \log n)\]
	\end{corollary}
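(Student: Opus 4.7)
The plan is to derive both bounds directly from Theorem \ref{finer_characterization_of_dual}. The lower bound follows because every strictly totally ordered graph has coefficient $(-1)^{n+1} \ne 0$, so it suffices to count them. The upper bound follows because every graph that is not totally ordered has coefficient $0$, so $|mon(BPM_n^\star)|$ is at most the number of totally ordered subgraphs of $K_{n,n}$.

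For the lower bound, I would first observe that any strictly totally ordered graph $G$ with witness $N(a_1) \supsetneq \cdots \supsetneq N(a_n) \supsetneq \emptyset$ must satisfy $|N(a_i)| = n-i+1$, since these are $n$ strictly decreasing positive integers bounded above by $n$. In particular the left-degrees are pairwise distinct, so the ordering $a_1,\ldots,a_n$ is uniquely recoverable from $G$. A (ordering, chain) pair is then specified by $n!$ bijections from the left vertex set to $[n]$ together with $n!$ descending chains of the forced sizes $n, n-1, \ldots, 1$ (each such chain being equivalent to a permutation of $[n]$ recording which right vertex is dropped at each step), and distinct pairs give distinct graphs by the uniqueness above. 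This produces exactly $(n!)^2$ strictly totally ordered graphs.

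For the upper bound, I would parametrize each totally ordered graph by an ordering of its $n$ left vertices ($n!$ options) together with, for each right vertex $b$, a threshold $t_b \in \{0,1,\ldots,n\}$ encoding $b \in N(a_i) \Leftrightarrow i \leq t_b$ --- giving $(n+1)^n$ descending chains $N(a_1) \supseteq \cdots \supseteq N(a_n)$. Hence there are at most $n!(n+1)^n \leq (n(n+1))^n < (n+2)^{2n} < (n+2)^{2n+2}$ totally ordered graphs. The $\Theta(n \log n)$ estimate then follows by applying Stirling to the lower bound and expanding the upper bound: $\log_2((n!)^2) = 2n\log_2 n + O(n)$ and $\log_2((n+2)^{2n+2}) = (2n+2)\log_2(n+2) = 2n\log_2 n + O(n)$.

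The only real subtlety to verify is the uniqueness claim in the lower bound (that $(n!)^2$ (ordering, chain) pairs correspond to $(n!)^2$ distinct graphs), which is immediate from the strictly decreasing degree sequence forcing the ordering. The upper-bound counting does allow harmless overcounting when equal-degree left vertices are permuted, but this only weakens the inequality in the safe direction.
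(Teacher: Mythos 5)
Your proof is correct. The lower bound matches the paper's argument in substance: both rely on Lemma~\ref{lower_bound_graph}, observe that a strictly totally ordered graph has all $n$ left degrees (namely $n, n-1, \ldots, 1$) and all $n$ right degrees pairwise distinct, and conclude that the $(n!)^2$ permutation pairs act freely, yielding $(n!)^2$ distinct strictly totally ordered graphs. Your phrasing via (ordering, chain) pairs is equivalent to the paper's phrasing via bipartition permutations.

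The upper bound is where you genuinely diverge. The paper first derives an \emph{exact} count of totally ordered graphs as
\[
|C_n| = \sum_{k=1}^{n+1} \left((k-1)!\cdot \stirlingII{n+1}{k}\right)^2,
\]
by partitioning each side of the bipartition (padded with a dummy vertex to encode isolated and full vertices) into blocks of equal-neighbourhood vertices, then ordering those blocks. It then relaxes this to $\bigl(\sum_{k} k!\, \stirlingII{n+1}{k}\bigr)^2 = F_{n+1}^2$ and invokes the Fubini-number bound $F_{n+1} < (n+2)^{n+1}$. You instead overcount directly: choose an ordering of the left vertices ($n!$ ways) and a threshold $t_b \in \{0,\ldots,n\}$ for each right vertex ($(n+1)^n$ ways), so $|C_n| \le n!\,(n+1)^n \le (n(n+1))^n < (n+2)^{2n}$. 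Your route is more elementary (no Stirling or Fubini numbers), gives a slightly tighter intermediate estimate, and tolerates the overcounting of equal-degree left vertices because it only needs to be an upper bound; the paper's route, by contrast, produces the exact count $|C_n|$ as a byproduct, which is of some independent interest. Both reach the stated $(n+2)^{2n+2}$ bound, and your asymptotic expansion matching upper and lower $\log$'s to $2n\log_2 n + O(n)$ is correct.
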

\end{leftbar}
\begin{proof}
	Let $n > 1$. For the lower bound, let $G$ be a strictly totally ordered graph. By Lemma \ref{lower_bound_graph}, \textit{all} strictly totally ordered graphs, and in particlar $G$, have $a_G^\star = (-1)^{n+1}$. However, since no two right or left vertices of $G$ have the same set of neighbours, any pair of permutations over the left and right bipartitions yields a new strictly totally ordered graph $\tilde{G} \cong G$, thus completing the lower bound. 
	
	For the upper bound, let $U = \{u_1, \dots, u_{n+1}\}$, $V = \{v_1, \dots, v_{n+1}\}$ be two sets. Denote by $C_n$ the set of all graphs $G \subseteq K_{n,n}$ that are \textit{totally ordered}. We begin by showing that:
	\[ |C_n| = \sum_{k=1}^{n+1} \left((k-1)! \cdot \stirlingII{n+1}{k}\right)^2 \]
	
	Where the notation $\stirlingII{n}{k}$ refers to the Stirling number of the second kind. To prove the equality, let us explicitly construct the set $C_n$ as follows; for every $1 \le k \le n+1$, let:
	\begin{equation*}
	\begin{split}
	U = U_1 \cupdot U_2 \dots \cupdot U_k \quad\quad\quad V = V_1 \cupdot V_2 \dots \cupdot V_k
	\end{split}
	\end{equation*}
	
	be partitions of $U,V$, respectively, into $k$ non-empty subsets, where without loss of generality $u_{n+1} \in U_k$ and $v_{n+1} \in V_k$. Then, for every $\pi, \tau \in S_{k-1}$, consider the graph $G \in C_n$, whose edges are given by:
	\[ \forall i \in [k-1]:\ \forall u \in U_{\pi(i)}:\ N_G(u) = V_{\tau(1)} \cupdot \dots \cupdot V_{\tau(i)} \]
	Recall that the number of partitions of $n$ elements into $k$ non-empty subsets is given by $\stirlingII{n}{k}$, the Stirling number of the second kind. Thus by the above construction, the cardinality of the set $C_n$ satisfies: 
	\[ |C_n| = \sum_{k=1}^{n+1} \underbrace{\vphantom{\stirlingII{n+1}{k}^2}\left((k-1)!\right)^2}_{\text{Choosing }\pi, \tau} \ \cdot \underbrace{\stirlingII{n+1}{k}^2}_{\text{Partitioning }U,V} \]
	Therefore:
	\begin{equation*}
	\begin{split}
	|mon(BPM_n^\star)| \le |C_n| &= \sum_{k=1}^{n+1} \left((k-1)! \cdot \stirlingII{n+1}{k}\right)^2  \\
	&\le \left(\sum_{k=1}^{n+1} k! \cdot \stirlingII{n+1}{k}\right)^2 = (F_{n+1})^2  
	\end{split}
	\end{equation*}
	
	Where $F_n$ denotes the n'th Fubini number. We now use the upper bound \cite{mezo2019combinatorics}: $\forall n \ge 1:\ F_n < (n+1)^n$, thereby concluding the proof.
\end{proof}

\defsubsubsection{Is the Totally Ordered Condition Necessary?}
\label{tot_ord_necessary}

Since $BPM_n^\star$ is a graph cover function for $HV_n$, the only monomials which \textit{may} appear in $BPM_n^\star$ are those corresponding to graphs $G \in HVC_n$ -- i.e., graphs covered by Hall violators. Clearly the number of Hall violators is $\Omega(2^{2n})$, however, one might wonder about a corresponding \textit{upper bound} for the number of graphs \textit{covered} by Hall violators. In particular, could we perhaps have derived as strong an asymptotic bound as the one yielded by the \textit{totally ordered} condition (Definition \ref{totally_and_strictly_totally_ordered}), by simply bounding the size of the set $HVC_n$? The following proposition shows that this is not the case, namely, there are (asymptotically) many more graphs which are \textit{covered} by Hall violators:

\begin{leftbar}
	\begin{proposition}
		\label{lowerbound_hvc_n}
		Let $n > 1$. Then:
		
		\[ \log_2(|HVC_n|) \ge \left\lfloor \frac{n}{2} \right\rfloor \left(\left\lceil \frac{n}{2} \right\rceil + 1\right) \ge \frac{n^2}{4} - 1 \]
	\end{proposition}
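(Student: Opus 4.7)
The plan is to exhibit a large explicit family of graphs inside $HVC_n$ and then count it. The underlying idea is to combine a ``full left block'' and a ``full right block'' whose sizes sum to exactly $n-1$, so that together with any one additional edge they form a Hall violator of total size $n+1$; this lets us freely toggle an entire grid of extra edges and still stay inside $HVC_n$.

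Concretely, I would set $k = \lceil n/2 \rceil$ and $m = \lfloor n/2 \rfloor - 1$, so that $k + m = n - 1$, and fix subsets $L \subseteq A$, $R \subseteq B$ with $|L| = k$, $|R| = m$ (where $A, B$ are the two color classes of $K_{n,n}$; for $n \in \{2,3\}$ we have $m = 0$ and $R = \emptyset$). For each subset $F \subseteq (A \setminus L) \times (B \setminus R)$ I would define the graph $G_F$ with edge set
\[ E(G_F) = (L \times B) \cup (A \times R) \cup F. \]
Since $(A \setminus L) \times (B \setminus R)$ is disjoint from both $L \times B$ and $A \times R$, distinct $F$ yield distinct $G_F$, giving $2^{(n-k)(n-m)} = 2^{\lfloor n/2 \rfloor (\lceil n/2 \rceil + 1)}$ candidate graphs.

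The main step is to verify $G_F \in HVC_n$ for every $F$, i.e., every edge of $G_F$ lies in some Hall violator contained in $G_F$. I would split by edge type. Any edge $(a,b)$ with $a \in L$ is covered by $K_{\{a\}, B}$, which is a Hall violator since $1 + n = n+1$ and which sits inside $G_F$ because $L \times B \subseteq E(G_F)$. Edges with $b \in R$ are handled symmetrically by $K_{A, \{b\}}$. For a free edge $(a,b) \in F$, the biclique $K_{L \cup \{a\},\, R \cup \{b\}}$ has total size $(k+1) + (m+1) = n+1$ (this is precisely where the identity $k + m = n-1$ is used), and its edges decompose as $L \times R$, $L \times \{b\}$, $\{a\} \times R$, and $\{(a,b)\}$, each of which is in $E(G_F)$ by construction.

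To close, I would verify the numerical tail $\lfloor n/2 \rfloor (\lceil n/2 \rceil + 1) \ge n^2/4 - 1$ by checking both parities: for $n = 2t$ the expression equals $t(t+1) = n^2/4 + n/2$, and for $n = 2t+1$ it equals $t(t+2) = (n^2-1)/4 + t$, both comfortably at least $n^2/4 - 1$. I do not expect a real obstacle in the proof; the whole content is in spotting the ``full block plus free grid'' parameterization and recognizing that the arithmetic constraint $|L| + |R| = n-1$ is exactly what lets any additional edge in the free grid be absorbed into a single Hall violator of size $n+1$.
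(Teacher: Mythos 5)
Your construction is correct and is essentially the paper's: the paper fixes (for $n=2k$) a base graph $(X\times B)\cup(Y\times U)$ with $|X|=k$, $|U|=k-1$, which equals your $(L\times B)\cup(A\times R)$, and absorbs each free edge $(y,v)$ via the size-$(n+1)$ biclique $K_{X\cup\{y\},\,U\cup\{v\}}$, exactly as you do with $K_{L\cup\{a\},\,R\cup\{b\}}$. The only cosmetic difference is that you handle odd $n$ explicitly via floors and ceilings, whereas the paper writes the argument for even $n$ only.
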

\end{leftbar}
\begin{proof}
	Let $n > 1$ and without loss of generality assume that $n = 2k$ where $k \in \mathbb{N}^+$. Let $A$,$B$ be two sets such that $|A|=|B|=n$. The lower bound follows by constructing a graph $G=(A \cupdot B, E_G) \in HVC_n$ where $|E(G)| = n^2 - n/2(n/2 + 1)$, such that $\{H \supseteq G\} \subseteq HVC_n$. First, partition each bipartition $A$,$B$ into two sets, as follows:
	\begin{equation*}
	\begin{split}
	A = (X \cupdot Y): \\
	B = (U \cupdot V): \\
	\end{split}
	\quad
	\begin{split}
	X &= \{a_1, \dots, a_{k}\}\ \ \ , \\
	U &= \{b_1, \dots, b_{k-1}\}\ ,
	\end{split}
	\quad
	\begin{split}
	Y &= \{a_{k + 1}, \dots, a_{2k}\} \\
	V &= \{b_{k}, \dots, b_{2k}\}
	\end{split}
	\end{equation*}
	
	The edges of $G$ are formed by connecting all edges between $X$ and $B$, and all edges between $Y$ and $U$, thus: $E(G) = (X \times B) \cup (Y \times U)$. Observe that $G \in HVC_n$, since it can be covered by taking $k$ copies of $K_{x,B}$, one for each $x \in X$, and taking another $k-1$ copies of $K_{A,u}$, one for each $u \in U$.
	
	Any missing edge $(y,v) \notin E(G)$ (where $y \in Y$ and $v \in V$) can be covered by $K_{X \cupdot \{y\}, U \cupdot \{v\}}$ (the complete bipartite graph connecting $X \cupdot \{y\}$ and $U \cupdot \{v\}$). Observe that $K_{X \cupdot \{y\}, U \cupdot \{v\}} \in HV_n$, since $|X \cupdot U \cupdot \{y, v\} | = n+1$. Thus $\{H \supseteq G\} \subseteq HVC_n$, as required.
\end{proof}

\defsubsection{Corollaries of Theorem 2}

\defsubsubsection{Communication Matrix Rank}

\paragraph{The 2-Player Communication Problem of Bipartite Perfect Matching} Consider the following communication problem. Given an input graph $G \subseteq K_{n,n}$, its edges are distributed between two players, Alice and Bob, according to some arbitrary fixed partition. The players' task is to devise a communication protocol to determine whether $G$ contains a bipartite perfect matching. Clearly, the rank of the associated communication matrix for this problem is at least exponential in $n$ (e.g., using a fooling set argument). Interestingly, the compact representation of $BPM_n^\star$ given by Theorem \ref{finer_characterization_of_dual} allows us to deduce that the rank of the communication matrix is, in fact, \textit{at most} exponential in $n \log n$.

\begin{corollary}
	Let $M$ be the communication matrix for the 2-player communication problem of bipartite perfect matching. The rank of $M$ over the Reals is bounded by:
	
	\[  rank_{\mathbb{R}}(M) \le (n+2)^{2n + 2} = 2^{\mathcal{O}(n \log n)} \]
\end{corollary}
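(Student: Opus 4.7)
The plan is to upper bound the rank of the communication matrix by the number of monomials in the polynomial representation of $BPM_n^\star$, and then invoke Corollary \ref{counting_monomials_dual}. The basic idea is the standard observation that any multilinear polynomial with $k$ monomials, once its variables are partitioned between Alice and Bob, factors (monomial by monomial) into a product of an ``Alice-side'' function and a ``Bob-side'' function, giving a decomposition of the communication matrix as a sum of $k$ rank-one matrices.

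More concretely, I would first reduce the problem from $BPM_n$ to $BPM_n^\star$. Since $BPM_n^\star(x) = 1 - BPM_n(\bar{x})$, the communication matrix $M^\star$ of $BPM_n^\star$ is obtained from $M$ by permuting rows and columns (which does not change the rank) and then subtracting from the all-ones matrix $J$. As $J$ has rank $1$, we get $|\,\mathrm{rank}_\mathbb{R}(M) - \mathrm{rank}_\mathbb{R}(M^\star)| \le 1$, so it suffices to bound the rank of $M^\star$.

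Next, fix the partition of $E(K_{n,n})$ into Alice's edges $E_A$ and Bob's edges $E_B$. Using Theorem \ref{finer_characterization_of_dual}, write
\[ BPM_n^\star(x) = \sum_{G \subseteq K_{n,n}} a_G^\star \prod_{(i,j) \in E(G)} x_{i,j}. \]
For each $G$ with $a_G^\star \ne 0$, split $E(G) = (E(G) \cap E_A) \cupdot (E(G) \cap E_B)$, so that the monomial factors as $\bigl(\prod_{(i,j) \in E(G) \cap E_A} x_{i,j}\bigr) \cdot \bigl(\prod_{(i,j) \in E(G) \cap E_B} x_{i,j}\bigr)$. Evaluating at Alice's input $x_A$ and Bob's input $x_B$ therefore expresses $M^\star_{x_A,x_B}$ as a sum of $|mon(BPM_n^\star)|$ terms, each of the form $\alpha_G \cdot u_G(x_A) \cdot v_G(x_B)$, i.e.\ each term contributes a rank-one matrix $u_G v_G^\top$ (scaled). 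Hence $\mathrm{rank}_\mathbb{R}(M^\star) \le |mon(BPM_n^\star)|$.

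Finally, combining with Corollary \ref{counting_monomials_dual}, which gives $|mon(BPM_n^\star)| < (n+2)^{2n+2}$, and adding the $+1$ from the $J - M^\star$ step yields
\[ \mathrm{rank}_\mathbb{R}(M) \le (n+2)^{2n+2} + 1 = 2^{\mathcal{O}(n \log n)}. \]
There is really no hard step here; the only minor subtlety is the duality bookkeeping (flipping inputs preserves rank, and flipping the output changes it by at most $1$). The substantive content has already been extracted by Theorem \ref{finer_characterization_of_dual} and Corollary \ref{counting_monomials_dual}; the present corollary is the translation of the monomial count into the communication-matrix rank.
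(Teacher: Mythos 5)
Your proposal is correct and follows essentially the same route as the paper: bound $\mathrm{rank}(J-M)$ by the number of monomials of $BPM_n^\star$ via the standard per-monomial rank-one factorization across the edge partition, then add $1$ and invoke Corollary \ref{counting_monomials_dual}. The only cosmetic difference is that you make the duality bookkeeping (flipping inputs is a row/column permutation, hence rank-preserving) explicit where the paper leaves it implicit; also note that since the monomial count is \emph{strictly} less than $(n+2)^{2n+2}$, the $+1$ can be absorbed to recover the paper's clean bound $\mathrm{rank}_\mathbb{R}(M)\le (n+2)^{2n+2}$.
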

\begin{proof}
	Let $M$ be the aforementioned communication matrix, and let $\bar{M} = J - M$, where $J$ is the all-ones matrix. The polynomial $BPM^\star_n$ induces an (at most) $|mon(BPM^\star_n)|$-rank decomposition of $\bar{M}$, since each monomial is a rank-1 matrix (see, e.g., \cite{nisan1995rank}). However, $rank(M) \le rank(\bar{M}) + 1$, and by Corollary \ref{counting_monomials_dual}, $|mon(BPM^\star_n)| < (n+2)^{2n + 2}$, thus concluding the proof.
\end{proof}

\defsubsubsection{Lower Bound for $OR$ Decision Tress}

Much in the same way that the multilinear polynomial representing $BPM_n$ allowed us to derive query complexity lower bounds for $AND$ decision trees, the multilinear polynomial representing $BPM_n^\star$ can be used to obtain similar lower bounds against $OR$ decision trees. The proof is very similar to that of Lemma \ref{and_complexity_lb}, but differs in several key steps, thus we provide it below for completeness.

\begin{leftbar}
	\begin{lemma}
		\label{or_complexity_lb}
		Let $f: \{0,1\}^n \rightarrow \{0,1\}$ be a boolean function. Then:
		\[D^{OR}(f) \ge \log_3(|mon(f^\star)|)\]
		
		\noindent Where $f^\star$ is the dual function of $f$.
	\end{lemma}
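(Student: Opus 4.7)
The plan is to adapt the argument of Lemma \ref{and_complexity_lb} via the duality identity $f^\star(x) = 1 - f(1-x)$. Given an $OR$ decision tree $T$ computing $f$ of depth $d$, I partition the leaves of $T$ into $0$-leaves and $1$-leaves; since each input reaches a unique leaf, one has
\[
f^\star(x) \;=\; 1 - f(1-x) \;=\; \sum_{P \text{ is a $0$-leaf of } T} \mathbbm{1}_P(1-x),
\]
where $\mathbbm{1}_P$ is the indicator polynomial of the root-to-leaf path ending at $P$. The key swap from the $AND$ case is that we sum over $0$-leaves rather than $1$-leaves and evaluate the path indicators at $1-x$.

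Next I would bound the number of monomials of each $\mathbbm{1}_P(1-x)$. At an internal node querying $OR(x_S)$, the left branch (where the query evaluates to $0$) contributes the factor $\prod_{i\in S}(1-x_i)$, which collapses to the single monomial $\prod_{i\in S}x_i$ after the substitution $x \mapsto 1-x$. The right branch contributes $1 - \prod_{i\in S}(1-x_i)$, which becomes $1 - \prod_{i\in S}x_i$ and therefore contributes at most two monomials. Mirroring the $AND$ argument but with the roles of left and right branches reversed, a path with $r$ right turns yields a product whose expansion has at most $2^{r}$ monomials.

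Finally, I would invoke the standard combinatorial fact that a binary tree of depth $d$ has at most $\binom{d}{r}$ root-to-leaf paths with exactly $r$ right turns (easy induction on $d$ using $\binom{d-1}{r}+\binom{d-1}{r-1}=\binom{d}{r}$). Summing over all $0$-leaves by right-turn count and applying the binomial theorem gives
\[
|mon(f^\star)| \;\le\; \sum_{P \text{ a $0$-leaf}} |mon(\mathbbm{1}_P(1-x))| \;\le\; \sum_{r=0}^{d} \binom{d}{r}\, 2^{r} \;=\; 3^{d},
\]
and taking $\log_3$ of both sides gives the claim. The only subtlety — and the main obstacle, such as it is — is organising the duality bookkeeping so that the substitution $x \mapsto 1-x$ interacts properly with the leaf partition: right-turn $OR$ indicators, which look like the ``cheap'' single-monomial factors before the substitution, become the expensive two-monomial factors after it, and it is this switched parity of ``left'' and ``right'' that must be matched against the corresponding shift from $1$-leaves to $0$-leaves in the decomposition of $f^\star$.
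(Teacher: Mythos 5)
Your proof is correct and takes essentially the same approach as the paper's: both decompose $f^\star$ as $\sum_{P}\mathbbm{1}_P(1-x)$ over the $0$-leaves of the $OR$ tree, observe that the substitution turns left-turn factors into single monomials and right-turn factors into $1-\prod_{i\in S}x_i$, and then count paths by right-turn count via $\binom{d}{k}2^k$ to get $3^d$.
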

\end{leftbar}
\begin{proof}
	Let $T$ be an OR-decision tree computing $f$ and denote $d = depth(T)$. Let $\mathcal{P}$ be the set of all root to 0-leaf paths in $T$. For any $P \in \mathcal{P}$, the indicator function for the path is given by the following multilinear polynomial:
	
	\[ \mathbbm{1}_{P}(x_1, \dots, x_n) = \left(\prod_{\lnot OR(S) \in P} \left(1-\sum_{\emptyset \ne S \subseteq [n]} (-1)^{|S|+1} \prod_{i \in S} x_i\right)\right) \left(\prod_{ OR(S) \in P} \left(\sum_{\emptyset \ne S \subseteq [n]} (-1)^{|S|+1} \prod_{i \in S} x_i\right) \right) \]  
	
	By the definition of the dual function, we can construct $f^\star$ by summing the indicators for all paths $P \in \mathcal{P}$, where the inputs to each indicator are the negated input bits:
	\begin{equation*}
		\begin{split}
			f^\star(x_1, \dots, x_n) &= \sum_{P \in \mathcal{P}} \mathbbm{1}_{P}(1-x_1, \dots, 1-x_n) \\
			&= \sum_{P \in \mathcal{P}} \left(\prod_{OR(S) \in P} \left(1-\prod_{i \in S} x_i\right)\right) \left(\prod_{\lnot OR(S) \in P} \left(\prod_{i \in S} x_i\right) \right)
		\end{split}
	\end{equation*}
	
	Therefore, each path $P$ making $k$ right turns contributes at most $2^k$ monomials to $f^\star$. In a binary tree of depth $d$ there are at most ${d \choose k}$ paths making exactly $k$ right turns (i.e., by selecting the position in the path at which the right turns are made). Thus, we have:
	\begin{equation*}
	\begin{split}
	|mon(f^\star)| \le \sum_{k=0}^{d} {d \choose k} 2^k = 3^d \qedhere
	\end{split}
	\end{equation*}
\end{proof}

Applying the aforementioned lemma to $BPM_n$, we obtain:

\begin{leftbar}
	\begin{corollary}
		\label{bpm_or_lb}
		The depth of any $OR$ decision tree computing $BPM_n$ is at least:
		\begin{equation*}
		\begin{split}
		D^{OR}(BPM_n) &\ge 2 \log_3 (n!)
		\end{split}
		\end{equation*}
	\end{corollary}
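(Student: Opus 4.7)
The proposal is to simply chain the two ingredients that have already been assembled: Lemma \ref{or_complexity_lb} provides the general inequality $D^{OR}(f) \ge \log_3(|mon(f^\star)|)$ for any Boolean function, and the lower bound part of Corollary \ref{counting_monomials_dual} tells us that $|mon(BPM_n^\star)| \ge (n!)^2$. So the entire argument is a one-line substitution.

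First I would instantiate Lemma \ref{or_complexity_lb} with $f = BPM_n$, which is valid because $BPM_n$ is a Boolean function on $\{0,1\}^{n^2}$, giving $D^{OR}(BPM_n) \ge \log_3(|mon(BPM_n^\star)|)$. Next I would recall that the lower bound $(n!)^2 \le |mon(BPM_n^\star)|$ from Corollary \ref{counting_monomials_dual} was obtained by exhibiting $(n!)^2$ distinct strictly totally ordered graphs $G \subseteq K_{n,n}$, each of which contributes a nonzero coefficient $a_G^\star = (-1)^{n+1}$ to the dual polynomial by Lemma \ref{lower_bound_graph}. Plugging this estimate into the previous inequality yields $D^{OR}(BPM_n) \ge \log_3((n!)^2) = 2 \log_3(n!)$, which is the claimed bound.

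There is essentially no obstacle here; the real content has already been extracted in Theorem \ref{finer_characterization_of_dual} (the characterization identifying strictly totally ordered graphs as a source of nonzero monomials) and in Lemma \ref{or_complexity_lb} (the monomial-counting lower bound technique for $OR$ decision trees). The only minor care needed is to confirm that the $(n!)^2$ strictly totally ordered graphs are indeed pairwise distinct as graphs in $K_{n,n}$ (not merely up to isomorphism), which follows from the fact that permuting the left vertices and permuting the right vertices of the canonical ``staircase'' graph yields distinct edge sets because the neighborhood sequence strictly increases in size and uniquely determines the ordering of left vertices, while the right vertices are also distinguished by the number of left neighbors they have. This justifies the lower bound of $(n!)^2$ labeled graphs and hence the conclusion.
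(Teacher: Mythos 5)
Your proposal is correct and matches the paper exactly: the paper states this corollary immediately after Lemma \ref{or_complexity_lb} with the remark ``Applying the aforementioned lemma to $BPM_n$,'' relying on the bound $|mon(BPM_n^\star)| \ge (n!)^2$ from Corollary \ref{counting_monomials_dual} to get $D^{OR}(BPM_n) \ge \log_3((n!)^2) = 2\log_3(n!)$. Your extra remark about why the $(n!)^2$ strictly totally ordered graphs are pairwise distinct is a sound reading of the argument already given in Corollary \ref{counting_monomials_dual}.
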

\end{leftbar}

\defsubsection{Additional Coefficients of the Dual Polynomial}

Theorem \ref{finer_characterization_of_dual} offers a characterization of $BPM_n^\star$ in terms of \textit{totally ordered} and \textit{strictly totally ordered} graphs. The theorem states that only \textit{totally ordered} graphs may exhibit non-zero coefficients, and that all \textit{strictly totally ordered} indeed have non-zero coefficients. For graphs that are totally ordered but not strictly so, the situation is more complex \footnote{An additional analysis exclusively for graphs containing a perfect matching can be found in Appendix \ref{graph_with_pm_appendix}}. The following proposition shows that for any $n>2$, there exist graphs which are totally ordered but not strictly so, whose dual coefficient is $1$, $0$ and even $(n-2)^2$.

\begin{leftbar}
	\begin{proposition}
		Let $n>2$. There exist graphs $G \subseteq K_{n,n}$ which are totally ordered but not strictly so, such that:
		
		\[  \text{a.  }a^\star_G = 0,\quad\quad\quad \text{b.  }a^\star_G = 1,\quad\quad\quad \text{c.  }a^\star_G = (n-2)^2 \]
		
	\end{proposition}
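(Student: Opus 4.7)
The plan is to exhibit three explicit families of graphs—one for each of the three target values—and verify each coefficient using the tools developed earlier in this section.

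For case (a), I would take $G = K_{n,n} \setminus \{(a_n,b_n)\}$. Then $N(a_1) = \cdots = N(a_{n-1}) = \{b_1,\dots,b_n\}$ while $N(a_n) = \{b_1,\dots,b_{n-1}\}$, so $G$ is totally ordered but not strictly so. A straightforward check shows that $G$ is matching-covered: for any edge $(a_i,b_j) \ne (a_n,b_n)$, deleting $a_i$ and $b_j$ leaves a balanced bipartite graph that still satisfies Hall's condition, so the edge lies in some perfect matching. Then Lemma~\ref{mu_graphs_zero_coeff} immediately yields $a^\star_G = 0$.

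For case (b), I would take $G = K_{\{a_1,a_2\},\{b_1,\dots,b_{n-1}\}}$ as a subgraph of $K_{n,n}$, so $N(a_1) = N(a_2) = \{b_1,\dots,b_{n-1}\}$ and $a_3,\dots,a_n$ are isolated. Then $G$ is totally ordered and not strictly so. Since $G$ has exactly $n+1$ non-isolated vertices and is itself complete bipartite on them, $G \in HV_n$. Any Hall violator $K_{X,Y}\subseteq G$ must satisfy $X \subseteq \{a_1,a_2\}$, $Y \subseteq \{b_1,\dots,b_{n-1}\}$ and $|X|+|Y| = n+1$, which forces $X = \{a_1,a_2\}$, $Y = \{b_1,\dots,b_{n-1}\}$. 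Thus the only nonempty $S \subseteq HV_n$ with $\bar E(S) = E(G)$ is $\{G\}$, and the formula from the proof of Proposition~\ref{monomials_graph_cover_functions} gives $a^\star_G = (-1)^{|S|+1} = 1$.

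For case (c), I would take $G = K_{n-1,n-1}$ realized inside $K_{n,n}$ with one isolated left vertex $a_n$ and one isolated right vertex $b_n$, and compute $a_G^\star$ via Lemma~\ref{upper_sum_mobius}. Parametrize every matching-covered $H \supseteq G$ by the additional edges in the ``L-shape'' around $a_n$ and $b_n$: $J = \{j \in [n-1] : (a_n,b_j) \in E(H)\}$, $I = \{i \in [n-1] : (a_i,b_n) \in E(H)\}$, and $\epsilon \in \{0,1\}$ indicating whether $(a_n,b_n) \in E(H)$. Using Theorem~\ref{hetyei_conditions}, a case analysis shows $H \in MC_n$ iff either (i) $\epsilon = 0$ and $|I|,|J| \ge 2$, or (ii) $\epsilon = 1$ and $I = J = \emptyset$, or (iii) $\epsilon = 1$ and $|I|,|J| \ge 1$. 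Evaluating $\chi(H) = (n-1)^2 + |I| + |J| + \epsilon - 2n + |C(H)|$ in each case and summing, the contributions of (i) and (iii) reduce to the binomial identities $\sum_{k\ge 2}\binom{n-1}{k}(-1)^k = n-2$ and $\sum_{k\ge 1}\binom{n-1}{k}(-1)^k = -1$, while case (ii) contributes a single explicit term. Collecting the parity factor $(-1)^{|E(G)|+1} = (-1)^{(n-1)^2+1}$ from Lemma~\ref{upper_sum_mobius}, the three contributions combine to $(n-2)^2$.

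The main obstacle is the case analysis in part (c): the presence or absence of the single edge $(a_n,b_n)$ qualitatively changes the matching-covered condition, because without $(a_n,b_n)$ each of $a_n$ and $b_n$ needs at least two neighbors to guarantee that \emph{every} edge of $K_{n-1,n-1}$ still lies on a perfect matching, whereas with $(a_n,b_n)$ present a single neighbor on each side suffices. Once this classification is in hand, the remainder is a mechanical binomial-sum computation, and cases (a) and (b) are by comparison immediate consequences of the structural lemmas already established.
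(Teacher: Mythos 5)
Your proposal follows the same strategy as the paper for all three parts: case (a) picks a totally ordered graph in $MC_n\setminus\{K_{n,n}\}$ and invokes Lemma~\ref{mu_graphs_zero_coeff}, case (b) picks a Hall violator and observes it is a minterm of $BPM_n^\star$, and case (c) takes $G = K_{n-1,n-1}$, classifies the matching-covered supergraphs via Theorem~\ref{hetyei_conditions} into the same three families as the paper, and evaluates the alternating binomial sums via Lemma~\ref{upper_sum_mobius}. The only differences are cosmetic choices of witness graphs in (a) and (b); the computations and the key lemmas used are identical.
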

\end{leftbar}
\begin{proof}
	Let $n > 2$ and let $A = \{a_1, \dots, a_n\}$, $B = \{b_1, \dots, b_n\}$ be two sets. Denote:
	
	 \[A_{n-1} = \{a_1, \dots, a_{n-1}\},\quad B_{n-1} = \{b_1, \dots, b_{n-1}\} \]
	 
	For the case $a^\star_G = 0$, consider any totally ordered graph such that $G \in (MC_n \setminus \{K_{n,n}\})$. For example, let $G = (A \cupdot B, E)$ such that $\forall i \in [n-1]:\ N_G(a_i) = B$ and $N_G(a_n) = \{b_1, b_2\}$. $G$ is totally ordered, since $N_G(a_1) \supseteq N_G(a_2) \supseteq \dots \supseteq N_G(a_n)$. However, we also have that $G \in MC_n$, therefore by Lemma \ref{mu_graphs_zero_coeff}, $a^\star_G = 0$.
	
	For the case $a^\star_G = 1$, consider any graph $G \in HV_n$. Observe that $G$ is both totally ordered and $G \in HVC_n$. Furthermore $G$ contains a \textit{single} Hall violator graph (itself), and is therefore a minterm of $BPM^\star_n$, and so $a^\star_G = 1$.
	
	Lastly, for the case $a^\star_G = (n-2)^2$, consider the graph $G = K_{A_{n-1}, B_{n-1}}$. Using Theorem \ref{hetyei_conditions}, the set of matching-covered graphs containing $G$, which we will denote by $\mathcal{H}$, can be partitioned into three sets $\mathcal{H} = \mathcal{H}_1 \cupdot \mathcal{H}_2 \cupdot \mathcal{H}_3$, as follows:
	\begin{equation*}
		\begin{split}
			\mathcal{H}_1 &= \{G \cupdot \{(a_n, b_n)\}\} \\
			\mathcal{H}_2 &= \set{G \cupdot \{(a_n,b_n)\} \cupdot (U \times \{b_n\}) \cupdot (\{a_n\} \times V)}{\emptyset \ne U \subseteq A_{n-1},\ \emptyset \ne V \subseteq B_{n-1}} \\
			\mathcal{H}_3 &= \set{G \cupdot (U \times \{b_n\}) \cupdot (\{a_n\} \times V)}{U \subseteq A_{n-1},\ V \subseteq B_{n-1},\ |U| \ge 2,\ |V| \ge 2}
		\end{split}
	\end{equation*}
	
	\noindent By Lemma \ref{upper_sum_mobius}, the dual coefficient of $G$ is given by:
	
	\begin{equation*}
		\begin{split} 
			a^\star_G &= (-1)^{|E(G)| + 1} \sum_{\substack{H \supseteq G\\H \in MC_n}} (-1)^{\chi(H)} = -\sum_{H \in \mathcal{H}} (-1)^{|E(H) \setminus E(G)| + |C(H)|} 
		\end{split}
	\end{equation*}
	
	\noindent For the single graph $H \in \mathcal{H}_1$, $|E(H) \setminus E(G)| = 1$, $|C(H)| = 2$, thus contributing $1$ to the sum. For each $H \in \mathcal{H}_2$, $|C(H)|=1$, thus $\mathcal{H}_2$'s contribution to the sum is:
	
	\[ \sum_{i=1}^{n-1} \sum_{j=1}^{n-1} {n-1 \choose i} {n-1 \choose j} (-1)^{i+j+1} = -1 \]
	 
	\noindent Lastly, for each $H \in \mathcal{H}_3$, $|C(H)|=1$. Thus $\mathcal{H}_3$'s contribution to the sum is:
	
	\[ \sum_{i=2}^{n-1} \sum_{j=2}^{n-1} {n-1 \choose i} {n-1 \choose j} (-1)^{i+j} = (n-2)^2 \]
	
	\noindent Summing up all the contributions, we get $a^\star_G = (n-2)^2$, thus concluding the proof.
	
\end{proof}

\section{Acknowledgments}

We thank Nati Linial for helpful discussions.

\bibliography{matching_polynomial}

\begin{thebibliography}{DNO19}

\bibitem[BS94]{billera1994combinatorics}
Louis~J Billera and Aravamuthan Sarangarajan.
\newblock The combinatorics of permutation polytopes.
\newblock In {\em Formal power series and algebraic combinatorics}, volume~24,
  pages 1--23, 1994.

\bibitem[DNO19]{dobzinski2019economic}
Shahar Dobzinski, Noam Nisan, and Sigal Oren.
\newblock Economic efficiency requires interaction.
\newblock {\em Games and Economic Behavior}, 118:589--608, 2019.

\bibitem[Gr{\"u}13]{grunbaum2013convex}
Branko Gr{\"u}nbaum.
\newblock {\em Convex polytopes}, volume 221.
\newblock Springer Science \& Business Media, 2013.

\bibitem[Het64]{hetyei1964rectangular}
G{\'a}bor Hetyei.
\newblock Rectangular configurations which can be covered by 2$\times$1
  rectangles.
\newblock {\em P{\'e}csi Tan. Foisk. K{\"o}zl}, 8:351--367, 1964.

\bibitem[KSS84]{kahn1984topological}
Jeff Kahn, Michael Saks, and Dean Sturtevant.
\newblock A topological approach to evasiveness.
\newblock {\em Combinatorica}, 4(4):297--306, 1984.

\bibitem[Mez19]{mezo2019combinatorics}
Istvan Mezo.
\newblock {\em Combinatorics and number theory of counting sequences}.
\newblock Chapman and Hall/CRC, 2019.

\bibitem[Nis19]{nisan2019demand}
Noam Nisan.
\newblock The demand query model for bipartite matching.
\newblock {\em arXiv preprint arXiv:1906.04213}, 2019.

\bibitem[NSS08]{NSS2008}
Ilan Newman, Michael Saks, and Mario Szegedy.
\newblock Decision trees - unpublished manuscript.
\newblock 2008.

\bibitem[NW95]{nisan1995rank}
Noam Nisan and Avi Wigderson.
\newblock On rank vs. communication complexity.
\newblock {\em Combinatorica}, 15(4):557--565, 1995.

\bibitem[O'D14]{o2014analysis}
Ryan O'Donnell.
\newblock {\em Analysis of boolean functions}.
\newblock Cambridge University Press, 2014.

\bibitem[PL86]{plummer1986matching}
M.D. Plummer and L.~Lov{\'a}sz.
\newblock {\em Matching Theory}.
\newblock North-Holland Mathematics Studies. Elsevier Science, 1986.

\bibitem[RV75]{rivest1975generalization}
Ronald~L Rivest and Jean Vuillemin.
\newblock A generalization and proof of the {Aanderaa}-{Rosenberg} conjecture.
\newblock In {\em Proceedings of the seventh annual ACM symposium on Theory of
  computing}, pages 6--11. ACM, 1975.

\bibitem[Sta11]{Stanley:2011:ECV:2124415}
Richard~P. Stanley.
\newblock {\em Enumerative Combinatorics: Volume 1}.
\newblock Cambridge University Press, New York, NY, USA, 2nd edition, 2011.

\end{thebibliography}
\bibliographystyle{alpha}

\appendix
\pagebreak
\defsection{Graphs with a Perfect Matching}
\label{graph_with_pm_appendix}

In this section, we restrict our attention to graphs \textit{containing a perfect matching}, which appear in the dual polynomial $BPM_n^\star$. By Theorem \ref{dual_bpm_poly}, the only graphs appearing in the dual polynomial are those which are ``totally ordered''. However, by nature of having a perfect matching, a more precise characterization of their structure can be obtained.

Given a graph $G$ with a perfect matching, we consider the graph $G'$, formed by the union of \textit{all} perfect matchings of $G$. In this section, we show that if the monomial corresponding to $G$ appears $BPM^{\star}_n$, then the following conditions (and perhaps others) must hold. First, all the connected components of $G'$ must be complete bipartite graphs. Furthermore, for any edge in $G$ connecting two such components, all the edges between the components' corresponding bipartitions must appear. 


\begin{leftbar}
	\begin{lemma}
		\label{connected_component_not_complete_zero_coeff}
		Let $n>1$ and let $G \subseteq K_{n,n}$, where $G \notin MC_n$ and $PM(G) \ne \emptyset$. Denote by $G'$ the union of all the perfect matchings of $G$. If $G'$ has a connected component which is not a complete bipartite graph, then $a_G^\star = 0$.
	\end{lemma}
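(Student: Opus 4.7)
The plan is to produce a wildcard edge for $G$ and then invoke Lemma \ref{surplus_edges_are_wildcards}, Lemma \ref{wildcard_zero_coeff} and Corollary \ref{small_basis_zero_coeff}. The setup is that $G'$, being a union of perfect matchings, is matching-covered, so each of its connected components is elementary. Because $G$ has a perfect matching we also have $V(G') = V(G)$, so the components $C_1, \dots, C_r$ of $G'$ partition both the left bipartition $A$ and the right bipartition $B$ of $G$. Let $C$ be the distinguished non-complete-bipartite component, with bipartition $A_C \cupdot B_C$.

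The first step is to locate a pair $(a,b) \in A_C \times B_C$ satisfying both $(a,b) \notin E(C)$ and $(a,b) \notin E(G)$. Existence of $(a,b) \notin E(C)$ is immediate since $C$ is not complete bipartite; the subtlety is that such an edge might still be a dead edge of $G$. To rule this out globally, I would argue by contradiction: if every missing edge of $C$ in $A_C \times B_C$ lay in $E(G)$, then the restriction of $G$ to $V(C)$ would equal $K_{A_C,B_C}$, and one could splice any perfect matching of $K_{A_C,B_C}$ with perfect matchings of the remaining elementary components $C_i$ to obtain a perfect matching of $G$. This would force every edge of $K_{A_C, B_C}$ into $G'$, giving $C = K_{A_C,B_C}$ and contradicting the hypothesis.

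The second step is to verify that the chosen $(a,b)$ is a surplus edge for $G$ in the sense of Definition \ref{surplus_edge_definition}. Given any $X \subset A$ with $a \in X$ and $b \notin N_G(X)$, I would decompose $X = \bigsqcup_i X_i$, where $X_i = X \cap A_{C_i}$. Since the $C_i$ are the connected components of $G'$, the neighbourhoods $N_{G'}(X_i) = N_{C_i}(X_i) \subseteq B_{C_i}$ are pairwise disjoint, hence $|N_G(X)| \ge \sum_i |N_{C_i}(X_i)|$. Each $C_i$ is elementary, so by Hetyei's Theorem (Theorem \ref{hetyei_conditions}) one has $|N_{C_i}(X_i)| \ge |X_i|$ always, with strict inequality whenever $\emptyset \ne X_i \subsetneq A_{C_i}$. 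For the distinguished component $C$, $X_C$ is nonempty (it contains $a$) and proper (any $C$-neighbour $a^{\star}$ of $b$ in $A_C$ cannot lie in $X$, else $b \in N_G(X)$), supplying the surplus of $+1$; hence $|N_G(X)| \ge |X|+1$. Applying Lemmas \ref{surplus_edges_are_wildcards} and \ref{wildcard_zero_coeff} followed by Corollary \ref{small_basis_zero_coeff} then yields $a^{\star}_G = 0$.

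I expect the main obstacle to be the first step: ruling out that every missing edge of $C$ happens to sit in $E(G)$ as a dead edge. This is where the splicing argument is essential, and it crucially uses that $G'$ is the union of \emph{all} perfect matchings of $G$. After that pair $(a,b)$ is fixed, the surplus-edge verification is a routine per-component count driven by Hetyei's strict Hall condition, and the concluding reduction to $a^{\star}_G = 0$ is purely mechanical via the lemmas already established.
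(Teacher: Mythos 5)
Your proof is correct, and it takes a genuinely different route from the paper. The paper's proof also picks a pair $(a,b)$ in $V(C)$ with $(a,b) \notin E(C)$ and shows it is a wildcard edge for $G$, but it does so by a structural ear-decomposition argument: it builds an ear decomposition of the relevant component $\tilde H$ of any $H \in MC_n$ containing $G \cup \{(a,b)\}$ starting from $C$, observes that $(a,b)$ must appear as a single-edge ear (since both endpoints lie in $C$), deletes that ear, and invokes Theorem \ref{elementary_iff_ear_decomp} to conclude $H \setminus \{(a,b)\} \in MC_n$. You instead verify the \emph{surplus edge} condition (Definition \ref{surplus_edge_definition}) directly, by decomposing any offending $X \subset A$ along the components of $G'$ and using the strict Hall condition of Theorem \ref{hetyei_conditions} to extract a surplus of $+1$ from the component $C$; you then chain through Lemma \ref{surplus_edges_are_wildcards}. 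This is arguably more modular, since it avoids re-deriving an ear decomposition and relies only on counting. You also explicitly supply the splicing argument that guarantees the chosen pair can be taken with $(a,b) \notin E(G)$ (else $G$ restricted to $V(C)$ would be $K_{A_C,B_C}$, which splices with matchings of the other components to force $C = K_{A_C,B_C}$). The paper presumes $(a,b) \notin E(G)$ implicitly (the notation $G \cupdot \{(a,b)\}$ encodes this but the existence of such a pair is not justified), so your step here closes a small gap. Both proofs conclude identically via Lemma \ref{wildcard_zero_coeff} and Corollary \ref{small_basis_zero_coeff}.
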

\end{leftbar}
\begin{proof}
	Let $G = (A \cupdot B, E) \notin MC_n$, where $PM(G) \ne \emptyset$ and denote by $G'$ the union of all perfect matchings of $G$. Let $C$ be a connected component of $G'$ which is not a complete bipartite graph. Let $(a,b) \in (A \cap V(C)) \times (B \cap V(C))$ be an edge such that $(a,b) \notin E(C)$. We will show that $(a,b)$ is a wildcard edge for $G$. Therefore, let $H \in MC_n$ be a graph such that $H \supseteq G \cupdot \{(a,b)\})$, and denote by $\tilde{H}$ the connected component of $H$ containing $C$.
	
	Observe that $\tilde{H} - V(C)$ contains a perfect matching (in particular, any of the perfect matchings induced by the components $C_i$ of $G'$ which are contained in $\tilde{H}$). Thus, $\tilde{H}$ has a bipartite ear decomposition of the form: $\tilde{H} = C + P_1 + \dots + P_q$, where there exists a path $P_i = (a,b)$ (since the vertices $a,b$ were present in $C$). Therefore $\tilde{H} \setminus \{(a,b)\}$ also has a bipartite ear decomposition: $\tilde{H} \setminus \{(a,b)\} = C + P_1 + \dots + P_{i-1} + P_{i+1} + \dots + P_q$, and by Theorem \ref{elementary_iff_ear_decomp}, $\tilde{H} \setminus \{(a,b)\}$ is elementary. Thus $(H \setminus \{(a,b)\}) \in MC_n$ and the proof follows by Lemma \ref{wildcard_zero_coeff}.
\end{proof}

\begin{leftbar}
	\begin{lemma}
		\label{complete_graphs_zero_coeff}
		Let $n > 1$ and let $G=(A \cupdot B, E) \subseteq K_{n,n}$. Denote $G'$ the union of $G$'s perfect matchings. If all the following conditions hold:
		
		\begin{enumerate}
			\item $G \in HVC_n$ and $PM(G) \ne \emptyset$.
			\item All the connected components of $G'$ are complete bipartite graphs.
			\item There exist $C_1 = (A_1 \cupdot B_1, E_1)$, $C_2 = (A_2 \cupdot B_2, E_2)$, where $C_1, C_2 \in C(G')$, such that:
			\[\emptyset \subsetneq \left((A_1 \times B_2) \cap E(G)\right) \subsetneq (A_1 \times B_2)\]
		\end{enumerate}
	
		\noindent Then $a_G^\star = 0$.
	\end{lemma}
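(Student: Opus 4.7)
The plan is to exhibit a surplus edge for $G$ in the sense of Definition \ref{surplus_edge_definition} and then invoke the chain Lemma \ref{surplus_edges_are_wildcards}, Lemma \ref{wildcard_zero_coeff}, Corollary \ref{small_basis_zero_coeff}, which together deliver $a^\star_G = 0$. Condition 3 of the hypotheses guarantees that $(A_1 \times B_2) \setminus E(G)$ is non-empty, so I pick any $(a, b')$ in it; this is my candidate surplus edge.

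To verify the surplus inequality, fix $X \subsetneq A$ with $a \in X$ and $b' \notin N_G(X)$. Condition 1 supplies $PM(G) \ne \emptyset$, so $V(G') = V(G)$ and every vertex of $A$ lies in some component of $G'$; condition 2 makes each such component a balanced complete bipartite graph $K_{A_i, B_i}$. Since $C_2 = K_{A_2, B_2} \subseteq G$, we have $A_2 \subseteq N_G(b')$, hence $X \cap A_2 = \emptyset$, and I may decompose $X = X_1 \cupdot X_*$ with $X_1 = X \cap A_1$ and $X_*$ contained in the $G'$-components other than $C_1, C_2$. Letting $B_{\mathcal{I}}$ denote the union of the $B$-sides of the outer components that $X_*$ meets, the complete-bipartite structure of every relevant component yields the balanced bookkeeping $|X_1| \le |B_1|$, $|X_*| \le |B_{\mathcal{I}}|$, and $N_G(X) \supseteq B_1 \cup B_{\mathcal{I}}$, so $|N_G(X)| \ge |X|$, but only non-strictly.

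The main obstacle is upgrading this to a \emph{strict} inequality. If $|X_1| < |B_1|$, or $X_*$ only partially sweeps some outer component, strictness is immediate from the bookkeeping. The critical remaining case is $X_1 = A_1$ with $X_*$ covering the full $A$-sides of every outer component it meets, and this is exactly where condition 3 enters. In this regime $b'$ has no $A_1$-neighbour in $G$ (else $X_1$ would be a proper subset of $A_1$), so the witness cross edge $(a_0, b_0) \in (A_1 \times B_2) \cap E(G)$ supplied by condition 3 must satisfy $b_0 \ne b'$. Since $a_0 \in A_1 \subseteq X$, the vertex $b_0 \in B_2 \setminus \{b'\}$ lies in $N_G(X) \setminus (B_1 \cup B_{\mathcal{I}})$, giving the extra neighbour needed for strict surplus. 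Thus $(a, b')$ is a surplus edge for $G$, and the cited chain of lemmas concludes $a^\star_G = 0$.
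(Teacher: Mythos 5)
Your proof is correct, and it takes a genuinely different route from the paper. The paper's proof works directly with the \emph{wildcard edge} notion (Definition \ref{wildcard_edge_defn}): it takes an arbitrary $H \in MC_n$ extending $G \cupdot \{(a,b)\}$ and shows, by a constructive argument about directed cycles of cross-edges between the components $C_i$ and explicit perfect-matching assembly (splicing in the alternative cross-edge $(u,v)$ when needed), that $H \setminus \{(a,b)\}$ remains elementary. You instead verify the stronger and more local \emph{surplus edge} condition (Definition \ref{surplus_edge_definition}) and invoke the chain Lemma \ref{surplus_edges_are_wildcards} $\Rightarrow$ Lemma \ref{wildcard_zero_coeff} $\Rightarrow$ Corollary \ref{small_basis_zero_coeff}. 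This replaces the paper's case analysis over perfect matchings of arbitrary extensions $H$ by Hall-type bookkeeping purely on $G$ and $G'$: the complete-bipartite balance of each component of $G'$ gives $|N_G(X)| \ge |X|$, the case $X_1 = A_1$ (with full sweeps of outer components) forces $b'$ to have no $A_1$-neighbour, and then the cross-edge $(a_0,b_0)$ from condition 3 delivers the strict surplus since $b_0 \in B_2$ is disjoint from $B_1 \cup B_{\mathcal{I}}$. Your route is more in the spirit of the surplus-edge machinery the paper sets up but never actually deploys in the appendix; it avoids constructing perfect matchings inside $H'$ altogether. It also, incidentally, does not need the connectivity of $G$ or the hypothesis $G \in HVC_n$ at all (those enter only via condition 1's $PM(G) \neq \emptyset$), whereas the paper's argument opens by asserting connectivity of $G$.
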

\end{leftbar}
\begin{proof}
	Let $G$ be a graph satisfying the above conditions, and let $G'$, $C_1$, $C_2$ be the graphs described above. Denote $C(G') = \{C_1, \dots, C_t\}$, where $\forall i \in [t]:\ C_i = (A_i \cupdot B_i, E_i)$. Hereafter, we use the notation $C_i \leadsto C_j$ to denote an edge $(u,v) \in (A_i \times B_j)$.
	
	First, since $G \in HVC_n$ and $G$ has a perfect matching, then $G$ is connected. Let $(a,b),(u,v) \in (A_1 \times B_2)$ be two edges, such that $(a,b) \notin E(G)$ and $(u,v) \in E(G)$. We will show that $(a,b)$ is a wildcard edge of $G$. Let $H \in MC_n$ be a graph such that $H \supseteq G \cupdot \{(a,b)\})$. We will show that $H' = H  \setminus \{(a,b)\}$ is elementary, thus by Lemma \ref{wildcard_zero_coeff}, $a^\star_G = 0$. Let $(x,y) \in E(H')$. To show that $H'$ is elementary, by Theorem \ref{hetyei_conditions} it is sufficient to exhibit a perfect matching of $H'$ containing $(x,y)$. 
	
	Clearly, if $\exists i \in [t]: (x,y) \in E(C_i)$ then since $C_i$ is elementary, $C_i - x - y$ has a perfect matching, which can be extended to a perfect matching of $H'$ by adding a single perfect matching for each $C_j \in (C(G') \setminus C_i)$.
	
	Otherwise, denote by $C_i, C_j$ the components for which $x \in C_i$, $y \in C_j$. We begin by showing that $H$ has a directed cycle $\bar{C} = C_i \leadsto C_j \leadsto \dots \leadsto C_i$ containing $(x,y)$. Since $H \in MC_n$, every edge of $H$ participates in a perfect matching, and in particular so does $(x,y)$. Let $M$ be a perfect matching of $H$ involving $(x,y)$. Since $C_i - x$ is unbalanced, there must be some edge $C_k \leadsto C_i$ in $M$. Iteratively applying the same argument to $C_k$ and then to the component connected to it, we eventually gather a directed cycle $\bar{C} \in E(H)$ composed of edges of $M$, where $(x,y) \in \bar{C}$.
	
	Lastly, we use $\bar{C}$ to construct a perfect matching of $H'$ containing $(x,y)$. First, if $(a,b) \in \bar{C}$, then replace $(a,b)$ with $(u,v)$. Now, construct a perfect matching $\bar{M}$ as follows:
	
	\begin{enumerate}
		\item For each $C_k \notin \bar{C}$, take a single perfect matching over $C_k$.
		\item For each edge $(a_k, b_m) \in \bar{C}$, match $a_k$ and $b_m$.
		\item  For each $C_k \in \bar{C}$, denote $a_k \in A_k$, $b_k \in B_k$ the vertices of $C_k$ appearing in $\bar{C}$. By Theorem \ref{hetyei_conditions}, $C_k - a_k - b_k$ has at least one perfect matching (or is empty if $C_k = K_2$), which we add to $\bar{M}$.
	\end{enumerate}
\end{proof}

\pagebreak
\IfStrEq{\memomode}{ON}{
	\newgeometry{top=0.1cm, bottom=0.1cm, left=2cm}
}{} 
\defsection{The Polynomial of $BPM^\star_3$}
\label{bpm_star_3}
\begin{equation*}
	\begin{split}
		BPM^\star_3(x) &= x_{1, 1} x_{1, 2} x_{1, 3} + x_{1, 1} x_{2, 1} x_{3, 1} + x_{2, 1} x_{2, 2} x_{2, 3} + x_{1, 2} x_{2, 2} x_{3, 2} + x_{1, 3} x_{2, 3} x_{3, 3} + x_{3, 1} x_{3, 2} x_{3, 3} \\
		&+ x_{1, 1} x_{1, 2} x_{2, 1} x_{2, 2} + x_{1, 1} x_{1, 3} x_{2, 1} x_{2, 3} + x_{1, 2} x_{1, 3} x_{2, 2} x_{2, 3} + x_{1, 1} x_{1, 2} x_{3, 1} x_{3, 2} + x_{1, 1} x_{1, 3} x_{3, 1} x_{3, 3} \\
		&+ x_{1, 2} x_{1, 3} x_{3, 2} x_{3, 3} + x_{2, 1} x_{2, 2} x_{3, 1} x_{3, 2} + x_{2, 1} x_{2, 3} x_{3, 1} x_{3, 3} + x_{2, 2} x_{2, 3} x_{3, 2} x_{3, 3} - x_{1, 1} x_{1, 2} x_{1, 3} x_{2, 1} x_{2, 2} \\
		&- x_{1, 1} x_{1, 2} x_{1, 3} x_{2, 1} x_{2, 3} - x_{1, 1} x_{1, 2} x_{1, 3} x_{2, 2} x_{2, 3} - x_{1, 1} x_{1, 2} x_{1, 3} x_{2, 1} x_{3, 1} - x_{1, 1} x_{1, 2} x_{2, 1} x_{2, 2} x_{2, 3} \\
		&- x_{1, 1} x_{1, 2} x_{1, 3} x_{2, 2} x_{3, 2} - x_{1, 1} x_{1, 3} x_{2, 1} x_{2, 2} x_{2, 3} - x_{1, 2} x_{1, 3} x_{2, 1} x_{2, 2} x_{2, 3} - x_{1, 1} x_{1, 2} x_{1, 3} x_{2, 3} x_{3, 3} \\
		&- x_{1, 1} x_{1, 2} x_{2, 1} x_{2, 2} x_{3, 1} - x_{1, 1} x_{1, 2} x_{2, 1} x_{2, 2} x_{3, 2} - x_{1, 1} x_{1, 3} x_{2, 1} x_{2, 3} x_{3, 1} - x_{1, 1} x_{1, 2} x_{1, 3} x_{3, 1} x_{3, 2} \\
		&- x_{1, 1} x_{1, 2} x_{1, 3} x_{3, 1} x_{3, 3} - x_{1, 1} x_{1, 2} x_{1, 3} x_{3, 2} x_{3, 3} - x_{1, 1} x_{1, 3} x_{2, 1} x_{2, 3} x_{3, 3} - x_{1, 2} x_{1, 3} x_{2, 2} x_{2, 3} x_{3, 2} \\
		&- x_{1, 2} x_{1, 3} x_{2, 2} x_{2, 3} x_{3, 3} - x_{1, 1} x_{1, 2} x_{2, 1} x_{3, 1} x_{3, 2} - x_{1, 1} x_{1, 2} x_{2, 2} x_{3, 1} x_{3, 2} - x_{1, 1} x_{2, 1} x_{2, 2} x_{2, 3} x_{3, 1} \\
		&- x_{1, 1} x_{1, 3} x_{2, 1} x_{3, 1} x_{3, 3} - x_{1, 2} x_{2, 1} x_{2, 2} x_{2, 3} x_{3, 2} - x_{1, 1} x_{1, 3} x_{2, 3} x_{3, 1} x_{3, 3} - x_{1, 2} x_{1, 3} x_{2, 2} x_{3, 2} x_{3, 3} \\
		&- x_{1, 3} x_{2, 1} x_{2, 2} x_{2, 3} x_{3, 3} - x_{1, 2} x_{1, 3} x_{2, 3} x_{3, 2} x_{3, 3} - x_{1, 1} x_{2, 1} x_{2, 2} x_{3, 1} x_{3, 2} - x_{1, 2} x_{2, 1} x_{2, 2} x_{3, 1} x_{3, 2} \\
		&- x_{1, 1} x_{1, 2} x_{3, 1} x_{3, 2} x_{3, 3} - x_{1, 1} x_{2, 1} x_{2, 3} x_{3, 1} x_{3, 3} - x_{1, 1} x_{1, 3} x_{3, 1} x_{3, 2} x_{3, 3} - x_{1, 3} x_{2, 1} x_{2, 3} x_{3, 1} x_{3, 3} \\
		&- x_{1, 2} x_{1, 3} x_{3, 1} x_{3, 2} x_{3, 3} - x_{1, 2} x_{2, 2} x_{2, 3} x_{3, 2} x_{3, 3} - x_{1, 3} x_{2, 2} x_{2, 3} x_{3, 2} x_{3, 3} - x_{1, 1} x_{2, 1} x_{3, 1} x_{3, 2} x_{3, 3} \\
		&- x_{2, 1} x_{2, 2} x_{2, 3} x_{3, 1} x_{3, 2} - x_{1, 2} x_{2, 2} x_{3, 1} x_{3, 2} x_{3, 3} - x_{2, 1} x_{2, 2} x_{2, 3} x_{3, 1} x_{3, 3} - x_{2, 1} x_{2, 2} x_{2, 3} x_{3, 2} x_{3, 3} \\
		&- x_{1, 3} x_{2, 3} x_{3, 1} x_{3, 2} x_{3, 3} - x_{2, 1} x_{2, 2} x_{3, 1} x_{3, 2} x_{3, 3} - x_{2, 1} x_{2, 3} x_{3, 1} x_{3, 2} x_{3, 3} - x_{2, 2} x_{2, 3} x_{3, 1} x_{3, 2} x_{3, 3} \\
		&+ 2x_{1, 1} x_{1, 2} x_{1, 3} x_{2, 1} x_{2, 2} x_{2, 3} + x_{1, 1} x_{1, 2} x_{1, 3} x_{2, 1} x_{2, 2} x_{3, 1} + x_{1, 1} x_{1, 2} x_{1, 3} x_{2, 1} x_{2, 3} x_{3, 1} \\
		&+ x_{1, 1} x_{1, 2} x_{1, 3} x_{2, 1} x_{2, 2} x_{3, 2} + x_{1, 1} x_{1, 2} x_{1, 3} x_{2, 2} x_{2, 3} x_{3, 2} + x_{1, 1} x_{1, 2} x_{1, 3} x_{2, 1} x_{2, 3} x_{3, 3} \\
		&+ x_{1, 1} x_{1, 2} x_{1, 3} x_{2, 2} x_{2, 3} x_{3, 3} + x_{1, 1} x_{1, 2} x_{2, 1} x_{2, 2} x_{2, 3} x_{3, 1} + x_{1, 1} x_{1, 2} x_{1, 3} x_{2, 1} x_{3, 1} x_{3, 2} \\
		&+ x_{1, 1} x_{1, 3} x_{2, 1} x_{2, 2} x_{2, 3} x_{3, 1} + x_{1, 1} x_{1, 2} x_{1, 3} x_{2, 1} x_{3, 1} x_{3, 3} + x_{1, 1} x_{1, 2} x_{2, 1} x_{2, 2} x_{2, 3} x_{3, 2} \\
		&+ x_{1, 1} x_{1, 2} x_{1, 3} x_{2, 2} x_{3, 1} x_{3, 2} + x_{1, 1} x_{1, 2} x_{1, 3} x_{2, 3} x_{3, 1} x_{3, 3} + x_{1, 1} x_{1, 2} x_{1, 3} x_{2, 2} x_{3, 2} x_{3, 3} \\
		&+ x_{1, 1} x_{1, 3} x_{2, 1} x_{2, 2} x_{2, 3} x_{3, 3} + x_{1, 2} x_{1, 3} x_{2, 1} x_{2, 2} x_{2, 3} x_{3, 2} + x_{1, 2} x_{1, 3} x_{2, 1} x_{2, 2} x_{2, 3} x_{3, 3} \\
		&+ x_{1, 1} x_{1, 2} x_{1, 3} x_{2, 3} x_{3, 2} x_{3, 3} + 2x_{1, 1} x_{1, 2} x_{2, 1} x_{2, 2} x_{3, 1} x_{3, 2} + 2x_{1, 1} x_{1, 3} x_{2, 1} x_{2, 3} x_{3, 1} x_{3, 3} \\
		&+ 2x_{1, 1} x_{1, 2} x_{1, 3} x_{3, 1} x_{3, 2} x_{3, 3} + 2x_{1, 2} x_{1, 3} x_{2, 2} x_{2, 3} x_{3, 2} x_{3, 3} + x_{1, 1} x_{1, 2} x_{2, 1} x_{3, 1} x_{3, 2} x_{3, 3} \\
		&+ x_{1, 1} x_{2, 1} x_{2, 2} x_{2, 3} x_{3, 1} x_{3, 2} + x_{1, 1} x_{1, 3} x_{2, 1} x_{3, 1} x_{3, 2} x_{3, 3} + x_{1, 1} x_{1, 2} x_{2, 2} x_{3, 1} x_{3, 2} x_{3, 3} \\
		&+ x_{1, 2} x_{2, 1} x_{2, 2} x_{2, 3} x_{3, 1} x_{3, 2} + x_{1, 1} x_{2, 1} x_{2, 2} x_{2, 3} x_{3, 1} x_{3, 3} + x_{1, 2} x_{2, 1} x_{2, 2} x_{2, 3} x_{3, 2} x_{3, 3} \\
		&+ x_{1, 1} x_{1, 3} x_{2, 3} x_{3, 1} x_{3, 2} x_{3, 3} + x_{1, 3} x_{2, 1} x_{2, 2} x_{2, 3} x_{3, 1} x_{3, 3} + x_{1, 2} x_{1, 3} x_{2, 2} x_{3, 1} x_{3, 2} x_{3, 3} \\
		&+ x_{1, 3} x_{2, 1} x_{2, 2} x_{2, 3} x_{3, 2} x_{3, 3} + x_{1, 2} x_{1, 3} x_{2, 3} x_{3, 1} x_{3, 2} x_{3, 3} + x_{1, 1} x_{2, 1} x_{2, 2} x_{3, 1} x_{3, 2} x_{3, 3} \\
		&+ x_{1, 2} x_{2, 1} x_{2, 2} x_{3, 1} x_{3, 2} x_{3, 3} + x_{1, 1} x_{2, 1} x_{2, 3} x_{3, 1} x_{3, 2} x_{3, 3} + x_{1, 2} x_{2, 2} x_{2, 3} x_{3, 1} x_{3, 2} x_{3, 3} \\
		&+ x_{1, 3} x_{2, 1} x_{2, 3} x_{3, 1} x_{3, 2} x_{3, 3} + x_{1, 3} x_{2, 2} x_{2, 3} x_{3, 1} x_{3, 2} x_{3, 3} + 2x_{2, 1} x_{2, 2} x_{2, 3} x_{3, 1} x_{3, 2} x_{3, 3} \\
		&- x_{1, 1} x_{1, 2} x_{1, 3} x_{2, 1} x_{2, 2} x_{2, 3} x_{3, 1} - x_{1, 1} x_{1, 2} x_{1, 3} x_{2, 1} x_{2, 2} x_{2, 3} x_{3, 2} - x_{1, 1} x_{1, 2} x_{1, 3} x_{2, 1} x_{2, 2} x_{2, 3} x_{3, 3} \\
		&- x_{1, 1} x_{1, 2} x_{1, 3} x_{2, 1} x_{2, 2} x_{3, 1} x_{3, 2} - x_{1, 1} x_{1, 2} x_{1, 3} x_{2, 1} x_{2, 3} x_{3, 1} x_{3, 3} - x_{1, 1} x_{1, 2} x_{1, 3} x_{2, 2} x_{2, 3} x_{3, 2} x_{3, 3} \\
		&- x_{1, 1} x_{1, 2} x_{2, 1} x_{2, 2} x_{2, 3} x_{3, 1} x_{3, 2} - x_{1, 1} x_{1, 2} x_{1, 3} x_{2, 1} x_{3, 1} x_{3, 2} x_{3, 3} - x_{1, 1} x_{1, 2} x_{1, 3} x_{2, 2} x_{3, 1} x_{3, 2} x_{3, 3} \\
		&- x_{1, 1} x_{1, 3} x_{2, 1} x_{2, 2} x_{2, 3} x_{3, 1} x_{3, 3} - x_{1, 1} x_{1, 2} x_{1, 3} x_{2, 3} x_{3, 1} x_{3, 2} x_{3, 3} - x_{1, 2} x_{1, 3} x_{2, 1} x_{2, 2} x_{2, 3} x_{3, 2} x_{3, 3} \\
		&- x_{1, 1} x_{1, 2} x_{2, 1} x_{2, 2} x_{3, 1} x_{3, 2} x_{3, 3} - x_{1, 1} x_{1, 3} x_{2, 1} x_{2, 3} x_{3, 1} x_{3, 2} x_{3, 3} - x_{1, 2} x_{1, 3} x_{2, 2} x_{2, 3} x_{3, 1} x_{3, 2} x_{3, 3} \\
		&- x_{1, 1} x_{2, 1} x_{2, 2} x_{2, 3} x_{3, 1} x_{3, 2} x_{3, 3} - x_{1, 2} x_{2, 1} x_{2, 2} x_{2, 3} x_{3, 1} x_{3, 2} x_{3, 3} - x_{1, 3} x_{2, 1} x_{2, 2} x_{2, 3} x_{3, 1} x_{3, 2} x_{3, 3} \\
		&+ x_{1, 1} x_{1, 2} x_{1, 3} x_{2, 1} x_{2, 2} x_{2, 3} x_{3, 1} x_{3, 2} x_{3, 3} 
	\end{split}
\end{equation*}

\pagebreak
\IfStrEq{\memomode}{OFF}{
	\newgeometry{right=2cm, left=2cm}
}{} 
\defsection{The Monomials of $BPM_4^\star$}
\begin{figure}[h]
	\centering
	\captionsetup{justification=centering,margin=1cm}
	\includegraphics[width=18cm, keepaspectratio]{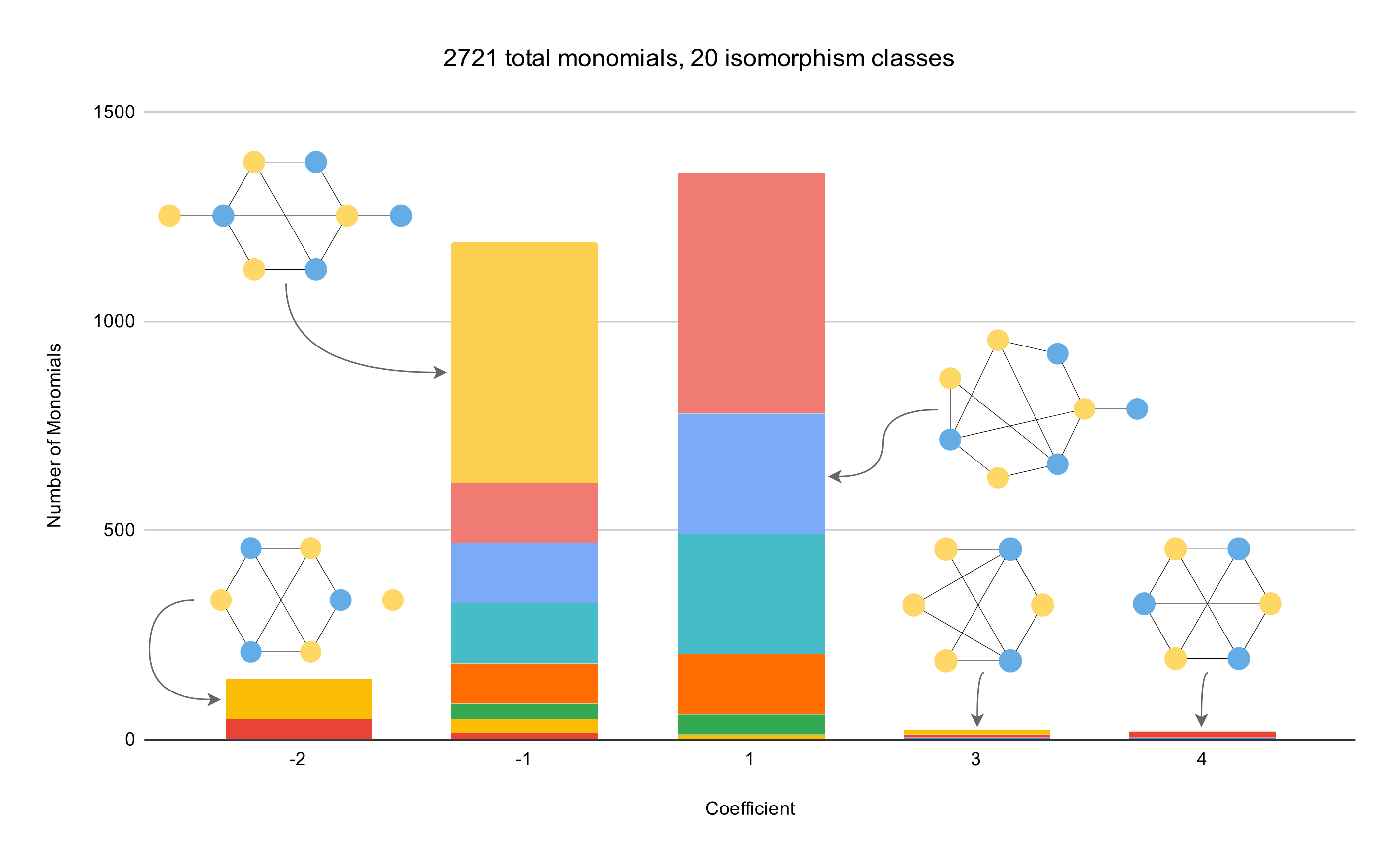}
	\caption{The monomials of $BPM_4^\star$, grouped by their coefficient. \newline For each coefficient, different colours indicate isomorphism classes.}
\end{figure}

\pagebreak
\newgeometry{left=0cm, top=1cm, bottom=2cm, right=0cm}
\begin{landscape}
\begin{figure}
	\parindent=10pt
	\defsection{The Matching-Covered Lattice for $n=3$}
	\includegraphics[width=25cm, keepaspectratio]{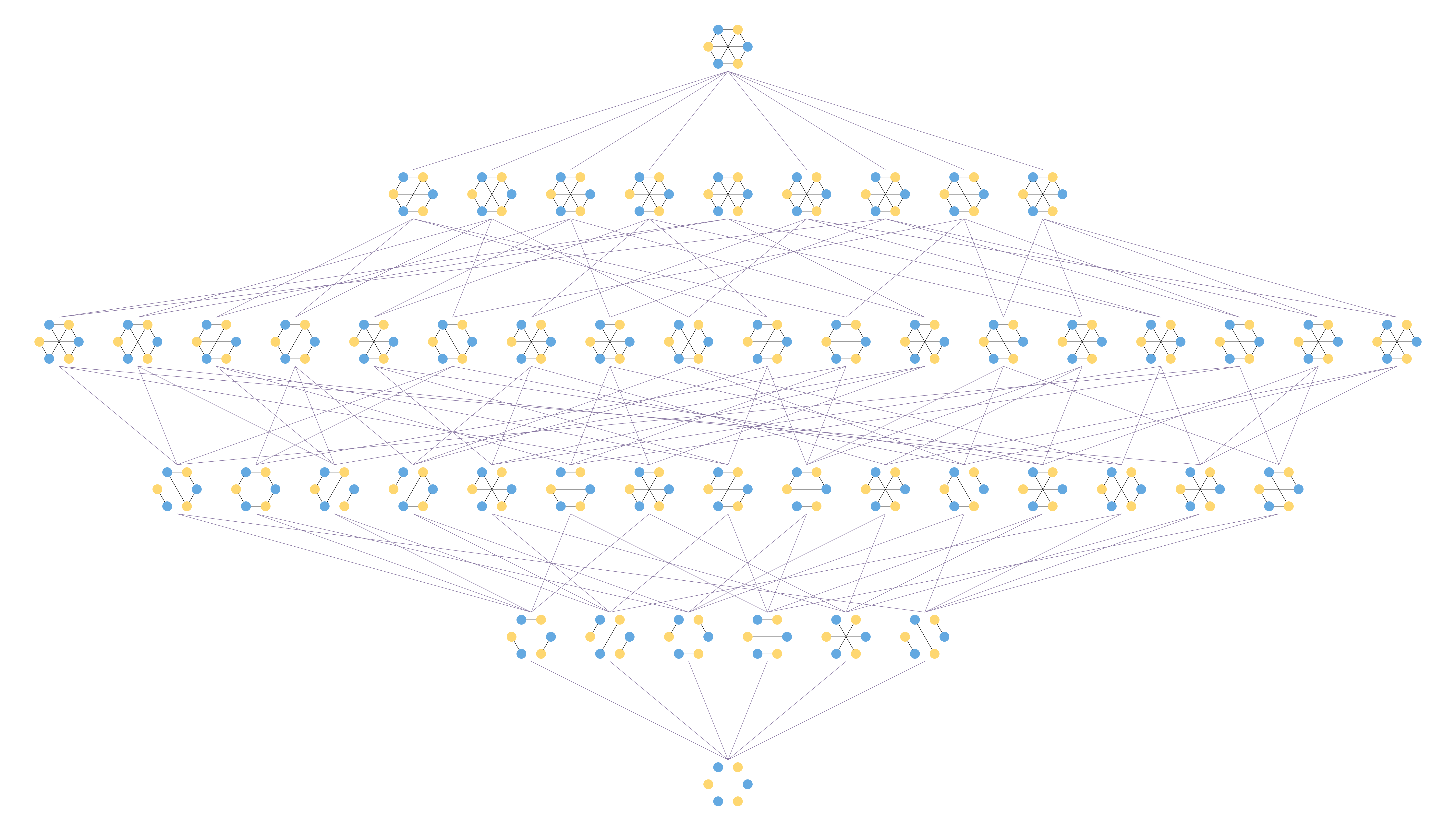}
	\caption{The Lattice $\mathcal{P}=(MC_3 \cupdot \{\hat{0}\}, \subseteq)$, which is isomorphic to the face lattice of the Birkhoff Polytope $B_3$}
	\label{fig:matching_covered_lattice_3}
\end{figure}
\end{landscape}

\end{document}